\newtheorem{theorem}{Theorem}
\newtheorem{lemma}[theorem]{Lemma}
\newtheorem{corollary}[theorem]{Corollary}
\newtheorem{remark}{Remark}
\newtheorem{claim}{Claim}
\newcommand*\samethanks[1][\value{footnote}]{\footnotemark[#1]}
\newcommand{\TAR}{{\sf TAR}}
\newcommand{\claimqed}{\hfill $\Diamond$ \vspace{1em}}
\newcommand{\STWML}{\textsc{Spanning Tree with Many Leaves}\xspace}
\newcommand{\STWkL}{\textsc{Spanning Tree with At Most $k$ Leaves}\xspace}
\newcommand{\STWThL}{\textsc{Spanning Tree with At Most $3$ Leaves}\xspace}
\newcommand{\source}{{\sf s}}
\newcommand{\target}{{\sf t}}
\newcommand{\canonical}{{\sf c}}
\renewcommand{\o}{\text{o}}
\newenvironment{proofclaim}[1][]%
    {\noindent \emph{Proof.} {}{#1}{}}{}
\title{Reconfiguration of Spanning Trees with Many or Few Leaves \thanks{Partially supported by JSPS and MEAE-MESRI under the Japan-France Integrated Action Program (SAKURA)}}
\author[1]{Nicolas Bousquet \thanks{Partially supported by ANR project GrR (ANR-18-CE40-0032).}}
\author[2]{Takehiro Ito \thanks{Partially supported by JSPS KAKENHI Grant Numbers JP18H04091 and JP19K11814, Japan.}}
\author[3]{Yusuke Kobayashi \thanks{Supported by JSPS KAKENHI Grant Numbers 17K19960, 18H05291, and JP20K11692, Japan.}}
\author[2]{Haruka Mizuta}
\author[4]{Paul Ouvrard \samethanks[2]}
\author[2]{Akira Suzuki \thanks{Partially supported by JSPS KAKENHI Grant Numbers JP18H04091 and JP20K11666, Japan.}}
\author[5]{Kunihiro Wasa \thanks{Partially supported by JST CREST Grant Numbers JPMJCR18K3 and JPMJCR1401, and JSPS KAKENHI Grant Number 19K20350, Japan.}}
\affil[1]{CNRS, LIRIS, Universit\'e de Lyon, Universit\'e Claude Bernard Lyon 1, Lyon, France}
\affil[2]{Graduate School of Information Sciences, Tohoku University, Japan}
\affil[3]{Research Institute for Mathematical Sciences, Kyoto University, Japan}
\affil[4]{Univ. Bordeaux, Bordeaux INP, CNRS, LaBRI, UMR5800, F-33400 Talence, France}
\affil[5]{Toyohashi University of Technology, Japan}
\date{}
\begin{document}
\maketitle

\begin{abstract}
Let $G$ be a graph and $T_1,T_2$ be two spanning trees of $G$. We say that $T_1$ can be transformed into $T_2$ via an edge flip if there exist two edges $e \in T_1$ and $f$ in $T_2$ such that $T_2= (T_1 \setminus e) \cup f$. Since spanning trees form a matroid, one can indeed transform a spanning tree into any other via a sequence of edge flips, as observed in~\cite{Ito11}.

We investigate the problem of determining, given two spanning trees $T_1,T_2$ with an additional property $\Pi$, if there exists an edge flip transformation from $T_1$ to $T_2$ keeping property $\Pi$ all along.

First we show that determining if there exists a transformation from $T_1$ to $T_2$ such that all the trees of the sequence have at most $k$ (for any fixed $k \ge 3$) leaves is PSPACE-complete.

We then prove that determining if there exists a transformation from $T_1$ to $T_2$ such that all the trees of the sequence have at least $k$ leaves (where $k$ is part of the input) is PSPACE-complete even restricted to split, bipartite or planar graphs. We complete this result by showing that the problem becomes polynomial for cographs, interval graphs and when $k=n-2$.
\end{abstract}

\section{Introduction}

Given an instance of some combinatorial search problem and two of its
feasible solutions, a \emph{reconfiguration problem} asks whether one solution can be transformed into the other in a step-by-step fashion, such that each intermediate solution is also feasible.
Reconfiguration problems capture dynamic situations, where some
solution is in place and we would like to move to a desired alternative
solution without becoming infeasible. A systematic study of the complexity of reconfiguration problems was initiated in~\cite{Ito11}. Recently the topic has gained a lot of attention in the context of constraint satisfaction problems and graph problems, such as the independent set problem, the matching problem, and the dominating set problem. Reconfiguration problems naturally arise for operational research problems but also are closely related to uniform sampling (using Markov chains) or enumeration of solutions of a problem. Reconfiguration problems received an important attention in the last few years. For an overview of recent results on reconfiguration problems, the reader is referred to the surveys of van den Heuvel~\cite{vHeuvel13} and Nishimura~\cite{Nishimura18}.

In this paper, our reference problem is the spanning tree problem. Let $G=(V,E)$ be a connected graph on $n$ vertices. A \emph{spanning tree} of $G$ is a tree (chordless graph) with exactly $n-1$ edges. Given a tree $T$, a vertex $v$ is a \emph{leaf} if its degree is one and is an \emph{internal node} otherwise. A \emph{branching node} is a vertex of degree at least three.

In order to define valid step-by-step transformations, an adjacency relation on the set of feasible solutions is needed. Depending on the problem, there may be different natural choices of adjacency relations. 
 Let $T_1$ and $T_2$ be two spanning trees of $G$. We say that $T_1$ and $T_2$ differs by an \emph{edge flip} if there exist $e_1 \in E(T_1)$ and $e_2 \in E(T_2)$ such that $T_2 = (T_1 \setminus e_1 ) \cup e_2$. Two trees $T_1$ and $T_2$ are adjacent if one can transform $T_1$ into $T_2$ via an edge flip. A \emph{transformation} from $T_\source$ to $T_\target$ is a sequence of trees $\langle T_0 :=T_\source,T_1,\ldots,T_r:=T_\target \rangle$ such that two consecutive trees are adjacent. Ito et al.~\cite{Ito11} remarked that any spanning tree can be transformed into any other via a sequence of edge flips.  It easily follows from the exchange properties for matroid. Unfortunately, the problem becomes much harder when we add some restriction on the intermediate spanning trees.  One can then ask the following question: does it still exist a transformation when we add some constraints on the spanning tree? If not, is it possible to decide efficiently if such a transformation exists? This problem was already studied for vertex modification between Steiner trees~\cite{DBLP:conf/mfcs/MizutaHIZ19} for instance.

In this paper, we consider spanning trees with restrictions on the number of leaves. More precisely, what happens if we ask the number of leaves to be large (or small) all along the transformation? We formally consider the following problems: \medskip

\noindent
\textsc{Spanning Tree with Many Leaves} \\
\textbf{Input:} A graph $G$, an integer $k$, two trees $T_1$ and $T_2$ with at least $k$ leaves. \\
\textbf{Output:} {\sf yes} if and only if there exists a transformation from $T_1$ to $T_2$ such that all the intermediate trees have at least $k$ leaves.
\smallskip

\noindent
\textsc{Spanning Tree with At Most $k$ Leaves} \\
\textbf{Input:} A graph $G$, two trees $T_1$ and $T_2$ with at most $k$ leaves. \\
\textbf{Output:} {\sf yes} if and only if there exists a transformation from $T_1$ to $T_2$ such that all the intermediate trees have at most $k$ leaves.
\smallskip

\paragraph*{Our results.}
We prove that both variants are PSPACE-complete. In other words, we show that \STWML and \STWkL for every $k \ge 3$ are PSPACE-complete. 
This contrasts with many existing results on reconfiguration problems using edge flips which are polynomial such as matching reconfiguration~\cite{Ito11}, cycle, tree or clique reconfiguration~\cite{HanakaIMMNSSV20}. As far as we know there does not exist any PSPACE-hardness proof for any problem via edge flip. We hope that our results will help to design more.

More formally, our results are the following:

\begin{theorem}\label{thm:at-least-hardness}
 \STWML is PSPACE-complete restricted to bipartite graphs, split graphs or planar graphs.
\end{theorem}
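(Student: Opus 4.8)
The plan is to establish the two directions separately. For membership in PSPACE, I would observe that a spanning tree of $G$ is described in polynomial space and that adjacency (a single edge flip) together with the ``at least $k$ leaves'' predicate is checkable in polynomial time. Hence a nondeterministic procedure can guess the transformation one tree at a time, reusing space between steps, and accept once it reaches $T_2$; since each intermediate certificate has polynomial size, this places \STWML in NPSPACE, and Savitch's theorem gives PSPACE.

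For hardness I would reduce from Nondeterministic Constraint Logic (NCL), which is PSPACE-complete and remains so on planar instances (Hearn and Demaine); this choice is what will eventually yield the planar case essentially for free. Recall that an NCL instance is a weighted graph with \textsc{and}- and \textsc{or}-vertices, that a configuration is an orientation in which every vertex receives incoming weight at least $2$, and that a move reverses a single edge while preserving this inflow constraint. I would design a spanning-tree gadget for each NCL edge whose local configurations come in two types, encoding the two orientations, while contributing the same number of leaves in all legal states; the inflow constraint at each vertex is then enforced by a junction gadget that loses a leaf precisely when the incident orientations form a forbidden pattern. By setting the global threshold $k$ (which is part of the input) so that the spanning trees with at least $k$ leaves are exactly those encoding valid NCL configurations, any edge flip that would violate a constraint is blocked by the leaf budget, so that legal flip-transformations correspond to legal NCL move-sequences.

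The delicate point throughout is the accounting of leaves under a local operation: one flip removes a single edge and inserts a single edge, changing each affected endpoint's degree by one, so it alters the total leaf count by only a bounded amount. I would exploit this by making $k$ tight enough that the reversal of one NCL edge must be simulated by a short, forced block of flips that cannot stray outside the intended gadget without dropping below $k$. Ruling out such ``shortcut'' transformations is the main obstacle, and I expect it to require a structural invariant asserting that every spanning tree with at least $k$ leaves decomposes canonically into per-gadget states, so that the reachability relation on trees projects faithfully onto the reachability relation on NCL configurations.

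Finally, to obtain all three graph classes I would adapt the gadgets rather than give three independent reductions. The NCL source already supplies planarity, provided the gadgets are themselves planar and are laid out respecting the embedding. For bipartiteness I would subdivide or redesign the connecting edges so that no odd cycle is introduced while the leaf bookkeeping is preserved. For split graphs I would concentrate all internal (non-leaf) vertices into a clique and attach the encoding leaves as an independent set, re-expressing the orientation and junction gadgets within this two-part structure. Checking that the tight-budget argument survives each adaptation is the remaining routine, though error-prone, verification.
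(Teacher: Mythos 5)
Your membership argument is fine, but the hardness part has a genuine gap: it is a plan for a proof rather than a proof. Everything that would constitute the technical content is deferred --- the edge-orientation and junction gadgets are never constructed, and the two claims on which the reduction stands (that every spanning tree with at least $k$ leaves ``decomposes canonically into per-gadget states'', and that every flip sequence projects faithfully onto NCL moves) are stated as things you \emph{expect} to need, not things you establish. For spanning trees this canonical-decomposition claim is particularly treacherous, because a spanning tree is a single globally connected object: the connecting paths between gadgets force vertices of degree at least two inside and between gadgets, so the leaf contribution of a gadget is not a local quantity, and arranging that all ``legal'' states of a gadget contribute exactly the same number of leaves while every illegal pattern loses one is precisely the kind of invariant that tends to fail under adversarial intermediate trees. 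Your own accounting observation cuts both ways: a single flip changes the leaf count by a bounded amount, but it can also be performed \emph{anywhere} in the graph, so excluding shortcut transformations that smear degree changes across several gadgets while staying above the budget is the whole difficulty, and nothing in the proposal addresses it. The adaptations for bipartite and split graphs are likewise handwaved; subdividing edges to kill odd cycles creates degree-two vertices that are forced internal in every spanning tree and directly perturbs the leaf bookkeeping you rely on.

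The paper avoids all of this by not simulating NCL at the gadget level. It exploits the identity (number of leaves) $= n - in(T)$ and reduces from reconfiguration problems on vertex subsets, so that the set of internal nodes of a spanning tree \emph{is} the encoded solution. For bipartite and split graphs (Theorem~\ref{thm:stwml-split-bipartite}) the source problem is {\sf TAR}-\textsc{Dominating Set Reconfiguration}: the graph $G'$ has a copy $A$ of $V(G)$, a doubled copy $B$, and an apex $x$ with a pendant vertex, and $in(T)\cap A$ is shown to always be a dominating set, with each edge flip changing it by at most one vertex (up to inserting one intermediate dominating set $D_i'$); the split case just completes $A$ into a clique. For planar graphs (Theorem~\ref{thm:stwml-planar}) the source is minimum vertex cover reconfiguration on planar graphs (PSPACE-complete via Hearn--Demaine, i.e.\ your NCL planarity instinct enters only through this black box), using edge-vertices and face-vertices so that spanning trees with the maximum number of leaves have all edge-vertices as leaves and their internal original vertices form a minimum vertex cover. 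The structural invariants you would need to invent from scratch are, in the paper's route, consequences of Lemma~\ref{lem:stwml_samecomp} (all trees with the same internal-node set are connected in the reconfiguration graph) plus an Euler-formula leaf count; this is why the paper's reductions close, while your NCL program leaves its hardest steps open.
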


These results are obtained from two different reductions. 
In both reductions, we need an arbitrarily large number of leaves in order to make the reduction work. In particular, one can ask the following question: is \textsc{Spanning Tree with at least $n-k$ Leaves} hard for some constant $k$ (where $n$ is the size of the instance)?

We did not solve this question but we prove that, for the ``dual'' problem, the PSPACE-hardness is obtained even for $k=3$.

\begin{theorem}
 \STWkL is PSPACE-complete for every $k \ge 3$. 
\end{theorem}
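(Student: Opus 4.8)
The plan is to treat membership and hardness separately, and to reduce every $k\ge 3$ to the base case $k=3$. Membership in PSPACE is routine: a spanning tree is an object of size $O(n)$, and given two trees one decides in polynomial time whether they differ by a single edge flip and whether both respect the leaf bound. Hence the reconfiguration graph is implicitly represented with polynomial-size vertices and polynomial-time-testable adjacency, so the existence of a transformation is a reachability question that can be answered in nondeterministic polynomial space; by Savitch's theorem it lies in PSPACE. This argument is uniform in $k$, so only hardness remains.

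For the lower bound I would first prove hardness for $k=3$ and then lift it to every $k>3$ by padding. Suppose the construction below yields a hard instance $(G,T_1,T_2)$ of \STWThL together with a distinguished vertex $a$ that is internal in every spanning tree reachable from $T_1$ (an interior ``spine'' vertex of degree two). Build $G'$ from $G$ by adding a new vertex $w$ adjacent only to $a$ and to $k-3$ fresh pendant vertices. Every edge incident to $w$ is a bridge of $G'$, hence lies in every spanning tree and is never flipped; the $k-3$ pendants are leaves of every spanning tree, while $w$ (of degree $k-2\ge 2$) and $a$ remain internal. Consequently a spanning tree of $G'$ has at most $k$ leaves if and only if its restriction to $G$ has at most three leaves, and flips in $G'$ correspond exactly to flips in $G$. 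This reduces \STWkL to \STWThL, so it suffices to handle $k=3$.

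For hardness at $k=3$, note that a spanning tree with at most three leaves is either a Hamiltonian path or a spider with a single branching node, so it carries at most one vertex of degree $\ge 3$. I would view such a tree as a Hamiltonian path routed through a family of gadgets, whose global \emph{shape} encodes an exponential configuration space, together with one optional ``extra'' branch acting as a mobile editing head. The reduction I propose is from Nondeterministic Constraint Logic (NCL), a PSPACE-complete reconfiguration problem whose moves are themselves reversible local edge reorientations. Each NCL edge orientation is encoded by the local routing of the path inside a gadget; simulating the reversal of an NCL edge is realised by a bounded, reversible sequence of edge flips that temporarily spawns the third leaf as work space, reroutes the path, and re-absorbs the branch. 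The in-flow constraints at NCL vertices are enforced by coupling adjacent gadgets so that an illegal local toggle would require a fourth leaf and is thus forbidden by the budget; the source and target NCL configurations are mapped to $T_1$ and $T_2$.

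The hard part will be the gadget design under the very tight budget of a single extra leaf. I must guarantee that (i) every spanning tree reachable within the budget decodes to a legal NCL configuration, so that no ``junk'' trees appear; (ii) each NCL move is simulated by a reversible flip sequence that never exceeds three leaves; and (iii) there are no unintended shortcut flips and no deadlocks that trap the mobile branch or leak information between gadgets. Verifying that the single available branching node suffices to edit every gadget in place, that is, to route it to a gadget, perform the local rewrite, and free it again while always staying at three leaves, is the crux of the argument, and it is precisely this constraint that pins the threshold at $k=3$ rather than a larger constant.
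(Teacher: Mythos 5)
Your membership argument is fine, and your instinct to reduce $k>3$ to $k=3$ by padding is the same as the paper's, but your padding gadget has a genuine flaw. You attach $w$ (with $k-3$ pendants) to a spine vertex $a$ of degree two and claim that a spanning tree of $G'$ has at most $k$ leaves if and only if its restriction to $G$ has at most three leaves. Count carefully: the bridge $aw$ forces $a$ to be internal in \emph{every} spanning tree of $G'$, so the number of leaves of $T'$ equals $(k-3)$ plus the number of leaves of the restriction $T$ \emph{that are not $a$}. Since $a$ has degree two in $G$, an intermediate tree may use only one of its two $G$-edges, making $a$ a leaf of $T$; then $T$ may have \emph{four} leaves while $T'$ still meets the budget $k$. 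The padded reconfiguration space is therefore strictly larger than the original one and may connect components that were separate, so the equivalence fails. Your hypothesis that $a$ is ``internal in every reachable tree'' cannot be imported from $G$: reachability must be assessed in $G'$, where exactly this relaxation occurs. The paper avoids the issue by attaching the extra pendants $s_3,\ldots,s_{\ell+1}$ to the vertex $s_1$, whose only other incident edge $s_1z_1$ is a bridge; all the new vertices are forced leaves of every spanning tree, so the forced-leaf accounting of Remark~\ref{rem:stwfl_nots1s2} carries over verbatim and the $k=3$ proof applies unchanged. Your construction is repairable in the same spirit (attach the pendants at a vertex made internal by bridges \emph{in $G$}, or to a forced leaf), but as written the step is incorrect.

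The larger gap is the base case. For $k=3$ you propose a reduction from NCL but explicitly defer the gadget design and the invariants (i)--(iii), which you yourself identify as ``the crux''; this is a programme, not a proof. The difficulty is real and is the reason the paper does not use NCL: in an NCL instance, legal moves occur essentially anywhere, whereas in a spanning tree with at most three leaves every non-trivial flip is localized at the two path endpoints or at the unique branching node, so a ``mobile editing head'' must be routed across the instance while you simultaneously exclude all junk trees within a budget of a single extra leaf --- precisely the obstruction the paper points out when discussing why \textsc{Hamiltonian Path Reconfiguration} resists NCL-style arguments. The paper instead reduces from \textsc{Minimum {\sf TAR}-Vertex Cover Reconfiguration} (PSPACE-complete by Wrochna's result) through the classical \textsc{Vertex Cover}-to-\textsc{Hamiltonian Path} construction $H(G)$: it defines a canonical vertex cover $S(T)$ for every tree $T$ with at most three leaves, proves $|S(T)|\le k+1$ (Lemma~\ref{lem:stwml_k+1}), shows each edge flip changes $S(T)$ by at most one vertex (Lemma~\ref{lem:stwml_lastlemma}), and establishes the converse via canonical paths and half-paths (Lemmas~\ref{lem:stwml_bip}--\ref{lem:stwml_vctoham}). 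Without an analogue of this machinery --- or fully worked NCL gadgets satisfying your conditions (i)--(iii) --- your hardness claim for $k=3$ is unsupported, and the theorem does not follow.
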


This proof is the most technically involved proof of this article and is based on a reduction from the decision problem of {\sc Vertex Cover} to the decision problem of {\sc Hamiltonian Path}. Let $(G=(V,E),k)$ be an instance of {\sc Vertex Cover}. We first show that, on the graph $H$ obtained when we apply this reduction, we can associate with any spanning tree $T$ of $H$ a vertex cover of $G$. The hard part of the proof consists of showing that {\em (i)} if $T$ has at most three leaves, then the vertex cover associated with $T$ has at most $k+1$ vertices; and {\em (ii)} each edge flip consists of a modification of at most one vertex of the associated vertex cover.

One can note that for $k=2$, the problem becomes the \textsc{Hamiltonian Path Reconfiguration} problem. We were not able to determine the complexity of this problem and we left it as an open problem. 

We complete these results by providing some polynomial time algorithms:

\begin{theorem}
\STWML can be decided in polynomial time on interval graphs, on cographs, or if the number of leaves is $n-2$.
\end{theorem}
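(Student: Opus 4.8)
The three cases share a common backbone: reduce reachability to routing every input tree to a fixed canonical tree that attains the maximum possible number of leaves. I would first record that a spanning tree with $\ell$ leaves has exactly $n-\ell$ internal vertices, and that these internal vertices induce a \emph{connected dominating set} of $G$; conversely, every connected dominating set is the internal-vertex set of some spanning tree. Hence the largest attainable number of leaves is $\ell_{\max}(G) = n - \gamma_c(G)$, where $\gamma_c$ is the connected domination number, which is computable in polynomial time on interval graphs and on cographs. On each such graph I would fix one canonical max-leaf spanning tree $T^{*}$.

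The central lemma I would prove is a \emph{monotone routing} statement: every spanning tree with at least $k$ leaves can be transformed into $T^{*}$ by flips, none of which drops the number of leaves below $k$, provided $k < \ell_{\max}(G)$. The engine is a peeling argument. The internal vertices of a tree induce a subtree; pick a vertex $u$ that is a leaf of this subtree, so $u$ has a single internal neighbour $p$ and all its other $T$-neighbours are leaves. Re-attach each leaf-child $w$ of $u$ to an already-internal vertex adjacent to $w$ in $G$; such a flip keeps $w$ a leaf and keeps the chosen vertex internal, so the leaf count never decreases, and once $u$ is emptied it becomes a leaf itself, strictly increasing the count. Iterating climbs monotonically to a max-leaf tree. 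The one unit of slack $k \le \ell_{\max}-1$ is then used for the genuinely class-dependent part, namely moving laterally between distinct max-leaf trees through configurations with $\ell_{\max}-1$ leaves. The point where graph structure is essential is guaranteeing that the leaf-children of $u$ can be re-homed on existing internal vertices: on cographs the top-level join supplies, for any vertex, an internal vertex adjacent to it, while on interval graphs I would exploit the consecutive-clique (clique-path) ordering to re-attach each leaf to a neighbouring internal interval. Once the lemma holds, both inputs route to $T^{*}$, so the answer is simply \YES in this slack regime.

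It then remains to treat the tight regime $k = \ell_{\max}(G)$, where no maneuvering room exists and the answer need not be \YES (for instance, when $k=n-1$ only stars are allowed, and two stars with distinct universal centres are never flip-adjacent). For interval graphs and cographs I would analyse the structurally very restricted family of max-leaf trees directly and decide flip-adjacency among them. The separate case $k=n-2$, asserted for arbitrary graphs, is the combinatorial heart of the statement: here every feasible tree is either a star (centred at a universal vertex) or a double star with an adjacent centre pair $\{u,v\}$. I would classify configurations by their centre type, of which there are only $O(|E|+n)$, show that within a fixed type the reachable leaf-assignments are governed by which leaves are adjacent to both centres, and characterize exactly when the unordered centre pair may be changed (necessarily passing through a star, which requires the relevant adjacencies). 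Reachability then becomes a path query in a polynomial-size auxiliary graph on these centre types.

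I expect the monotone routing lemma to be the main obstacle, specifically the guarantee that every peeling step can re-home the leaf-children \emph{without} creating a new internal vertex and without violating the $\ge k$ constraint. The delicate boundary is precisely $k=\ell_{\max}$ and $k=n-2$, where the slack exploited in the generic argument disappears and an exact, class-specific (or centre-type) characterization becomes unavoidable.
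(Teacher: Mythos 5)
Your plan hinges on the monotone routing lemma: that whenever $k<\ell_{\max}(G)$, \emph{every} tree with at least $k$ leaves can be peeled to the canonical max-leaf tree by re-homing the leaf-children of an internal vertex $u$ onto \emph{already-internal} neighbours. This lemma is false on interval graphs, and the failure is exactly the phenomenon the paper's machinery is built to handle. Concretely, take intervals $u_1=[0,10]$, $u_2=[9,20]$, $u_3=[19,30]$, $z=[5,25]$, together with private leaves $a_1=[1,2]$, $a_2=[3,4]$ (adjacent only to $u_1$), $b_1=[11,12]$, $b_2=[13,14]$ (adjacent only to $u_2$ and $z$) and $c_1=[21,22]$, $c_2=[23,24]$ (adjacent only to $u_3$ and $z$), so $n=10$. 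The tree $T$ with edges $u_1a_1,u_1a_2,u_1u_2,u_2b_1,u_2b_2,u_2z,u_2u_3,u_3c_1,u_3c_2$ has $7$ leaves, while $\ell_{\max}=8$ via the internal set $\{u_1,z\}$; so with $k=7$ you are in your ``slack regime.'' Yet $T$ is frozen: each $a_i$ has no other neighbour; the only other neighbour of each $b_i$ and $c_i$ is the \emph{leaf} $z$, so re-homing them creates a fourth internal node (note $u_2$ keeps degree at least $3$, so no internal vertex is simultaneously freed); and removing $u_1u_2$ or $u_2u_3$ can only be repaired by an edge at $z$, again creating a fourth internal node. The component of $T$ consists of the three trees obtained by moving the leaf $z$ among $u_1,u_2,u_3$, so $(G,k,T,T^*)$ is a \NO-instance even though $k<\ell_{\max}$, and your algorithm answers \YES. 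The distinction you miss is that slack in the \emph{threshold} gives no slack to a tree that already has exactly $n-k$ internal nodes: the paper's routing statement (Lemma~\ref{lem:int_onemargin}) needs the tree \emph{itself} to have strictly fewer than the budgeted internal nodes, and precisely because threshold-tight trees can be stuck in nontrivial frozen components, the paper develops the C-minimum notion, the auxiliary graphs $H_v$, and the recursive ``full access'' dynamic program of Theorem~\ref{thm:stwml-interval} rather than any always-\YES claim.

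Two further points. On cographs your re-homing guarantee (``the top-level join supplies, for any vertex, an internal vertex adjacent to it'') fails for a leaf-child lying on the \emph{same} side as all internal nodes of an $A$-tree: the join provides internal neighbours only across sides, which is why the paper instead recurses on $G[A]$; the always-\YES conclusion for internal budget at least $3$ is nonetheless true there, essentially because a connected cograph always has a connected dominating set of size two, so your ``tight regime'' on cographs only occurs at budgets $1$ and $2$ --- a fact your sketch does not use. Finally, for $k=n-2$ your claim that a change of centre pair ``necessarily passes through a star'' contradicts the paper's analysis (Theorem~\ref{thm:stwml_2int}): centre pairs change while remaining double stars, through a \emph{pivot} vertex of graph degree at least $n-2$ whose tree degree reaches $n-2$ at the transition, and the decision reduces to the existence of suitable pivots with dominating pairs (Claims~\ref{claim:frozen} and~\ref{claim:pivot}); a characterization routed through stars would require universal vertices and would wrongly declare many \YES-instances negative.
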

 
 We show that \STWML can be decided in polynomial time if the number of leaves is $n-2$. As we already said, we left as an open question to determine if this result can be extended to any value $n-k$ for some fixed $k$. If such an algorithm exists, is it true that the problem is FPT parameterized by $k$?
 
 We then show that in the case of cographs, the answer is always positive as long as the number of leaves is at most $n-3$. Since there is a polynomial time algorithm to decide the problem when $k=2$ that completes the picture for cographs.

 Since the problem is known to be PSPACE-complete for split graphs by Theorem~\ref{thm:at-least-hardness} (and thus for chordal graphs), the interval graphs result is the best we can hope for in a sense.
 The interval graph result is based on a dynamic programming algorithm inspired by~\cite{BonamyB17} where it is proved that the {\sc Independent Set Reconfiguration} problem in the token sliding model is polynomial. Even if dynamic algorithms work quite well to decide combinatorial problems on interval (and even chordal) graphs, they are much harder to use in the reconfiguration setting. In particular, many reconfiguration problems become hard on chordal graphs (see e.g.~\cite{Belmonte0LMOS19,HatanakaIZ17}) since the transformations can go back and forth. 
 
 Since the problem is hard on planar graph, it would be interesting to determine its complexity on outerplanar graphs. We left this question as an open problem.

\paragraph*{Related work.}
 In the last few years, many graph reconfiguration problems have been studied through the lens of edge flips such as matchings~\cite{Ito11,BousquetHIM19}, paths or cycles~\cite{HanakaIMMNSSV20}. None of these works provide any PSPACE-hardness results, only a NP-hardness result is obtained for path reconfiguration via edge flips in~\cite{HanakaIMMNSSV20}. Even if the reachability problem is known to be polynomial in many cases, approximating the shortest transformation is often hard, see e.g.~\cite{BousquetHIM19}. Edges flips are also often considered in computational geometry, for instance to measure the distance between two triangulations. In that setting, a flip of a triangulation is the modification of a diagonal of a $C_4$ for the other one. Usually, proving the existence of a transformation is straightforward and the main questions are about the length of a transformation which is not the problem addressed in this paper.
 
 If, instead of ``edge flips'', we consider ``vertex flips'' the problems become much harder. For instance, the problem of transforming an (induced) tree into another one (of the same size) is PSPACE-complete~\cite{HanakaIMMNSSV20} (while the exchange property ensures that it is polynomial for the edge version). Mizuta et al.~\cite{DBLP:conf/mfcs/MizutaHIZ19} also showed that the existence of vertex exchanges between two Steiner trees is PSPACE-complete. But transforming subsets of vertices with some properties is known to PSPACE-complete for a long time, for instance for independent sets or cliques~\cite{Hearn:2005:PSP:1140710.1140715}.
 
 Another option would be to consider more general operations on edges. In particular, one can imagine a flip around a $C_4$ (i.e. two edges $ab$ and $cd$ are replaced by $ad$ and $bc$). This operation seems to be harder than the single edge flip since, for instance, matching reconfiguration becomes PSPACE-complete~\cite{BonamyBHIKMMW19}.
 
\paragraph*{Definitions.}
Given two sets $S_1$ and $S_2$, we denote by $S_1 \, \triangle \, S_2$ the \emph{symmetric difference} of the sets $S_1$ and $S_2$, that is $(S_1 \setminus S_2) \cup (S_2 \setminus S_1)$.

For a spanning tree $T$, every vertex of degree one is a \emph{leaf} and every vertex of degree at least two is an \emph{internal node}. A vertex of degree at least three is called a \emph{branching node}. Recall that the number of leaves of any tree $T$ is equal to $(\sum_{v \in T} (\max\{0,d_T(v)-2\}))+2$.
We denote by $in(T)$ the number of internal nodes of $T$. Note that if $T$ contains $n$ nodes, the number of leaves is indeed $n-in(T)$.

Let $G=(V,E)$ be a graph. A {\em vertex cover} $C$ of $G$ is a subset of vertices such that for every edge $e \in E$, $C$ contains at least one endpoint of $e$. $C$ is {\em minimum} if its cardinality is minimum among all vertex covers of $G$. Note that in particular, $C$ is inclusion-wise minimal and thus for every vertex $u \in C$, there is an edge $e \in E$ which is covered only by $u$. We denote by $\tau(G)$ the size of a minimum vertex cover of $G$.

Let $X,Y$ be two vertex covers of $G$. $X$ and $Y$ are \emph{{\sf TAR}-adjacent}\footnote{{\sf TAR} stands for ``Token Additional Removal''.} (resp. {\sf TJ}-adjacent) if there exists a vertex $x$ (resp. $x$ and $y$) such that $X= Y \cup \{ x \}$ or $Y= X \cup \{ x \}$ (resp. $X=Y \setminus \{ y \} \cup \{ x \}$).
We will consider the following problem: \medskip

\noindent
\textsc{Minimum {\sf TAR}-Vertex Cover Reconfiguration} \\
\textbf{Input:} A graph $G$, two minimum vertex covers $X,Y$ of size $k$. \\
\textbf{Output:} {\sf yes} if and only if there exists a sequence from $X$ to $Y$ of {\sf TAR}-adjacent vertex covers, all of size at most $k+1$.
\medskip

Similarly, one can define the {\sc Minimum {\sf TJ}-Vertex Cover Reconfiguration} ({\sc MVCR} for short) where we want to determine whether there exists a sequence of {\sf TJ}-adjacent vertex covers from $X$ to $Y$. Note that all the vertex covers must be of size $|X| = |Y| = k$.

\section{Spanning trees with few leaves}

\begin{theorem}\label{thm:stwfl-hardness2}
For every $k \ge 3$, \STWkL is PSPACE-complete. 
\end{theorem}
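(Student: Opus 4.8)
PSPACE membership is routine: a spanning tree is a polynomial-size certificate and edge-flip adjacency is checkable in polynomial time, so \STWkL lies in NPSPACE and hence in PSPACE by Savitch's theorem. For hardness, the plan is to reduce from \textsc{Minimum \TAR-Vertex Cover Reconfiguration}, which is PSPACE-complete as the complement of the Independent Set reconfiguration problem. Given an instance $(G=(V,E),X,Y,k)$ with $\tau(G)=k$ and \TAR-threshold $k+1$, I would build a graph $H$ via the classical reduction from \textsc{Vertex Cover} to \textsc{Hamiltonian Path}: an edge gadget for each $e=\{u,v\}\in E$, traversable by a path in exactly the three modes ``$u$ only'', ``$v$ only'' and ``both''; a vertex chain threading the gadgets incident to each $v\in V$; and a selector structure of $k$ ports choosing which chains are \emph{activated}. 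I would first handle the case $k=3$ and reduce to \STWThL; for general $k\ge 3$, attaching $k-3$ pendant vertices to a fixed vertex of $H$ creates $k-3$ forced leaves (bridges lying in every spanning tree and never flippable), so ``at most $k$ leaves in the padded graph'' is equivalent to ``at most $3$ leaves in $H$''.

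To every spanning tree $T$ of $H$ with at most three leaves I would associate the set $C(T)\subseteq V$ of activated vertices, i.e. those whose chain is connected to a selector port in cover mode. The central structural claim is that $C(T)$ is always a vertex cover of $G$: since $T$ spans every gadget, each edge gadget must be fully traversed, and within the global leaf budget this forces at least one incident chain to be active, i.e. at least one endpoint of $e$ lies in $C(T)$. Hamiltonian-path spanning trees (exactly two leaves) then correspond to minimum vertex covers of size $k$, while the single degree-$3$ branching node permitted by a three-leaf tree corresponds to the one extra token allowed by the \TAR-bound.

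The two technical cores, exactly as flagged, are: \emph{(i)} if $T$ has at most three leaves then $|C(T)|\le k+1$; and \emph{(ii)} a single edge flip changes $C(T)$ by at most one vertex. For \emph{(i)} I would show that the gadget geometry forces all but at most one chain to be traversed monolithically, so the number of activated chains cannot exceed the number of selector ports plus the single degree of freedom granted by the unique branching node. For \emph{(ii)} I would perform a case analysis on the location of the flipped pair $(e_1,e_2)$: flips internal to an edge gadget or to a chain merely re-route traversal and leave $C(T)$ unchanged, whereas flips incident to a selector port toggle exactly one chain's activation, adding or removing a single vertex from $C(T)$ and thereby realising one \TAR-step.

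For completeness I would conversely show that each \TAR-step $X\to X'$ lifts to a short sequence of edge flips whose trees all keep at most three leaves: removing a vertex from the cover is realised by re-routing, gadget by gadget, the edges that only its chain covered, passing through three-leaf intermediates, and adding a vertex is the reverse. The main obstacle I anticipate is precisely this lifting: a single token move may force re-routing through several edge gadgets, and one must verify that every intermediate spanning tree stays connected, acyclic, and within the three-leaf budget. Controlling the branching during these local re-routings, together with establishing \emph{(i)} so that no spanning tree with at most three leaves can activate more than $k+1$ chains, is where the bulk of the work lies; the equivalence of the two reachability relations then follows by composing \emph{(i)}, \emph{(ii)} and the lifting.
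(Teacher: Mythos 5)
Your proposal follows essentially the same route as the paper: a reduction from \textsc{Minimum \TAR-Vertex Cover Reconfiguration} through the classical \textsc{Vertex Cover}-to-\textsc{Hamiltonian Path} construction, a canonical vertex cover associated with every spanning tree having at most three leaves, the two key lemmas that this cover has size at most $k+1$ and that an edge flip changes it by at most one vertex, a gadget-by-gadget lifting of each token move back to edge flips, and padding with forced pendant leaves (the paper attaches them to $s_1$, away from the gadgets, which is the careful choice your ``fixed vertex'' needs) for $k>3$. The only sub-step you leave implicit that the paper treats separately is that all Hamiltonian paths associated with the same cover lie in one component of the reconfiguration graph (handled there via an auxiliary bipartite-path lemma), but this falls squarely within the lifting work you already flag.
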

In order to prove Theorem~\ref{thm:stwfl-hardness2}, we will first prove it for $k=3$ in Section~\ref{sec:stwfl_3leaves} and explain how we can modify this proof in order to get the hardness for the general case in Section~\ref{sec:stwfl_moreleaves}.

\begin{theorem}\label{thm:stwfl-hardness}
 \STWThL is PSPACE-complete. 
\end{theorem}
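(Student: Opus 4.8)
The plan is to prove PSPACE-completeness of \STWThL (the case $k=3$) by a reduction from \textsc{Minimum {\sf TAR}-Vertex Cover Reconfiguration}, which is known to be PSPACE-complete. Membership in PSPACE is routine: a transformation between spanning trees can be guessed edge flip by edge flip, each intermediate tree verified to have at most three leaves using only polynomial space, so the difficulty lies entirely in establishing hardness.

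Given an instance $(G=(V,E),X,Y)$ of {\sf TAR}-vertex cover reconfiguration with minimum vertex covers of size $k$, I would construct a graph $H$ in which spanning trees with at most three leaves correspond to vertex covers of $G$ of size at most $k+1$. The key design idea, hinted at in the introduction, is to engineer $H$ so that a spanning tree with few leaves must be ``mostly a path'': a tree on $n$ vertices has at most three leaves precisely when the sum $\sum_v \max\{0,d_T(v)-2\}$ is at most one, so at most one vertex may have degree $3$ and all others have degree $\le 2$. This severe structural constraint is what forces the tree to behave like a Hamiltonian-path-like object, which is exactly where the Vertex Cover to Hamiltonian Path connection enters. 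I would attach to each vertex $v \in V$ a small gadget whose internal traversal by the spanning tree is ``cheap'' (uses no extra leaves) only when $v$ is selected into the associated cover, and arrange edge gadgets so that every edge of $G$ must be covered for the tree to span $H$ while staying within the leaf budget.

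The heart of the argument is the two-way correspondence flagged in the excerpt. First I would define, for any spanning tree $T$ of $H$, an associated vertex cover $C(T)$ of $G$, and show \emph{(i)} that if $T$ has at most three leaves then $|C(T)| \le k+1$, and \emph{(ii)} that a single edge flip in $H$ changes $C(T)$ by at most one vertex (a token addition or removal), so that a valid transformation of spanning trees projects to a valid {\sf TAR}-reconfiguration sequence. Conversely, I would show each {\sf TAR}-step of vertex covers can be simulated by a short sequence of edge flips keeping the leaf count at most three, and that the endpoint covers $X,Y$ lift to the given spanning trees. Together these give that $H$ admits the required transformation if and only if $(G,X,Y)$ is a \YES-instance.

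The main obstacle will be step \emph{(i)}: proving that the leaf bound genuinely \emph{forces} a small cover, i.e. ruling out ``cheating'' spanning trees that achieve at most three leaves while routing through the gadgets in an unintended way and thereby encoding a large or invalid cover. This requires a careful gadget analysis showing that every degree-$3$ vertex must sit at a prescribed location and that the tree's path-like structure leaves no slack to avoid covering some edge or to include superfluous vertices. I expect this case analysis over the possible shapes of a near-path spanning tree through the gadgets to be the technically delicate part; the converse simulation and the single-token-change property in \emph{(ii)}, while requiring care, should follow more mechanically once the gadgets are fixed.
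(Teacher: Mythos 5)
Your plan is exactly the paper's approach: a reduction from \textsc{Minimum {\sf TAR}-Vertex Cover Reconfiguration} built on the classical \textsc{Vertex Cover}-to-\textsc{Hamiltonian Path} reduction, with a canonical cover $S(T)$ defined from any spanning tree with at most three leaves, the two key lemmas being $|S(T)| \le k+1$ and that a single edge flip changes $S(T)$ by at most one vertex, plus the converse simulation of each cover step by edge flips---and you correctly identify the $|S(T)| \le k+1$ case analysis as the technically hardest part, matching the paper. The only detail worth flagging is that the paper's gadgets sit on the \emph{edges} of $G$ (with a connector set $Z$ of $k+1$ vertices threading the selected vertices' subpaths), and the lifting of covers to trees additionally requires showing that all Hamiltonian paths associated with the same cover lie in one component of the reconfiguration graph, a step your sketch glosses over but which the paper handles with a separate bipartite-path lemma.
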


Recall that proving an hardness result for $n-2$ internal nodes and two leaves would imply that the problem \textsc{Hamiltonian Path Reconfiguration} problem is hard, a problem left open in this paper. Even if the optimization version of the \textsc{Hamiltonian Path} problem is very hard, its reconfiguration counterpart seems ``easier'' since at each step, the modification must be around one of the two endpoints of the path. Indeed, most of the PSPACE-hardness proofs in reconfiguration follows from NCL logic (the ``classical'' problem to reduce from in reconfiguration). But in an instance of NCL logic, modifications can appear almost everywhere in the instance (under some local conditions) while, in \textsc{Hamiltonian Path Reconfiguration}, the modification has to be ``localized'' on the endpoints of the paths.

In order to prove Theorem~\ref{thm:stwfl-hardness}, we will provide a reduction from \textsc{Minimum {\sf TAR}-Vertex Cover Reconfiguration} to \STWThL. 

\begin{theorem}[Wrochna~\cite{Wrochna18}]\label{thm:stwfl_TARPSPACE}
\textsc{\TAR-Vertex Cover Reconfiguration} is PSPACE complete even for bounded bandwidth graphs.
\end{theorem}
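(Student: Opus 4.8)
This result is quoted from Wrochna~\cite{Wrochna18}, so what follows is a reconstruction of how one establishes it rather than a new argument. Membership in PSPACE is the easy direction. A configuration is a vertex cover of size at most $k+1$, which has an $O(n\log n)$-bit encoding, and \TAR-adjacency (adding or deleting a single vertex while remaining a vertex cover) is checkable in polynomial time. Since the reconfiguration graph has at most $2^{n}$ vertices, I would decide reachability nondeterministically: repeatedly guess the next vertex cover, verify that it is \TAR-adjacent to the current one, and maintain a step counter of $O(n)$ bits that aborts after $2^{n}$ steps. This runs in nondeterministic polynomial space, and Savitch's theorem collapses it to PSPACE.

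For hardness I would first strip away the vertex-cover and \TAR layers, reducing the question to a cleaner reconfiguration problem on the \emph{same} graph, so that the bounded-bandwidth promise is preserved exactly. The point is that $C$ is a vertex cover of $G$ if and only if $V\setminus C$ is an independent set of $G$; this complementation of vertex sets is a bijection on the same graph $G$, it sends a vertex cover of size at most $k+1$ to an independent set of size at least $n-k-1$, and it turns each single-vertex addition into a single-vertex deletion and vice versa. Hence \TAR-Vertex Cover Reconfiguration on $G$ is identical to token addition/removal reconfiguration of independent sets on $G$. A second, equally graph-preserving, observation bridges the addition/removal model and the token-jumping (\textsf{TJ}) model: when the allowed sizes form the two-element window $\{s-1,s\}$, every move from a maximum-size set must be a removal and every move from a smaller set must be an addition, so moves alternate, and pairing each removal with the following addition exhibits the whole transformation as a sequence of \textsf{TJ} moves, and conversely. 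Thus it suffices to prove that independent-set reconfiguration under token jumping is PSPACE-complete on graphs of bounded bandwidth.

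The heart of the matter, and the genuinely hard step, is this last claim about bounded bandwidth. Most PSPACE-hardness proofs for reconfiguration go through Nondeterministic Constraint Logic, but the graphs they produce have unbounded bandwidth, because faithfully simulating a PSPACE computation appears to demand long-range communication across the instance. The plan is instead to encode a \emph{space-bounded} computation whose interactions are already local -- a linear array of cells, each holding a bounded amount of state, with transition rules coupling only neighbouring cells -- and to build one gadget per cell, laying the gadgets out consecutively along a line. Then every cut between two consecutive positions in this order is crossed by only $O(1)$ edges, which is exactly a bound on the bandwidth. The difficulty is to make such a local encoding simulate an arbitrary instance: information that must travel from one end of the line to the other has to propagate through many small, local token moves, so that at every single step only a bounded neighbourhood is active. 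Verifying that this ``travelling wave'' of moves both respects the independent-set constraint at each step and realises precisely the intended reachability relation is the technical core; once it is in place, the two graph-preserving reductions above carry the PSPACE-hardness back to \TAR-Vertex Cover Reconfiguration on bounded-bandwidth graphs.
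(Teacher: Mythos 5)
Your proposal is correct and takes essentially the same route as the paper: the paper also obtains the statement from Wrochna's PSPACE-completeness of maximum independent set reachability in the token jumping model on bounded-bandwidth graphs, via complementation (a set is a vertex cover if and only if its complement is an independent set, on the same graph) together with Kami\'nski et al.'s observation that the {\sf TAR} model with threshold one above the optimum is equivalent to the {\sf TJ} model. Your explicit PSPACE-membership argument and high-level sketch of Wrochna's bounded-bandwidth construction are consistent additions, but the core translation is identical to the paper's.
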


Actually the result of Wrochna is for \textsc{Maximum Independent Set Reachability} in the Token Jumping model. However, recall that the complement of an independent set is a vertex cover. Besides, Kami\'nski et al.~\cite{KAMINSKI20129} observed that the {\sf TJ} model and {\sf TAR} model are equivalent when the threshold is the minimum value of a vertex cover plus one. Hence, the result of~\cite{Wrochna18} is equivalent to the statement of Theorem~\ref{thm:stwfl_TARPSPACE}.

The idea of the proof of Theorem~\ref{thm:stwfl-hardness} is to adapt a reduction from \textsc{Vertex cover} to \textsc{Hamiltonian Path} (for the optimization version). Let $(G=(V,E),k)$  be an instance of \textsc{Vertex Cover}. This reduction creates a graph $H(G)$ which contains a Hamiltonian path if and only if $G$ admits a vertex cover of size $k$. In particular, we will show that there is a ``canonical way'' to define a vertex cover from any Hamiltonian path. The reduction is provided in Section~\ref{sec:stwfl_optversion} together with some properties of the spanning trees with at most three leaves in $H(G)$. In order to adapt the proof in the reconfiguration setting, we need to prove that the proof is ``robust'' with respect to several meanings of the word. First, we need to show that, if we consider a spanning tree with at most three leaves in $H(G)$ then there is a ``canonical'' vertex cover of size at most $k+1$ associated with it. Proving that this vertex cover always has size at most $k-1$ is the first technical part of the proof.  
Then, for any edge flip between two spanning trees with at most three leaves, there is a corresponding ``transformation'' between the canonical vertex covers associated with them. We need to prove that for any two adjacent spanning trees in $H(G)$, their canonical vertex covers are either the same or are incident in the {\sf TAR} model (in $G$). 

Finally, we need to prove that it is possible to transform a Hamiltonian path $P_1$ (associated with a vertex cover $X$) into a Hamiltonian path $P_2$ (associated with a vertex cover $Y$) via spanning trees with at most three leaves if and only if $X$ can be transformed into $Y$ in the {\sf TAR} model. 

\subsection{The Reduction}\label{sec:stwfl_optversion}

The reduction is a classical reduction (see Theorem 3.4 of~\cite{GJ79} for a reference) from the optimization version of \textsc{Vertex Cover} to the optimization version of \textsc{Hamiltonian Path}.
Let $G$ be a graph and $k$ be an integer. We provide a reduction from \textsc{Vertex Cover} of size at most $k$ to \textsc{Hamiltonian path}. Let us construct a graph $H(G)$ (abbreviated into $H$ when no confusion is possible) as follows:

\begin{figure}[bt]
    \centering
    \includegraphics[scale=.75]{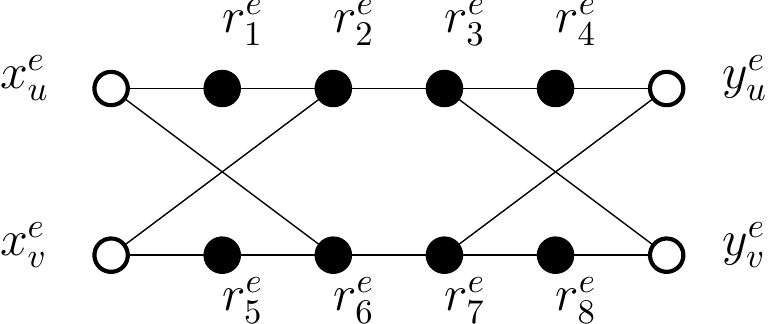}
    \caption{Edge-gadget corresponding to the edge $e=uv$. The white vertices are the only ones connected to the outside.}
    \label{fig:stwl_reduction1}
\end{figure}

\paragraph*{Construction of $H(G)$.}
For each edge $e=uv$ of $G$, we create the following \emph{edge-gadget} $\mathcal{G}_e$ represented in Figure~\ref{fig:stwl_reduction1}. The edge-gadget $\mathcal{G}_e$ has four \emph{special vertices} denoted by $x_u^e,x_v^e,y_u^e,y_v^e$. The vertices $x_u^e$ and $x_v^e$ are called the \emph{entering vertices} and $y_u^e$ and $y_v^e$ the \emph{exit vertices}. The gadget contains $8$ additional vertices denoted by $r_1^e,\ldots,r_8^e$. When $e$ is clear from context, we will omit the superscript. The graph induced by these twelve vertices is represented in Figure~\ref{fig:stwl_reduction1}. The vertices $r_1^e,\ldots,r_8^e$ are \emph{local vertices} and their neighborhood will be included in the gadget. The only vertices connected to the rest of the graphs are the special vertices.

We add an independent set $Z:=\{ z_1,\ldots,z_{k+1} \}$ of $k+1$ new vertices to $V(H)$. And we finally add to $V(H)$ two more vertices $s_1,s_2$ in such a way that $z_1$ (resp. $z_{k+1}$) is the only neighbor of $s_1$ (resp. $s_2$) in $H(G)$. 

Since $s_1$ and $s_2$ have degree one in $H(G)$, $s_1$ and $s_2$ are leaves in any spanning tree of $H(G)$. In particular, the two endpoints of any Hamiltonian path of $H(G)$ are necessarily $s_1$ and $s_2$.  

Let us now complete the description of $H(G)$ by explaining how the special vertices are connected to the other vertices of $H(G)$. Let $u \in V(G)$. Let $E'=e_1,\ldots,e_\ell$ be the set of edges incident to $u$ in an arbitrary order. We connect $x_u^{e_1}$ and $y_u^{e_\ell}$ to all the vertices of $Z$. For every $1\le i \le \ell-1$, we connect $y_u^{e_i}$ to $x_u^{e_{i+1}}$. The edges $y_u^{e_i}x_u^{e_{i+1}}$ are called the \emph{special edges of $u$}. The \emph{special edges} of $H(G)$ are the union of the special edges for every $u \in V(G)$ plus the edges incident to $Z$ but $s_1z_1$ and $s_2z_{k+1}$. 

This completes the construction of $H(G)$ (see Figure~\ref{fig:reduction-hamiltonian-VC} for an example).

\begin{figure}[bt]
    \centering
    \includegraphics[width=0.98\textwidth]{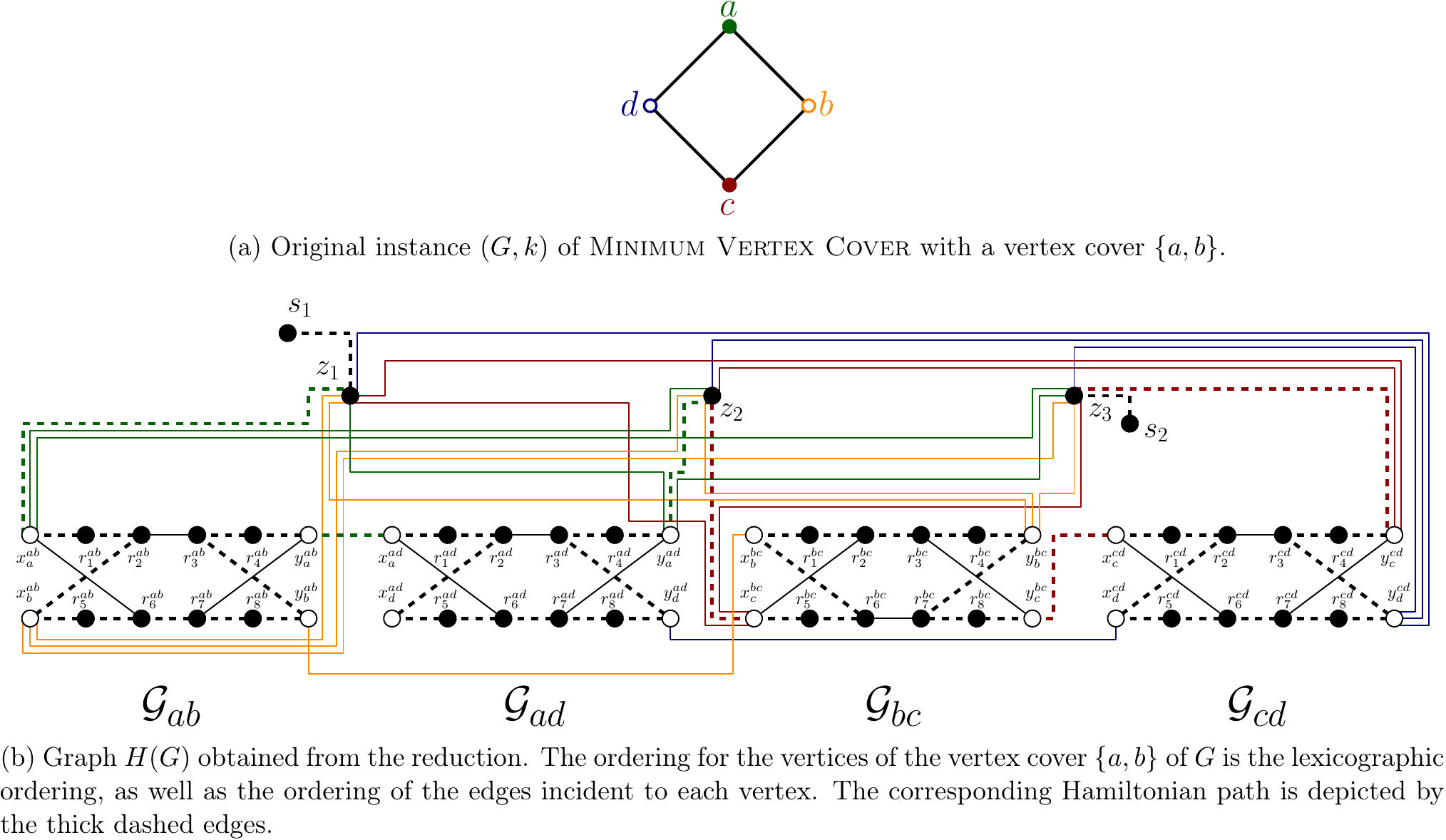}
    \caption{Illustration of the reduction of Theorem~\ref{thm:stwfl-hardness}.}
    \label{fig:reduction-hamiltonian-VC}
\end{figure}

\subsection{Basic properties of \texorpdfstring{$H(G)$}{H(G)}} % \texorpdfstring avoids the warning ``token not allowed''
\begin{remark}\label{rem:stwfl_nots1s2}
If $T$ is a spanning tree of $H(G)$ with at most $\ell$ leaves, then at most $\ell-2$ of them are in $V(H) \setminus \{ s_1,s_2 \}$.
\end{remark}

\paragraph*{Definitions and notations.} 
For a spanning tree $T$, we say that an edge-gadget \emph{contains a leaf} if one of the twelve vertices of the edge-gadget is a leaf of $T$. If the spanning tree is a Hamiltonian path, Remark~\ref{rem:stwfl_nots1s2} ensures that no edge-gadget contains a leaf. Besides, at most one edge-gadget contains a leaf if $T$ is a spanning tree with at most three leaves.
An edge-gadget \emph{contains a branching node} of $T$ if one of the twelve vertices of the gadget is a vertex of degree at least three. Any spanning tree with at most three leaves indeed contains at most one branching node. 

Let $T$ be a spanning tree of $H(G)$. An edge-gadget is \emph{irregular} if at least one of its twelve vertices is not of degree two in $T$, i.e. if it contains a branching node or a leaf. An edge-gadget is \emph{regular} if it is not irregular. By abuse of notation we say that $e \in E(G)$ is regular (resp. irregular) if the edge-gadget of $e$ is regular (resp. irregular). A vertex $u$ is \emph{regular} if every edge incident to $u$ is regular. The vertex $u$ is \emph{irregular} otherwise.

Let $S$ be a subset of vertices of $H(G)$. We denote by $\delta_T(S)$ the set of edges with exactly one endpoint in $S$. When there is no ambiguity, we omit the subscript $T$. Moreover, if $S$ is the singleton $\{u\}$, we write $\delta_T(u)$ for $\delta_T(\{u\})$.
Given an edge $e$ of $G$ and a spanning tree $T$ of of $H(G)$, $\delta_T(e)$ denotes the set of edges of $T$ with exactly one endpoint in the edge-gadget $\mathcal{G}_e$ of $e$. 
The restriction $T(\mathcal{G}_e)$ of a spanning tree $T$ \emph{around an edge-gadget} $\mathcal{G}_e$ is the set of edges with both endpoints in $\mathcal{G}_e$ plus the edges of $\delta_T(\mathcal{G}_e)$ (which are considered as ``semi edge'' with one endpoint in $\mathcal{G}_e$).

\begin{lemma}\label{lem:stwml_intersectionedgegadget}
Let $T$ be a spanning tree of $H$ and $\mathcal{G}$ be a regular edge-gadget. Then the tree $T$ around the edge-gadget $\mathcal{G}$ is one of the two graphs represented in Figure~\ref{fig:stwl_reduction2}. Note that the graph of Figure~\ref{fig:stwl_reduction2}(b) has to be considered up to symmetry between $u$ and $v$.
\end{lemma}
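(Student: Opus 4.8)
The plan is to exploit that regularity forces every one of the twelve vertices of $\mathcal{G}$ to have degree exactly two in $T$. First I would record two immediate consequences. Since the eight local vertices $r_1,\dots,r_8$ have all their neighbours inside $\mathcal{G}$, both of their incident $T$-edges are internal edges of the gadget; hence only the four special vertices $x_u,x_v,y_u,y_v$ may be incident to an edge of $\delta_T(\mathcal{G})$. Viewing each semi-edge of $\delta_T(\mathcal{G})$ as a dangling half-edge at its special endpoint, the restriction $T(\mathcal{G})$ is then a graph in which every gadget vertex has degree two. A graph on twelve vertices all of degree two is a disjoint union of cycles and of paths terminating at half-edges; since $T$ is acyclic, no internal cycle can occur, so $T(\mathcal{G})$ is a disjoint union of paths whose endpoints are exactly the special vertices carrying a semi-edge.

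Next I would count. If $T(\mathcal{G})$ consists of $c$ paths, then it has $2c$ path-endpoints, hence exactly $2c$ semi-edges, and these attach only at the four special vertices, each of degree two; thus $c \le 4$, and the twelve vertices are partitioned into these $c$ paths. It then remains to determine, using the concrete edge set of the gadget (Figure~\ref{fig:stwl_reduction1}), which systems of vertex-disjoint paths with endpoints among $\{x_u,x_v,y_u,y_v\}$ actually cover all twelve vertices. Tracing the edges forced at the low-degree local vertices, I would show that the only admissible path systems are: (a) two paths, one from $x_u$ to $y_u$ and one from $x_v$ to $y_v$, each covering the six vertices of its side, which is Figure~\ref{fig:stwl_reduction2}(a); and (b) a single path from $x_u$ to $y_u$ traversing all twelve vertices, with $x_v$ and $y_v$ internal, which is Figure~\ref{fig:stwl_reduction2}(b) (the symmetric single path from $x_v$ to $y_v$ giving the $u \leftrightarrow v$ reflection). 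In particular this rules out $c \ge 3$, any special vertex carrying two semi-edges, and the ``crossing'' system pairing $x_u$ with $y_v$.

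The hard part will be this last case analysis: establishing the exclusive-traversal property of the gadget purely in terms of degree-two path covers, rather than in the Hamiltonian-cycle language of the classical reduction. Concretely, the obstacle is to verify that the internal wiring of Figure~\ref{fig:stwl_reduction1} admits no path cover other than the three listed above, and in particular that a path entering the $u$-side cannot exit on the $v$-side. I expect to handle this by propagating the forced edges from $r_1,\dots,r_8$: each local vertex of degree two, having few neighbours, must use a prescribed pair of incident edges, and following these forced choices collapses the a priori exponentially many possibilities down to the two displayed configurations.
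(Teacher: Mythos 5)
Your plan is correct and is essentially the paper's own proof: both arguments use regularity to force degree two on all twelve gadget vertices (so the restriction of $T$ is a disjoint union of paths whose endpoints are exactly the special vertices carrying semi-edges, with the forced edges at the degree-two local vertices $r_1,r_4,r_5,r_8$ limiting each special vertex to at most one semi-edge), and both then pin down the two configurations of Figure~\ref{fig:stwl_reduction2} by propagating these forced edges. The only differences are organizational --- the paper isolates the exclusion of cross traversals as a standalone Hamiltonian-path statement (Lemma~\ref{clm:stwml_ham}) and gets $|\delta_T(\mathcal{G})|\in\{2,4\}$ by a parity count rather than your path-endpoint count --- and the deferred forced-edge case analysis does go through exactly as you predict.
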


\begin{figure}[bt]
    \centering
    \includegraphics[scale=.75]{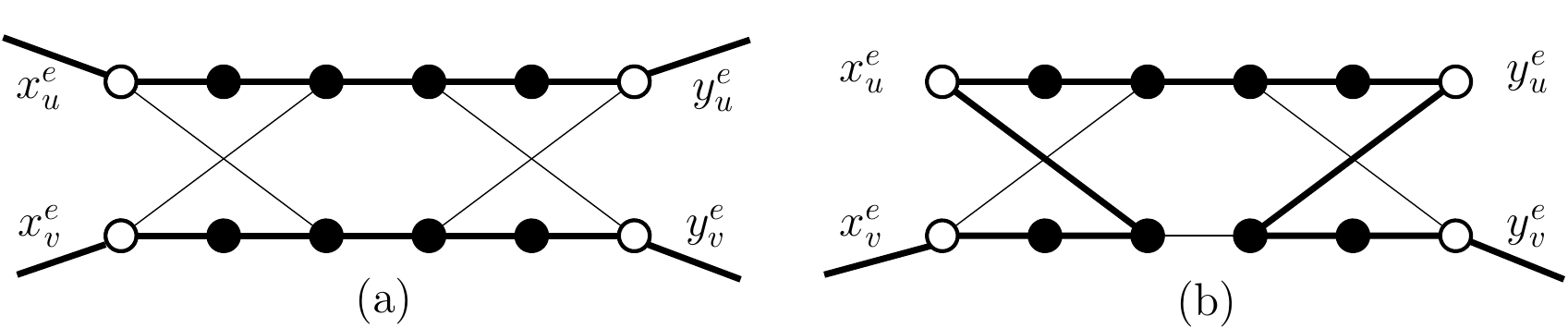}
    \caption{The two possible sub-graphs around a regular edge-gadget $\mathcal{G}$. Bold edges are edges in the tree. Edges with one endpoint in the gadget are edges of $\delta(\mathcal{G})$.}
    \label{fig:stwl_reduction2}
\end{figure}

In order to prove Lemma~\ref{lem:stwml_intersectionedgegadget}, we will need the following lemma that will be useful all along our proof:

\begin{lemma}\label{clm:stwml_ham}
Let $R$ be the graph restricted to an edge-gadget. There is no Hamiltonian path from one vertex of $\{ x_u^e,y_u^e \}$ to one vertex of $\{x_v^e,y_v^e \}$ in $R$.
\end{lemma}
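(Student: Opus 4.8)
The plan is to argue by contradiction. Suppose $R$ admits a Hamiltonian path $P$ with one endpoint in $\{x_u^e,y_u^e\}$ and the other in $\{x_v^e,y_v^e\}$. The key observation is that the eight local vertices $r_1^e,\ldots,r_8^e$ are then all \emph{internal} vertices of $P$, hence each of them has exactly two incident edges in $P$. Since by construction the neighbourhood of every local vertex is contained in the gadget, the degree of $r_i^e$ in $R$ equals its degree in $H(G)$, and so I can read these degrees directly off Figure~\ref{fig:stwl_reduction1}. Every local vertex whose degree in $R$ is exactly two must therefore contribute \emph{both} of its incident edges to $P$. These \emph{forced} edges pin down a large part of $P$ before any genuine case analysis begins.

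Next I would assemble the forced edges into maximal forced sub-paths and perform two consistency checks: that no vertex is forced to carry three edges of $P$ (already impossible in a path), and that the forced sub-paths do not close into a cycle that omits some vertex. The intended outcome of this bookkeeping is that the forced edges completely determine how $P$ must run through the interior of each side of the gadget, leaving freedom only in a bounded number of edges incident to the four special vertices. This reduces the problem to deciding how these few remaining edges can be chosen, which is where the obstruction to crossing from the $u$-side to the $v$-side will appear.

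With the forced skeleton fixed, I would conclude by a short case analysis over the endpoints. Exploiting the symmetry of the gadget under exchanging $u$ with $v$ (the same symmetry already used in Figure~\ref{fig:stwl_reduction2}(b)) and under exchanging the entering vertices with the exit vertices, the four unordered endpoint pairs $\{x_u^e,x_v^e\}$, $\{x_u^e,y_v^e\}$, $\{y_u^e,x_v^e\}$, $\{y_u^e,y_v^e\}$ collapse to two representative cases. For each representative I would show that any way of completing the forced skeleton into a single spanning path either creates a degree-three (branching) vertex, isolates a local vertex, or forces a premature cycle; in every case this contradicts $P$ being a Hamiltonian path, establishing the lemma.

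I expect the main obstacle to be organizational rather than conceptual: the statement is a finite check on a twelve-vertex graph, so the real work lies in choosing the forced edges and the symmetry reductions so that the argument stays short and does not degenerate into an unstructured enumeration of all Hamiltonian paths of $R$. The entire conceptual content — that the gadget can be entered and exited only on a single side, which is precisely what makes the classical \textsc{Vertex Cover} to \textsc{Hamiltonian Path} reduction correct — is encoded in which local vertices have degree two in $R$, so transcribing those degrees correctly from Figure~\ref{fig:stwl_reduction1} is the one place where care is essential.
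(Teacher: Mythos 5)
Your proposal is correct and takes essentially the same approach as the paper's proof: argue by contradiction, note that all local vertices are internal so every degree-two vertex of $R$ contributes both incident edges to $P$, use the gadget's $u\leftrightarrow v$ and entering$\leftrightarrow$exit symmetries to reduce the four endpoint pairs to two cases, and derive a cycle/leaf contradiction in each. The paper's proof is exactly the concrete execution of this plan (in the mixed case the forced edges yield the sub-paths $x_u^er_1r_2x_v^er_5r_6$ and $r_3r_4y_u^er_7r_8y_v^e$, which cannot be joined into a single Hamiltonian path).
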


\begin{proof}
By contradiction. Let us denote by $w_1,w_2$ the two endpoints of a Hamiltonian path $P$.
If $w_1,w_2$ are the two entering (resp. exit) vertices, then both exit (resp. entering) vertices must have degree two in $P$. If both exit vertices have degree two, then one of $r_3r_4$ or $r_7r_8$ do not exist in $P$ since otherwise $P$ admits a cycle. And then $r_4$ or $r_8$ are leaves of $P$, a contradiction since $P$ is a Hamiltonian path in $R$. Similarly, the same holds if both entering vertices have degree two.

So, by symmetry, we can assume that $w_1=x_u^e$ and $w_2 = y_v^e$. Since $x_v^e$ and $y_u^e$ have degree two and all the local vertices have degree two in $P$, the subpaths $x_u^er_1r_2x_v^er_5r_6$ and $r_3r_4y_u^er_7r_8y_v^e$ are in $P$. It is impossible to connected these two paths into a Hamiltonian path in $R$, a contradiction.
\end{proof}

Let us now prove Lemma~\ref{lem:stwml_intersectionedgegadget}:

\begin{proof}
Remark that since all the vertices of the edge-gadget $\mathcal{G}_e$ have degree two in $T$, the number of edges with one endpoint in the gadget is even (the subgraph of $T$ induced by the vertices of $\mathcal{G}_e$ being a union of paths). Moreover, since $r_1,r_4,r_5,r_8$ are not leaves of $T$ and have degree two in $H(G)$, both edges incident to them are in $T$. So the number of edges of $\delta_T(\mathcal{G}_e)$ incident to each of $x_u^e,y_u^e,x_v^e,y_v^e$ is either zero or one. In particular, $|\delta_T(\mathcal{G}_e)| \le 4$.

If $|\delta_T(\mathcal{G}_e)|=2$, then, since the edge-gadget is regular, the restriction of $T$ to the edge-gadget is a Hamiltonian path $P$. By Lemma~\ref{clm:stwml_ham}, the endpoints of $P$ cannot be one vertex of $\{ x_u,y_u \}$ and one vertex of $\{x_v, y_v \}$. So, by symmetry, we can assume that the endpoints of $P$ are $x_u$ are $y_u$. Since, $r_1,x_v,r_5,r_8,y_v,r_4$ have degree two in the subgraph induced by the edge-gadget, it forces all the edges of the gadget but $x_ur_6, y_ur_7, r_6r_7$ and $r_2r_3$ to be in $P$. Since $P$ is an Hamiltonian path from $x_u$ to $y_u$, $r_5r_6 \in E(T)$ which gives the graph of Figure~\ref{fig:stwl_reduction2}(b) (up to symmetry.).  

So we can now assume that $|\delta_T(\mathcal{G}_e)|=4$. Since at most one edge of $\delta_T(\mathcal{G}_e)$ is incident to each special vertex, all these vertices have degree one in the subtree induced by the vertices of $\mathcal{G}_e$. So, the subforest induced on the gadget must be a union of two paths. Since $r_1,r_4,r_5$ and $r_8$ have degree two, the only way to complete this set of edges into a Hamiltonian path provides the graph of Figure~\ref{fig:stwl_reduction2}(a), which completes the proof.
\end{proof}

If $P$ is a Hamiltonian path of $H$, then Remark~\ref{rem:stwfl_nots1s2} ensures that all its edge-gadgets are regular. And then, by Lemma~\ref{lem:stwml_intersectionedgegadget}, for every edge-gadget $\mathcal{G}$, the graph around $\mathcal{G}$ is one of the two graphs of Figure~\ref{fig:stwl_reduction2}. 

\paragraph*{Vertex Cover and Hamiltonian Path.}
Let us assume that $G$ has a vertex cover $X=\{ v_1,\ldots,v_k \}$ of size $k$. We claim that the following set of edges $F$ induces a Hamiltonian path in $H(G)$. We start with $F= \emptyset$. For every $i \le k$, we add to $F$ the edge between  $z_i$ and the entering vertex of the first edge of $v_i$ and the edge between $z_{i+1}$ an the exit vertex of the last edge of $v_i$. For every $v_i \in X$, all the special edges of $v_i$ are added to $F$. The edges $s_1z_1$ and $s_2z_{k+1}$ are also in $F$. We claim that, for each edge-gadget $\mathcal{G}$ corresponding to the edge $uv$, either two edges or four edges of $F$ have exactly one endpoint in $F$. Indeed, if none of them are selected, then by construction of $F$, neither $u$ nor $v$ are in $X$, a contradiction since $X$ is a vertex cover of $G$. Moreover, by construction of $F$, $x_v^e$ is an endpoint of an edge of $F$ if and only if $y_v^e$ also is. Note moreover that: {\em (i)} no local vertex of the edge-gadget is incident to an edge of $F$, {\em (ii)} special vertices are incident to at most one, and {\em (iii)} vertices of $Z$ are incident to two of them. 
So in order to complete $F$ into a Hamiltonian path, we add the edges of Figure~\ref{fig:stwl_reduction2}(a) or (b) depending if two or four edges of the current set $F$ are incident to a vertex of the edge-gadget (two when one endpoint is in $X$, four is both of them are in $X$).
The set $F$ induces a Hamiltonian path, as proved in~\cite{GJ79}. This Hamiltonian path is called \emph{a Hamiltonian path associated with the vertex cover $X$}\footnote{Note that there might be several Hamiltonian paths associated with the same vertex cover since the the path depends on the ``ordering'' of $X$. Indeed we have to choose which entering vertex is attached to $z_1,z_2,\ldots,z_k$ which gives a natural ordering of $X$.}

Conversely, let us explain why we can associate with every Hamiltonian path $P$ a vertex cover. 
Let $\mathcal{G}$ be an edge-gadget. The graph $H[\mathcal{G}]$ is the subgraph induced by the twelve vertices of the edge-gadget. (Note that the subgraph of $P$  induced by $\mathcal{G}$ is not the graph around $\mathcal{G}$, which contains the semi-edges leaving $\mathcal{G}$.

\begin{lemma}\label{lem:stwml_normal}
Let $G$ be a graph, $T$ be a spanning tree of $H(G)$, and $u$ be a regular vertex of $T$. If there exists an edge $e \in E(G)$ with endpoint $u$ such that $x_u^e$ or $y_u^e$ has degree one in the subgraph of $T$ induced by the vertices of $H[\mathcal{G}_e]$, then, for every edge $e'$ with endpoint $u$, $x_u^{e'}$ and $y_u^{e'}$ have degree one in the subgraph of $T$ induced by the vertices of $H[\mathcal{G}_{e'}]$.  \\
In particular, there is an edge of $T$ between $Z$ and the first entering vertex of $u$ and an edge between $Z$ and the last exit vertex of $u$.
\end{lemma}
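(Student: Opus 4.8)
The plan is to track, for each special vertex of $u$, whether some edge of $T$ incident to it leaves its own edge-gadget, and to show that this ``active'' status is shared by all special vertices of $u$. Recall that the special vertices of $u$ form a chain $x_u^{e_1}, y_u^{e_1}, x_u^{e_2}, y_u^{e_2}, \ldots, x_u^{e_\ell}, y_u^{e_\ell}$, where $e_1, \ldots, e_\ell$ are the edges incident to $u$: consecutive gadgets are linked by the special edge $y_u^{e_i} x_u^{e_{i+1}}$, while the two ends $x_u^{e_1}$ and $y_u^{e_\ell}$ are the only special vertices of $u$ adjacent to $Z$. Since $u$ is regular, every vertex of every gadget $\mathcal{G}_{e_i}$ has degree two in $T$, so a special vertex $w$ of $u$ has degree one in the subgraph of $T$ induced by $H[\mathcal{G}_{e_i}]$ (call $w$ \emph{active}) if and only if exactly one edge of $T$ at $w$ leaves $\mathcal{G}_{e_i}$, and degree two (\emph{inactive}) if and only if no edge of $T$ at $w$ leaves $\mathcal{G}_{e_i}$.

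First I would establish a within-gadget symmetry: for each $i$, the vertices $x_u^{e_i}$ and $y_u^{e_i}$ are both active or both inactive. This is read off from Lemma~\ref{lem:stwml_intersectionedgegadget}. Since $\mathcal{G}_{e_i}$ is regular, $T$ around it is one of the two graphs of Figure~\ref{fig:stwl_reduction2}. In case (a) all four special vertices carry one leaving edge, hence all four are active; in case (b) the restriction of $T$ to the gadget is a Hamiltonian path whose two endpoints are two special vertices on the same side (this is exactly what Lemma~\ref{clm:stwml_ham} forbids to be mixed), so the two special vertices of $u$ are either both endpoints (both active) or both internal (both inactive). In either case $x_u^{e_i}$ and $y_u^{e_i}$ share their status.

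Next I would propagate the status along the chain using the special edges. For $1 \le i \le \ell-1$, the unique neighbour of $y_u^{e_i}$ outside $\mathcal{G}_{e_i}$ is $x_u^{e_{i+1}}$, and the unique neighbour of $x_u^{e_{i+1}}$ outside $\mathcal{G}_{e_{i+1}}$ is $y_u^{e_i}$. Hence $y_u^{e_i}$ is active if and only if the special edge $y_u^{e_i} x_u^{e_{i+1}}$ belongs to $T$, if and only if $x_u^{e_{i+1}}$ is active. Combining this equivalence with the within-gadget symmetry shows that all special vertices of $u$ have a common status. By hypothesis $x_u^e$ or $y_u^e$ is active for some edge $e$ incident to $u$, so every special vertex of $u$ is active, which is exactly the claim that $x_u^{e'}$ and $y_u^{e'}$ have degree one in the subgraph of $T$ induced by $H[\mathcal{G}_{e'}]$ for every edge $e'$ incident to $u$.

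Finally, the ``in particular'' statement follows immediately: since $x_u^{e_1}$ is active, its unique edge of $T$ leaving $\mathcal{G}_{e_1}$ must reach one of its outside neighbours, which are exactly the vertices of $Z$; the same holds for $y_u^{e_\ell}$. This yields the desired edges of $T$ between $Z$ and the first entering, respectively the last exit, vertex of $u$. The only real content here is the case analysis of Lemma~\ref{lem:stwml_intersectionedgegadget} repackaged as the within-gadget symmetry; once \emph{active} is characterised as ``has exactly one edge of $T$ leaving its own gadget'', the chain propagation is a one-line equivalence and the $Z$-edges drop out. The point to be careful about is the degenerate case $\ell = 1$, where $u$ has a single incident edge and the chain carries no special edge, but there the within-gadget symmetry alone already gives the conclusion.
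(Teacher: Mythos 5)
Your proof is correct and takes essentially the same route as the paper's: your within-gadget symmetry is exactly the case analysis of Lemma~\ref{lem:stwml_intersectionedgegadget} (with Lemma~\ref{clm:stwml_ham} excluding mixed endpoints), and your chain propagation along the special edges $y_u^{e_i}x_u^{e_{i+1}}$ is the paper's iterated argument toward the first and last edges of $u$, with the edges to $Z$ read off at the two extremities. Your explicit \emph{active}/\emph{inactive} characterisation merely repackages the paper's one-directional repetition as a two-way equivalence, and your remark on the case $\ell=1$ is a harmless extra.
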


\begin{proof}
By symmetry, $x_u^e$  has degree one in the subgraph of $T$ induced by the vertices of. Since the graph around the gadget is one of the two graphs  of $H[\mathcal{G}_e]$. 
In Figure~\ref{fig:stwl_reduction2} (which corresponds to the only possible restrictions of $T$ around a regular edge-gadget), for both $x_u^e$ and $y_u^e$, an edge of $T$ is leaving the gadget. If $e$ is the first (resp. last) edge of $u$, then there is an an edge linking $x_u^e$ (resp. $y_u^e$) to $Z$. Otherwise, let us denote by $e'$ (resp. $e''$) the edge before (resp. after) $e$ in the order of $u$. The only edge incident to $x_u^e$ (resp. $y_u^e$) in $\delta_T(\mathcal{G}_{e'})$ is $x_u^ey_u^{e'}$ (resp. $y_u^ex_u^{e''}$). Since $u$ is regular, both $x_u^ey_u^{e'}$ and $y_u^ex_u^{e''}$ are in $T$. And then we can repeat the same argument on $e'$ (resp. $e''$) until we reach the first (resp. last) edge of $u$.
\end{proof}

If, for a regular vertex $u$ and an edge $e=uv$, $x_u^e$ or $y_u^e$ have degree one in $H[\mathcal{G}_e]$, then there is a path between two vertices of $Z$ passing through all the special vertices $x_u^{e'}$ and $y_u^{e'}$ for every $e'$ incident to $u$ and all the vertices on this path have degree two. Note that the union of all such vertices forms a vertex cover of $G$.

\subsection{Reconfiguration hardness}

\subsubsection{Defining a vertex cover}\label{sec:stwml_S}
Let $T$ be a spanning tree with at most three leaves. By Lemma~\ref{lem:stwml_intersectionedgegadget}, for every edge-gadget $\mathcal{G}_e$, if $T(\mathcal{G}_e)$ is not one of the two graphs of Figure~\ref{fig:stwl_reduction2}, $\mathcal{G}_e$ contains a branching node or a leaf. So Remark~\ref{rem:stwfl_nots1s2} implies:

\begin{remark}\label{rem:stwml_le2}
There are at most two irregular edge-gadgets. Thus there are at most four irregular vertices.
\end{remark}

Indeed, if $T$ has two leaves, all the edge-gadgets are regular. If $T$ has three leaves, the third leaf must be in an edge-gadget, creating an irregular edge-gadget. And this leaf might create a new branching node which might be in another edge-gadget than the one of the third leaf. So the number of irregular edge-gadget is at most two, and thus the number of irregular vertices is at most four (if the edges corresponding to these two edge-gadgets have pairwise distinct endpoints).

Let $T$ be a spanning tree of $H(G)$ with at most three leaves. 
A vertex $v$ is \emph{good} if there exists an edge $e=vw$ for $w \in V(G)$ such that $x_v^e$ or $y_v^e$ has degree one in the subtree of $T$ induced by the twelve vertices of the edge-gadget of $e$. In other words, if we simply look at the edges of $T$ with both endpoints in $\mathcal{G}_e$, $x_v^e$ or $y_v^e$ has degree one (or said again differently, $x_v^e$ or $y_v^e$ are adjacent to exactly one local vertex).
Let us denote by $S(T)$ the set of good vertices. 

\begin{lemma}\label{lem:stwml_S_VC}
Let $T$ be a spanning tree with at most three leaves of $H(G)$ and $e=uv$ be an edge of $G$. At least one special vertex of the edge-gadget $\mathcal{G}_e$ has degree one in the subgraph of $T$ induced by the vertices of $\mathcal{G}_e$. \\
In particular, $S(T)$ is a vertex cover. 
\end{lemma}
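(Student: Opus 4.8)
The plan is to prove the statement in two stages, since the ``in particular'' clause is an easy consequence of the main claim. The main claim asserts that for \emph{every} edge-gadget $\mathcal{G}_e$ with $e=uv$, at least one of the four special vertices $x_u^e,x_v^e,y_u^e,y_v^e$ has degree exactly one in the subgraph of $T$ induced by the twelve vertices of $\mathcal{G}_e$. Once this is established, observe that a special vertex of degree one on the $u$-side makes $u$ good, and one on the $v$-side makes $v$ good; so for every edge $uv$ of $G$, at least one of $u,v$ lies in $S(T)$, which is exactly the definition of $S(T)$ being a vertex cover.

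For the main claim, I would argue by contradiction: suppose some edge-gadget $\mathcal{G}_e$ has all four special vertices of degree two (or more) in the subgraph of $T$ induced by the twelve vertices of $\mathcal{G}_e$. The natural tool is Lemma~\ref{clm:stwml_ham}, which forbids a Hamiltonian path in $R$ (the graph restricted to the edge-gadget) from $\{x_u^e,y_u^e\}$ to $\{x_v^e,y_v^e\}$. First I would use Remark~\ref{rem:stwml_le2}: since $T$ has at most three leaves there are at most two irregular edge-gadgets, and more importantly the local vertices $r_1^e,\ldots,r_8^e$ behave very rigidly. The key structural fact, already exploited in Lemma~\ref{lem:stwml_intersectionedgegadget}, is that $r_1,r_4,r_5,r_8$ have degree two in $H(G)$ and are never leaves, so both of their incident edges lie in $T$; combined with the assumption that all four special vertices have degree at least two inside the gadget, this pins down that the subgraph of $T$ induced by $\mathcal{G}_e$ is a disjoint union of paths whose endpoints are forced. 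The contradiction then comes from connectivity and acyclicity of $T$: if all twelve vertices have internal degree two inside $\mathcal{G}_e$ except possibly for a single leaf/branching node, the induced subgraph must be a Hamiltonian path of $R$ (or contain a cycle, contradicting that $T$ is a tree), and its endpoints must then be a $u$-side vertex and a $v$-side vertex, contradicting Lemma~\ref{clm:stwml_ham}.

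The delicate part is handling the case where $\mathcal{G}_e$ is itself irregular, i.e.\ it contains the extra leaf or the branching node allowed by the three-leaf budget. Here the induced subgraph of $T$ on $\mathcal{G}_e$ need not be a single path, and I cannot directly invoke the Hamiltonian-path lemma. I expect this to be the main obstacle. The plan is to do a careful case analysis on where the leaf/branching node sits: if the special vertices all still have internal degree at least two, then the extra leaf must be a local vertex, but the rigidity of $r_1,r_4,r_5,r_8$ (whose incident edges are both forced into $T$) leaves only a restricted set of positions, and in each the tree either acquires a cycle inside the gadget or disconnects a local vertex, again a contradiction. I would treat the $|\delta_T(\mathcal{G}_e)|$ value as the organizing parameter, mirroring the bookkeeping of Lemma~\ref{lem:stwml_intersectionedgegadget}, and check each admissible parity/count against the tree axioms.

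Concretely, I would organize the write-up as: (1) reduce to showing some special vertex has internal degree one; (2) suppose not, so all four have internal degree $\ge 2$; (3) use that $r_1,r_4,r_5,r_8$ force edges and that the induced subgraph on $\mathcal{G}_e$, being part of a tree, is acyclic to conclude it is a single path spanning the gadget; (4) identify its endpoints as lying one on each of the $u$- and $v$-sides and invoke Lemma~\ref{clm:stwml_ham} for the contradiction; and (5) conclude $S(T)$ is a vertex cover from the definition of good vertices. The bulk of the effort is in step (3)--(4), where acyclicity and the forced local edges must be combined to eliminate every configuration in which no special vertex drops to degree one.
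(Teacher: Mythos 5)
Your contradiction hypothesis is not the negation of the statement. Each special vertex has degree exactly two in the subgraph of $H(G)$ induced by the twelve gadget vertices, so if none of them has degree one in the induced subtree $T'$, then each has degree \emph{zero or two} there --- your ``degree two (or more)'' silently discards the degree-zero case, which is precisely where the paper's proof begins: if, say, $x_u^e$ has degree zero in $T'$, then its degree-two neighbor $r_1$ loses one of its only two incident edges and is forced to be a leaf of $T$, and Remark~\ref{rem:stwfl_nots1s2} then caps the number of degree-zero special vertices at one. Relatedly, your key ``rigidity'' claim --- that $r_1,r_4,r_5,r_8$ are never leaves, so both of their incident edges lie in $T$ --- is false in this setting; that forcing is exactly the regularity hypothesis of Lemma~\ref{lem:stwml_intersectionedgegadget}, which is unavailable here since the gadget may contain the third leaf or the branching node. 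The paper's proof in fact concludes the opposite: once both entering vertices have degree two in $T'$, the edges $x_ur_1,x_ur_6,x_vr_2,x_vr_5$ together with $r_1r_2$ and $r_5r_6$ would close a $6$-cycle, so one of $r_1,r_5$ \emph{is} a leaf; the same argument on the exit side forces one of $r_4,r_8$ to be a leaf (whether the exit vertices have degree zero or two), yielding two leaves inside one gadget and contradicting Remark~\ref{rem:stwfl_nots1s2}.

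Your planned endgame via Lemma~\ref{clm:stwml_ham} also cannot fire. Under your own hypothesis that all four special vertices have internal degree two, none of them can be an endpoint of the induced path; so if $T'$ were a Hamiltonian path of $R$, its endpoints would be local vertices --- a configuration that Lemma~\ref{clm:stwml_ham} (which only excludes paths from $\{x_u^e,y_u^e\}$ to $\{x_v^e,y_v^e\}$) does not rule out, and which is instead excluded by the leaf budget, since two local endpoints are two leaves of $T$. (Parity also forbids your ``single leaf'' variant: a graph cannot have exactly one odd-degree vertex.) In short, the correct contradiction is the counting argument above --- forced cycles versus at most one extra leaf --- and the paper's proof of this lemma never invokes Lemma~\ref{clm:stwml_ham} at all. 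As organized, your steps (3)--(4) would not close, and your treatment of the irregular case rests on the false forcing claim, so the argument has a genuine gap rather than being a repairable variant of the paper's proof.
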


\begin{proof}
Let $R$ be the subgraph of $H(G)$ induced by the vertices of $\mathcal{G}_e$. Let $T'$ be the restriction of $T$ to $R$. Assume by contradiction that none of the four special vertices have degree one in $T'$. Since special vertices $Y$ have degree two in $R$, the special vertices have degree zero or degree two in $T'$. 

We claim that the number of special vertices of degree zero is at most one. Indeed, if $x_u^e$ (resp. $y_u^e,x_v^e,y_v^e$) has degree zero in $T'$, then $r_1$ (resp. $r_4,r_5,r_8$) is a leaf of $T$. Since $T$ has at most three leaves, Remark~\ref{rem:stwfl_nots1s2} ensures that at most one of them have degree one in $T'$ and thus at least three vertices of $Y$ have degree two in $T'$.

So, we can assume without loss of generality that both entering vertices have degree two in $T'$.
Then, $x_ur_1,x_ur_6,x_vr_2$ and $x_vr_5$ are edges. Since $T$ is a tree, one of $r_1$ or $r_5$ are leaves. 
Now if $y_u$ (resp. $y_v$) has degree zero in $T'$ then $r_4$ (resp. $r_8$) is a leaf of $T$. And, if both $y_u,y_v$ have degree two, then $r_4$ or $r_8$ are leaves. In both cases, we have a contradiction with Remark~\ref{rem:stwfl_nots1s2}.
\end{proof}

So, for every tree $T$ with at most three leaves, $S(T)$ is a vertex cover. We say that $S(T)$ is \emph{the vertex cover associated with $T$}.

\subsubsection{ST-reconfiguration to VCR}\label{sec:stwfl_3leaves}

The goal of this section is to prove that an edge flip reconfiguration sequence between spanning trees with at most three leaves in $H(G)$ provides a {\sf TAR} vertex cover reconfiguration sequence in $G$. So we want to prove that {\em (i)} for every spanning tree $T$ with at most three leaves, $|S(T)| \le k+1$. And {\em (ii)}, for every tree $T'$ obtained via an edge flip from $T$, $|S(T) \setminus S(T')|+|S(T') \setminus S(T) | \le 1$.

\begin{lemma}\label{lem:stwml_1edgetoZ}
Let $T$  be a spanning tree of $H(G)$ with at most three leaves. Let $u$ be a vertex of $G$ and $e$ be an irregular edge with endpoint $u$. Assume moreover that no edge before $u$ (resp. after $u$) in the ordering of $u$ are irregular. Then if there is an edge of $\delta_T(\mathcal{G}_e)$ incident to $x_u^e$ (resp. $y_u^e$) then there is an edge between $Z$ and the first (resp. last) entering (resp. exit) vertex of $u$. 
\end{lemma}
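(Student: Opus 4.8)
Looking at Lemma~\ref{lem:stwml_1edgetoZ}, I need to prove a statement about propagating special edges along the ordering of a regular portion of a vertex's incident edges, up to the boundary where irregularity begins.

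Let me understand the setup. We have a spanning tree $T$ with at most three leaves. A vertex $u \in V(G)$ has an ordering of its incident edges $e_1, \ldots, e_\ell$. The special edges of $u$ connect consecutive gadgets: $y_u^{e_i} x_u^{e_{i+1}}$. The vertices $x_u^{e_1}$ (first entering) and $y_u^{e_\ell}$ (last exit) connect to $Z$.

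The hypothesis is that $e$ is an irregular edge with endpoint $u$, but all edges *before* $e$ in the ordering (and separately, all edges *after* $e$) are regular. The claim splits into two symmetric parts: if $\delta_T(\mathcal{G}_e)$ has an edge incident to $x_u^e$, then there's a $T$-edge from $Z$ to the first entering vertex of $u$.

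The plan is to mimic the propagation argument of Lemma~\ref{lem:stwml_normal}, but now the chain of regular gadgets stops at $e$ rather than extending through $u$'s entire neighborhood. Consider the $x_u^e$ side. Suppose $\delta_T(\mathcal{G}_e)$ contains an edge incident to $x_u^e$. This edge must be a special edge of $u$ — the only edge leaving the gadget at $x_u^e$ is $x_u^e y_u^{e'}$ where $e'$ is the edge immediately before $e$ in the ordering of $u$ (or, if $e = e_1$, the edge $x_u^{e_1} z$ to some $z \in Z$, and then we are immediately done). So assume $e \neq e_1$ and the edge is $x_u^e y_u^{e'} \in E(T)$. By hypothesis $e'$ is regular, hence by Lemma~\ref{lem:stwml_intersectionedgegadget} the restriction $T(\mathcal{G}_{e'})$ is one of the two graphs of Figure~\ref{fig:stwl_reduction2}, in each of which an edge of $T$ leaves the gadget at $y_u^{e'}$ and correspondingly an edge leaves at $x_u^{e'}$. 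That propagating edge at $x_u^{e'}$ is again a special edge $x_u^{e'} y_u^{e''}$ (with $e''$ the predecessor of $e'$), which must lie in $T$ since $e''$ is regular. Iterating this down the chain of regular predecessors — each step justified by Lemma~\ref{lem:stwml_intersectionedgegadget} applied to a regular gadget — I reach $e_1$ and conclude that $x_u^{e_1} z \in E(T)$ for the appropriate $z \in Z$. The symmetric argument on the $y_u^e$ side, walking forward through regular successors to $e_\ell$, gives the edge from $Z$ to the last exit vertex.

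The key subtlety, and the main thing to verify carefully, is that once we know $x_u^e y_u^{e'} \in E(T)$ with $e'$ regular, we can deduce the gadget $\mathcal{G}_{e'}$ genuinely \emph{forwards} the connectivity — i.e. that in both subcases of Figure~\ref{fig:stwl_reduction2} an edge of $\delta_T(\mathcal{G}_{e'})$ is incident to $x_u^{e'}$, so that the inductive hypothesis of the propagation re-applies at $e'$. This is exactly the structural content used in Lemma~\ref{lem:stwml_normal}: a regular gadget, when it has a leaving edge at one special vertex of $u$, has a leaving edge at the other special vertex of $u$ as well. I would cite this observation (or re-derive it in one line from inspection of Figure~\ref{fig:stwl_reduction2}) rather than re-proving it. The main obstacle is purely one of bookkeeping: ensuring the induction starts correctly (handling the base case $e = e_1$, where the leaving edge is directly the edge to $Z$) and terminates correctly at $e_1$, and that the hypothesis ``no edge before $u$ in the ordering is irregular'' is precisely what licenses applying Lemma~\ref{lem:stwml_intersectionedgegadget} at every intermediate gadget in the chain. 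No new combinatorial idea is needed beyond Lemmas~\ref{lem:stwml_intersectionedgegadget} and~\ref{lem:stwml_normal}; the proof is a localized, one-sided version of the propagation already established there.
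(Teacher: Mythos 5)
Your proposal is correct and takes essentially the same approach as the paper: the paper's proof consists of exactly two lines observing that, since the gadgets before (resp.\ after) $e$ in the ordering of $u$ are all regular, ``the arguments of Lemma~\ref{lem:stwml_normal}'' apply, which is precisely the one-sided propagation via repeated applications of Lemma~\ref{lem:stwml_intersectionedgegadget} that you spell out. Your write-up is in fact more explicit than the paper's (handling the base case $e=e_1$ and the forwarding property of regular gadgets in Figure~\ref{fig:stwl_reduction2} directly), but there is no difference in substance.
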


\begin{proof}
Assume that an edge of $\delta_T(\mathcal{G}_e)$ is incident to $x_u^e$. Since $\mathcal{G}_e$ is the unique irregular edge-gadget for $u$, we can conclude using the arguments of Lemma~\ref{lem:stwml_normal}.
\end{proof}

Let us now prove that $|S(T)| \le k+1$ for any spanning tree $T$ with at most three leaves. When no confusion is possible, we will write $S$ for $S(T)$.

\begin{lemma}\label{lem:stwml_k+1}
Every spanning tree $T$ of $H(G)$ with at most three leaves satisfies $|S(T)| \le k+1$.
\end{lemma}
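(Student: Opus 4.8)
\textbf{Proof plan for Lemma~\ref{lem:stwml_k+1}.}

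The plan is to bound $|S(T)|$ by relating the good vertices to a spanning structure that respects the vertex set $Z$, whose size $k+1$ is the natural upper bound we are aiming for. First I would recall, via Lemma~\ref{lem:stwml_normal} and Lemma~\ref{lem:stwml_1edgetoZ}, that whenever a vertex $u$ is good, there is an edge of $T$ between $Z$ and the first entering vertex of $u$ and (at least in the regular case) an edge between $Z$ and the last exit vertex of $u$. In other words, each good vertex $u$ forces $T$ to use edges incident to $Z$ that ``enter'' and ``leave'' the chain of edge-gadgets around $u$. The strategy is therefore to charge each good vertex to the vertices of $Z$ that it consumes, and to use the fact that $|Z| = k+1$ together with the tree structure (no cycles) to prevent $Z$ from being charged too many times.

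The key steps, in order, would be: \emph{(1)} Show that for a \emph{regular} good vertex $u$, Lemma~\ref{lem:stwml_normal} yields a path through all special vertices of $u$ whose two endpoints are attached to two \emph{distinct} vertices of $Z$; so such a $u$ ``uses up'' two endpoints incident to $Z$. \emph{(2)} Count the total number of edges between $Z$ and the rest of the graph that can appear in a spanning tree: since $T$ is acyclic and the $z_i$ form an independent set with $s_1, s_2$ pendant at $z_1, z_{k+1}$, the number of tree-edges incident to $Z$ is tightly constrained — essentially each $z_i$ has bounded degree in $T$, and the total contribution to ``attachment slots'' is controlled by $|Z|+1 = k+2$ or so. \emph{(3)} Combine the charging from step~(1) with the budget from step~(2): if every good vertex needed two fresh attachment points to $Z$, one would get roughly $|S(T)| \le (k+2)/2$ wait — more carefully, consecutive good vertices in the global ordering \emph{share} $Z$-vertices (the exit-$Z$ edge of one and the entering-$Z$ edge of the next can be the same $z_i$), so a chain of good vertices forms a path in $T$ visiting distinct $z_i$'s, and the number of distinct $z_i$ visited is at least $|S(T)|+1$ in the fully regular case, forcing $|S(T)| \le k$. \emph{(4)} Finally, account for the irregular vertices using Remark~\ref{rem:stwml_le2} (at most four irregular vertices, from at most two irregular edge-gadgets): a constant number of ``defects'' in the path structure can create at most one extra unit of slack, pushing the bound from $k$ up to $k+1$.

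The hard part will be step~(3)/(4): making the charging argument rigorous when irregular edge-gadgets are present. In the purely regular case (two leaves), Lemma~\ref{lem:stwml_normal} makes the good vertices line up into disjoint $Z$-to-$Z$ paths and the counting is clean, giving $|S(T)| \le k$. But with up to two irregular edge-gadgets a good vertex might fail to propagate its $Z$-attachment through a broken gadget, or an irregular gadget might spuriously create or destroy an attachment to $Z$; I expect the delicate point is to verify that each such defect costs the counting at most one unit, and that the branching node and the extra leaf cannot conspire to add more than a single extra good vertex beyond the regular budget. The cleanest route is probably to contract each maximal run of regular good vertices into a single super-path anchored at two $Z$-vertices, observe these super-paths are edge-disjoint and acyclic inside $T$, bound their number by $|Z|-1 = k$ via the tree/independent-set structure, and then add the at most four irregular vertices — of which Lemma~\ref{lem:stwml_S_VC} and the leaf-count force all but at most one to still be anchored to $Z$ — to conclude $|S(T)| \le k+1$.
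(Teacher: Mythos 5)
Your regular-vertex counting matches the paper's opening move: each regular good vertex is anchored by two edges of $\delta_T(Z)$ (Lemma~\ref{lem:stwml_normal}), the edges $s_1z_1$ and $s_2z_{k+1}$ account for two more, and since $|Z|=k+1$ and $T$ has at most three leaves, $|\delta_T(Z)|$ is at most $2k+3$; this is sound and gives $|S(T)| \le k$ when every gadget is regular. The gap is in your steps (3)--(4), which is exactly where the paper's proof does all its work. First, the arithmetic of your final sentence does not close: a budget of $k$ for the regular good vertices plus ``at most four irregular vertices'' yields $k+4$, not $k+1$, unless each irregular good vertex (save at most one) is shown to consume two fresh edges of the same $\delta_T(Z)$ budget --- and that anchoring claim is precisely what can fail, since the propagation argument of Lemma~\ref{lem:stwml_normal} breaks at an irregular gadget. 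You assert that Lemma~\ref{lem:stwml_S_VC} and the leaf count ``force all but at most one to still be anchored to $Z$,'' but neither statement says anything of the sort: a vertex can be good purely because a special vertex has degree one \emph{inside} an irregular gadget, with no edge from that chain to $Z$ at all.

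Second, and more fundamentally, pure counting cannot finish the job. The configuration in which a single irregular gadget $\mathcal{G}_e$, $e=uv$, certifies \emph{both} $u$ and $v$ as good while sending only one tree edge out of the gadget is consistent with your budget: with $|S|=k+2$ there are $k$ regular good vertices contributing $2k$ edges, two more edges at $s_1,s_2$, and one edge forced out of $\mathcal{G}_e$ by connectivity, which Lemma~\ref{lem:stwml_1edgetoZ} converts into a $Z$-edge --- total $2k+3$, exactly saturating but not exceeding the budget. The paper kills this case not by counting but by structure: once $|\delta_T(\mathcal{G}_e)|=1$ and the unique extra leaf is placed, the subforest inside $\mathcal{G}_e$ would have to be a Hamiltonian path from a vertex of $\{x_u^e,y_u^e\}$ to a vertex of $\{x_v^e,y_v^e\}$, which Lemma~\ref{clm:stwml_ham} forbids; this gadget-interior obstruction is the reason the gadget was designed as it is, and it never appears in your plan. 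The two-irregular-gadget cases need the analogous tools (a branching-node gadget with no leaf must emit three edges, Lemma~\ref{lem:stwml_1edgetoZ} converts emitted edges into $Z$-edges, and the degenerate $|\delta|=1$ subcase again reduces to Lemma~\ref{clm:stwml_ham}), each calibrated against the exact budget $2k+2$ or $2k+3$. So your skeleton reproduces the proof's first paragraph, but the claimed one-unit-of-slack resolution of the irregular case is a genuine missing idea, not a routine verification.
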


\begin{proof}
Assume by contradiction that $|S| \ge k+2$. By Remark~\ref{rem:stwml_le2}, at least $k-2$ vertices of $S$ are regular. By Lemma~\ref{lem:stwml_normal}, for each regular vertex $w \in S$, there is an edge of $T$ between $Z$ and the first entering vertex of $w$ and $Z$ and the last exit vertex of $w$. So at least $2k-4$ edges of $\delta_T(Z)$ are incident to regular vertices. Moreover two edges of $\delta_T(Z)$ are incident to $s_1$ and $s_2$. So, $T$ already has $2k-2$ edges in $\delta_T(Z)$. Since $|Z|=k+1$ and $T$ has at most three leaves, Remark~\ref{rem:stwfl_nots1s2} ensures that $\delta_T(Z)$ has size $2k+1, 2k+2$ or $2k+3$. Indeed, if either all the vertices of $Z$ have degree two or if $Z$ contains both the vertex of degree three and the vertex of degree one, then $|\delta_T(Z)|=2k+2$. Otherwise, if $Z$ only contains the vertex of degree one (resp. three), and then $|\delta_T(Z)|=2k+1$ (resp. $2k+3$). Moreover, if there is no irregular edge-gadget then, since $|S| \ge k+2$, Lemma~\ref{lem:stwml_normal} ensures that $Z$ is incident to at least $2k+4$ edges, a contradiction. So there is one or two irregular edge-gadgets by Remark~\ref{rem:stwml_le2}.
\medskip

\noindent \textbf{Case 1.} $T$ has exactly one irregular edge-gadget $\mathcal{G}_e$ for $e=uv$. \\
Since $|S| \ge k+2$, $k$ vertices are regular (otherwise the number of edges incident to $Z$ would be at least $2k+4$ using the argument above, a contradiction). So by Lemma~\ref{lem:stwml_normal}, $2k$ edges of $\delta_T(Z)$ are incident to regular vertices and two are incident to $s_1$ and $s_2$. So it already gives $2k+2$ edges in $\delta_T(Z)$. Moreover, since $T$ is connected, at least one edge is in $\delta_T(\mathcal{G}_e)$. So by Lemma~\ref{lem:stwml_1edgetoZ}, exactly one edge of $T$ is in $\delta_T(\mathcal{G}_e)$. Note that it already gives $2k+3$ edges incident to $Z$ so a vertex of $Z$ has degree three. And then, in $T$, all the vertices of $\mathcal{G}_e$ but at most one have degree two and the last one have degree one. Moreover, $|\delta_T(\mathcal{G}_e)|=1$.

Let $R$ be the graph restricted to $\mathcal{G}_e$ and $T'$ be the subforest of $T$ restricted to $R$. Since both $u$ and $v$ are in $S$, at least one vertex $v_1$ in $\{ x_u^e,y_u^e \}$ (resp. $v_2$ in $\{x_v^e,y_v^e \}$) has degree one in $R$. Since all the vertices have degree two in $T$ but at most one and $|\delta_T(R)|=1$, the graph $T'$ on $V(\mathcal{G}_e)$ is a Hamiltonian path between $v_1$ and $v_2$. In particular, all the local vertices must have degree two in $T'$.
By Lemma~\ref{clm:stwml_ham}, there is no Hamiltonian path between $v_1$ an $v_2$, a contradiction.
\medskip

\noindent \textbf{Case 2.} There are two irregular edge-gadgets $\mathcal{G}_1$ and $\mathcal{G}_2$.

Since each special edge-gadget of $T$ contains a vertex of degree one or a vertex of degree three by Lemma~\ref{lem:stwml_intersectionedgegadget}, all the vertices of $Z$ have degree two in $T$. So, $|\delta_T(Z)|=2k+2$. Since we have seen that at least $2k-4$ edges of $\delta_T(Z)$ are incident to regular vertices, there are at most four edges between $Z$ and special vertices of irregular vertices.
\smallskip

\noindent \textit{Case 2.a.} The two irregular edge-gadgets are not endpoint disjoint. 

We denote by $u_1u_2$ and $u_2u_3$ the two edges of the irregular edge-gadgets.  We can assume without loss of generality that the edge-gadget of $u_1u_2$ contains a vertex of degree one and the one of $u_2u_3$ contains a vertex of degree three. 

Since $u_1u_2$ (resp. $u_2u_3$) is the unique irregular edge incident to $u_1$ (resp. $u_3$),  all the edges incident to $u_1$ (resp. $u_3$) before and after $u_1u_2$ (resp $u_2u_3$) in the ordering of $u_1$ (resp. $u_3$) are regular. So if there is an edge of $\delta(\mathcal{G}_{u_1u_2})$ (resp. $\delta(\mathcal{G}_{u_2u_3})$) incident to the entering or exit vertex of $u_1$ (resp. $u_3$), Lemma~\ref{lem:stwml_1edgetoZ} ensures that this edges creates an additional edge incident to $Z$.

Let $a\ge 0$ such that $|S| = k+2+a$. Let us first prove that $a=0$. Since there are three irregular vertices, there are at least $k-1+a$ regular vertices. So by Lemma~\ref{lem:stwml_normal}, at least $2k-2+2a$ edges of $\delta_T(Z)$ are incident to regular vertices and two are incident to $s_1$ and $s_2$ by Remark~\ref{rem:stwfl_nots1s2}.
So in total, it already gives $2k+2a$ edges incident to $Z$. Since $|\delta_T(Z)| = 2k+2$, if $a>0$ then there is no edge between $Z$ and an entering or exit vertex of an irregular vertex. 

So no edge of $\delta(\mathcal{G}_{u_1u_2})$ is incident to the entering or exit vertex of $u_1$ and the same holds for $u_3$ in $\delta(\mathcal{G}_{u_2u_3})$ by Lemma~\ref{lem:stwml_1edgetoZ} (since $u_1u_2$ are and $u_2u_3$ are the only irregular edges incident to respectively $u_1$ and $u_3$). Up to symmetry, we can assume that $u_1u_2$ is before $u_2u_3$ in the ordering of $u_2$. So Lemma~\ref{lem:stwml_1edgetoZ} ensures no edge is not incident to the entering vertex of $u_2$ in $\delta(\mathcal{G}_{u_1u_2})$ and the exit vertex of $u_2$ in $\delta(\mathcal{G}_{u_2u_3})$ (these edges are the only irregular edge-gadgets containing $u_2$). So  if $\delta_T(\mathcal{G}_{u_1u_2})$ (resp. $\delta_T(\mathcal{G}_{u_2u_3})$ is not empty, it can only contain an edge incident to $y_{u_2}^{u_1u_2}$ (resp. $x_{u_2}^{u_2u_3}$).

But since $T$ is connected, at least one edge has to leave from $\mathcal{G}_{u_1u_2}$ and $\mathcal{G}_{u_2u_3}$. So $T$ have to contain the edges leaving $y_{u_2}^{u_1u_2}$ and $x_{u_2}^{u_2u_3}$\footnote{Note that it might be the same edge if $u_1u_2$ and $u_2u_3$ are consecutive in the ordering of $u_2$ .}. But since the gadgets between them are regular, all the vertices between $y_{u_2}^{u_1u_2}$ and $x_{u_2}^{u_2u_3}$ in $T$ have degree two and does not contain any vertex of $Z$. And then the vertices of the two edge-gadgets cannot be in the connected component of $s_1$, a contradiction.

So we must have $|S|=k+2$ and $u_1,u_2$ and $u_3$ are in $S$. Indeed, there are $k-1$ regular vertices in $S$ and at most three irregular vertices candidates to be in $S$.

Let $e_1=u_1u_2$. Let $R$ be the graph restricted to $\mathcal{G}_{u_1u_2}$ and $T'$ be the subforest of $T$ restricted to $R$.
Since $\mathcal{G}_{e_1}$ does not contain any vertex of degree three and contains exactly one leaf, $T'$ is a union of paths (some of them might be reduced to a single vertex). Moreover, since $T$ has at most one leaf distinct from $s_1,s_2$, at most one local vertex (whose neighborhood is completely included in the edge-gadget) is a leaf of a path in $T'$. Since $T'$ contains a leaf and no vertex of degree at least three, $|\delta(\mathcal{G}_{u_1u_2})|$ is odd (since the sum of the degrees of $V(\mathcal{G}_{u_1u_2})$ is even in $T'$ and odd in $T$ and the difference only consists of edges in $\delta(\mathcal{G}_{u_1u_2})$).
If an entering or exit vertex contributes for two edges in $\delta(\mathcal{G}_{u_1u_2})$, one of its local neighbors is a leaf (since this vertex has degree at most two by assumption and one of its local neighbors has degree exactly two in $H$). So at most one edge incident to each -but at most one- entering and exit vertices is in $\delta(\mathcal{G}_{u_1u_2})$. Thus we have $|\delta(\mathcal{G}_{u_1u_2})| \in \{ 1, 3, 5\}$. 

First assume $|\delta(\mathcal{G}_{u_1u_2})|=5$, then there are two edges of $\delta(\mathcal{G}_{u_1u_2})$ incident to the same special vertex of the gadget. By construction of $H(G)$, a special vertex of $\mathcal{G}_{u_1u_2}$ is either incident to exactly one edge of $\delta(\mathcal{G}_{u_1u_2})$ if it is not the first entering or last exit vertex, or all the edges of $\delta(\mathcal{G}_{u_1u_2})$ incident to it goes to $Z$. So two edges of $\delta(\mathcal{G}_{u_1u_2})$ are between $Z$ and a special vertex of $\mathcal{G}_{u_1u_2}$.
So it already creates two new edges incident to $Z$. Moreover, since $|\delta(\mathcal{G}_{u_1u_2})|=5$, at least one edge leaving the gadget is incident to each entering or exit vertex. So by Lemma~\ref{lem:stwml_1edgetoZ}, since $u_1u_2$ is the only irregular gadget for $u_1$, it creates at least one more edge in $\delta_T(Z)$. Since $\delta_T(Z)$ already contains $2k-2$ edges incident to entering or exit vertices of the $k-1$ regular vertices, and two edges incident to $s_1$ and $s_2$, we have $|\delta_T(Z)| \ge 2k+3$, a contradiction. So from now on, we can assume that $|\delta(\mathcal{G}_{u_1u_2})| \in  \{ 1,3\}$.

Since $u_1 \in S$, an entering or exit vertex of $u_1$ has degree one in the restriction of $T$ to some edge-gadget containing $u_1$. If an entering or exit vertex of $u_1$ has degree one in the subtree $T'$ of $T$ restricted to the edge-gadget for an edge distinct from $u_1u_2$, then Lemma~\ref{lem:stwml_1edgetoZ} ensures that there is an edge between $Z$ and the first entering vertex of the last exit vertex of $u_1$. Now assume that at least one vertex of $x_{u_1}^{u_1u_2},y_{u_1}^{u_1u_2}$ have degree one in $T'$. Either an edge of $T$ incident to $x_{u_1}^{u_1u_2}$ or $y_{u_1}^{u_1u_2}$ leaves the edge-gadget, and then one edge goes to $Z$ by Lemma~\ref{lem:stwml_1edgetoZ}. Otherwise, w.l.o.g., $x_{u_1}$ has degree one in $T'$ and in $T$. So all the other vertices of the edge-gadget have degree two in $T$. So free to virtually add an edge between $x_{u_1}$ and the rest of the graph, the gadget becomes regular and then by Lemma~\ref{lem:stwml_intersectionedgegadget}, the vertex $y_{u_1}$ has an edge to the rest of the graph (in $T$), which finally goes to $Z$ by Lemma~\ref{lem:stwml_1edgetoZ}. So, there is at least one of $\delta_T(Z)$ incident to a special vertex of $u_1$.

Recall that $\mathcal{G}_{u_2u_3}$ contains a vertex of degree three and no leaves. Let us prove that because of this edge-gadget, we can add two edges incident to $Z$. If two of the three edges of the degree three vertex are in  $\delta(\mathcal{G}_{u_2u_3})$, we have already seen that, by definition of $H(G)$, the other endpoints of these edges are in $Z$. And then the conclusion follows. The restriction $T''$ of $T$ to the vertices of $\mathcal{G}_{u_2u_3}$ is a forest. Note that the leaves of $T''$ can only be special vertices since all the vertices of $\mathcal{G}_{u_2u_3}$ have degree at least two in $T$. If $T''$ has at least three leaves, then by Lemma~\ref{lem:stwml_1edgetoZ}, at least two of them creates an edge incident to $Z$ since the only one which does not create it is $x_{u_2}^{u_2u_3}$. Indeed, by Lemma~\ref{lem:stwml_1edgetoZ}, all the edges of $\delta(\mathcal{G}_{u_2u_3})$ incident to a special vertex of $u_3$ immediately creates an edge incident to $Z$. The same holds for $y_{u_2}^{u_2u_3}$ since $u_2u_3$ is the last irregular edge incident to $u_3$. So if $T''$ has three leaves, it creates two edges incident to $Z$ (indeed three edges are leaving the edge-gadget and only the one, if it exists, incident to $x_{u_2}^{u_2u_3}$ does not create an edge incident to $Z$).
So we can assume that $T''$ has exactly two leaves and then the degree three vertex is an entering or exit vertex. Since this vertex has degree two in $T''$, $T''$ contains two other leaves. And again there are three distinct special vertices incident to an edge of $\delta_T(\mathcal{G}_{u_2u_3})$. And as in the previous case, Lemma~\ref{lem:stwml_1edgetoZ} ensures that at least two of them are creating one new edge incident to $Z$. So in both cases, the number of edges of  $\delta_T(Z)$ incident to entering or exit vertices of $u_2,u_3$ is at least two.

So $|\delta_T(Z)| \ge 2k+3$, a contradiction.
\medskip

\noindent \textit{Case 2.b.} The two irregular edge-gadgets are endpoint disjoint.

\noindent
Let $u_1u_2$ and $u_3u_4$ be the two irregular edges. Let $\mathcal{G}_1:=\mathcal{G}_{u_1u_2}$ and and $\mathcal{G}_2:=\mathcal{G}_{u_3u_4}$. 
Note that since $u_1u_2$ and $u_3u_4$ are the unique irregular edges for respectively $u_1,u_2,u_3,u_4$, all the edges leaving these edge-gadgets create an edge incident to $Z$ by Lemma~\ref{lem:stwml_1edgetoZ}.
Since there are at most four edges between $Z$ and special vertices of irregular vertices, we have $|\delta_T(\mathcal{G}_1)|+|\delta_T(\mathcal{G}_2)| \le 4$.
Let us prove by contradiction that $|\delta_T(\mathcal{G}_1)|+|\delta_T(\mathcal{G}_2)| > 4$.

Let us first prove that the number of regular vertices is exactly $k-2$. We have already seen that it has to be at least $k-2$. Assume by contradiction that the number of regular vertices is at least $k-1$. Then, by Lemma~\ref{lem:stwml_normal}, there are $2k-2$ edges between $Z$ and entering or exit vertices or regular vertices. We also have the edges $s_1z_1$ and $s_2z_2$. Moreover, every edge in $\delta_T(\mathcal{G}_1)$ and $\delta_T(\mathcal{G}_2)$ creates an edges in $\delta_T(Z)$ incident to irregular vertices by Lemma~\ref{lem:stwml_1edgetoZ} and the fact that $u_1u_2$ and $u_3u_4$ are the only irregular edges incident to each of these four vertices. Since there are two irregular edges, all the vertices of $Z$ have degree two and so $|\delta_T(Z)|=2k+2$. So $|\delta_T(\mathcal{G}_1)|+|\delta_T(\mathcal{G}_2)|=2$. But since one of $\mathcal{G}_1$ or $\mathcal{G}_2$ contains a vertex of degree three and no leaves, three edges have to leave it, a contradiction. So from now on we can assume that the number of regular vertices is $k-2$ and then all of $u_1,u_2,u_3,u_4$ are in $S$ (since $|S| \ge k+2$).

First assume that, $|\delta_T(\mathcal{G}_1)|=1$ or $|\delta_T(\mathcal{G}_2)|=1$, let us say wlog $\mathcal{G}_1$. Then, one vertex of the edge-gadget $\mathcal{G}_1$ is a leaf and $\mathcal{G}_2$ contains the vertex of degree three. Since there are two irregular edge-gadgets, all the vertices of $\mathcal{G}_1$ but the leaf have degree two in $T$. Moreover, since both $u_1$ and $u_2$ are in $S$, an entering or exit vertex incident to $u_1$ and $u_2$ have  to be of degree one in the restriction of $T$ to one of their edge-gadgets. 

We claim that it implies that an entering or exit vertex of both $u_1$ and $u_2$ in the edge-gadget of $\mathcal{G}_1$ have degree one in the restriction of $T$ to $\mathcal{G}_1$. Let us first prove that an edge of $\delta_T(\mathcal{G}_1)$ is incident to the entering or exit vertices of $u_1$, and that the same holds for $u_2$. Let us prove the statement for $u_1$ and assume by contradiction that it is not the case. Let $e'_i$ the be edge the closest of be the closest edge-gadget from $u_1u_2$ in the ordering of $u_1$ such that $x_{u_i}^{e'_i}$ or $y_{u_i}^{e'_i}$ has degree one in the graph restricted to $\mathcal{G}_{e'}$. Since $e'$ is regular, it implies by Lemma~\ref{lem:stwml_intersectionedgegadget} that an edge of $T$ is incident to the exit vertex of the gadget before $e'_i$ and the entering vertex of the gadget after $e'_i$. So an edge of $T$ leaving the gadget $\mathcal{G}_1$  is incident to entering or exit vertices of $u_1$, denoted by $x$. Now, since $\mathcal{G}_1$ contains one leaf and no vertex of degree three, if $x$ has degree one in $T$, its degree two incident local neighbor also is a leaf, a contradiction. So it has degree two and then has degree one in the gadget. A similar proof gives the same for $u_2$.

So one of the vertices $\{x_{u_1}^{u_1u2},y_{u_1}^{u_1u2} \}$ and one of the vertices $\{ x_{u_2}^{u_1u2},y_{u_2}^{u_1u2} \}$ have degree one in the subgraph $T'$ of $T$ induced by the vertices of $\mathcal{G}_1$. Since all the vertices but at most one (which cannot be a local vertex) have degree two in $T$ and $|\delta_T(\mathcal{G}_1)|=1$ by assumption, $T'$ is a Hamiltonian path on $\mathcal{G}_1$ between one vertex of $\{x_{u_1}^{u_1u2},y_{u_1}^{u_1u2} \}$ and one vertex of $\{ x_{u_2}^{u_1u2},y_{u_2}^{u_1u2} \}$, a contradiction with Lemma~\ref{clm:stwml_ham}. So we cannot have $|\delta(\mathcal{G}_1)|=1$.

So we can assume that $|\delta_T(\mathcal{G}_1)|=2$ and $|\delta_T(\mathcal{G}_2)|=2$. Let $\mathcal{G}_2$ be the edge-gadget containing a vertex of degree three and no leaves. Since it contains a branching node and no leaf, at least three edges are in $\delta_T(\mathcal{G}_2)$, a contradiction. 
\end{proof}

So the vertex cover $S(T)$ associated with every spanning tree $T$ with at most three leaves has size at most $k+1$. In order to prove that a spanning tree transformation provides a vertex cover transformation for the {\sf TAR} setting, we have to prove that, for every edge flip, then either $S$ is not modified, or one vertex is added to $S$ or one vertex is removed from $S$.

\begin{lemma}\label{lem:stwml_lastlemma}
Let $T_1$ and $T_2$ be two adjacent trees with at most three leaves. Then the symmetric difference between the sets $S$ associated with the two trees is at most one. 
\end{lemma}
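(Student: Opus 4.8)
The plan is to reduce the statement to a purely local analysis of the at most two edges exchanged by the flip. Recall that membership of a vertex $u$ in $S(T)$ is determined entirely by the within-gadget degrees of its special vertices: $u \in S(T)$ iff some $x_u^{e'}$ or $y_u^{e'}$ has degree one in the subgraph of $T$ induced by the twelve vertices of $\mathcal{G}_{e'}$. Since $T_1$ and $T_2$ differ by a single flip, $T_1 \triangle T_2 = \{e_1,e_2\}$ with $e_1 \in T_1$ and $e_2 \in T_2$. First I would split the edges of $H(G)$ into \emph{within-gadget} edges (both endpoints in a common $\mathcal{G}_e$) and \emph{special} edges (those incident to $Z$, the edges $s_1z_1, s_2z_{k+1}$, and the special edges $y_u^{e_i}x_u^{e_{i+1}}$). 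Only within-gadget edges contribute to the degrees used in the definition of $S$, so flipping a special edge leaves every within-gadget degree, and hence $S$, unchanged. The crucial local observation, read off the gadget of Figure~\ref{fig:stwl_reduction1} (and used implicitly in the proof of Lemma~\ref{clm:stwml_ham}), is that inside a gadget every special vertex is adjacent only to local vertices $r_1^e,\ldots,r_8^e$; no two special vertices are adjacent. Consequently a single within-gadget edge is incident to at most one special vertex, so its removal or addition changes the within-gadget degree of at most one special vertex, and can therefore influence the good-status of at most one vertex of $G$.

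With this in hand the easy cases fall out. If neither $e_1$ nor $e_2$ is a within-gadget edge, all within-gadget degrees coincide in $T_1$ and $T_2$, so $S(T_1)=S(T_2)$ and the symmetric difference is $0$. If exactly one of $e_1,e_2$ is a within-gadget edge, it touches at most one special vertex, so at most one vertex of $G$ can change status and $|S(T_1)\triangle S(T_2)|\le 1$. The same conclusion holds when both $e_1,e_2$ are within-gadget edges but their (unique) incident special vertices belong to the same vertex $u\in V(G)$: only the good-status of $u$ can change.

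The remaining and genuinely hard case is when $e_1$ and $e_2$ are both within-gadget edges whose incident special vertices belong to two \emph{distinct} vertices $u_1\neq u_2$ --- either inside the same gadget $\mathcal{G}_{u_1u_2}$, or inside two different gadgets. Here I would argue by contradiction, assuming both $u_1$ and $u_2$ toggle their status. Since the gadgets untouched by $e_1,e_2$ are identical in $T_1$ and $T_2$, each $u_i$ must be good (in exactly one of the two trees) \emph{only} through the affected gadget, and not good through any of its other gadgets. I would then combine: Remark~\ref{rem:stwml_le2}, bounding the number of irregular gadgets of each tree by two; Lemma~\ref{lem:stwml_intersectionedgegadget}, pinning down the two possible local pictures at every regular gadget; Lemma~\ref{clm:stwml_ham}, forbidding a Hamiltonian path across a gadget from a $u$-side vertex to a $v$-side vertex; and the propagation of Lemma~\ref{lem:stwml_normal} and Lemma~\ref{lem:stwml_1edgetoZ}, which force extra edges into $\delta_T(Z)$ whenever a special vertex of a (near-)regular vertex has degree one.

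The aim of that analysis is to show that forcing both $u_1$ and $u_2$ to toggle would either create a second leaf outside $\{s_1,s_2\}$ or a cycle in one of $T_1,T_2$ (violating the tree or three-leaf constraint), or would push $|\delta_T(Z)|$ above the value permitted by the counting in the proof of Lemma~\ref{lem:stwml_k+1}. Distinguishing whether the two edges lie in one gadget or two, and tracking how the single bridge removal $e_1$ must be repaired by $e_2$ across the cut it induces, is where I expect the bulk of the work; this is the main obstacle, the preceding reductions being essentially bookkeeping. Once the possibility of both $u_1$ and $u_2$ toggling is excluded, at most one vertex changes status, giving $|S(T_1)\triangle S(T_2)|\le 1$, as required.
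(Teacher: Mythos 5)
Your reduction to a local analysis is sound and matches the opening of the paper's own proof: since no two special vertices of a gadget are adjacent (the paper records the stronger Remark~\ref{rem:stwml_distance3}, distance at least three), each of the two flipped edges changes the within-gadget degree of at most one special vertex, so $|S(T_1)\,\triangle\,S(T_2)|\le 2$, and your ``easy cases'' (special edges, one within-gadget edge, two edges touching special vertices of the same vertex of $G$) are handled exactly as in the paper. The problem is that the entire content of the lemma is the exclusion of the double toggle, and there your text stops being a proof: you list the lemmas you would combine and guess the flavor of the contradiction (``a second leaf outside $\{s_1,s_2\}$, a cycle, or too many edges into $Z$''), but you never carry out the analysis. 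Saying that ``tracking how the bridge removal $e_1$ must be repaired by $e_2$'' is where the bulk of the work lies is a plan, not an argument, and it is precisely the part the paper spends two pages on.

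Concretely, the missing chain is the following, and none of it falls out of bookkeeping. First, the double toggle must be reduced to the one-in-one-out case $S(T_2)=(S(T_1)\setminus\{v\})\cup\{u\}$ via the sandwich $k\le |S(T)|\le k+1$ (Lemma~\ref{lem:stwml_k+1} together with $k$ being the minimum vertex cover size); your sketch never rules out both vertices being added or both removed, and without the size bound it cannot. Second, a degree analysis of the flip (via Remarks~\ref{rem:stwfl_nots1s2} and~\ref{rem:stwml_distance3}) pins down that the removed edge $f$ joins a degree-two to a degree-three vertex of $T_1$ and the added edge $g$ joins a degree-one to a degree-two vertex, so the gadget of $f$ carries the unique branching node and the gadget of $g$ the unique third leaf; this is what eliminates the paper's Case 1 (the removal of $f$ deleting a vertex from $S$), through the dedicated Claim~\ref{clm:stwml_local2enter0}, which is an ad hoc structural statement you would also need to find and prove. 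Third, in the surviving case one must determine the exact pictures around both affected gadgets (Figure~\ref{fig:stwml_lastlemma}) to get $|\delta_{T_1}(\mathcal{G}_e)|=|\delta_{T_1}(\mathcal{G}_{e'})|=3$, then prove by a vertex-cover argument --- treating the edges $uv$ and $uu'$ separately --- that $|S(T_1)|$ equals $k+1$ exactly, and only then does the counting close: $k-2$ regular vertices of $S(T_1)$ contribute $2k-4$ edges to $\delta_{T_1}(Z)$ by Lemma~\ref{lem:stwml_normal}, the two gadgets contribute six more via Lemma~\ref{lem:stwml_1edgetoZ}, plus $s_1z_1$ and $s_2z_{k+1}$, giving $2k+4>2k+2=|\delta_{T_1}(Z)|$. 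Your proposal names the right toolbox but proves none of these steps, so it has a genuine gap exactly where the lemma is hard.
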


\begin{proof}
We want to prove that $S(T_2)=S(T_1)$ or there exists $x$ such that $S(T_2) = S(T_1) \setminus \{x\}$ or $S(T_2) = S(T_1) \cup \{x\}$. In order to prove it, the rest of the proof is devoted to show that, if after some edge flip, a vertex is added to $S(T_2)$ then no vertex of $S(T_1)$ is removed in $S(T_2)$. We claim that it is enough to conclude. Indeed, since $|S| \le k+1$ by Lemma~\ref{lem:stwml_k+1} and $|S| \ge k$ (since $k$ is the minimum size of a vertex cover), if we want the symmetric difference to be at least two, then we must contain at least one vertex in $S(T_1) \setminus S(T_2)$ and conversely.
Let us now assume by contradiction that $|S(T_1) \setminus S(T_2)| = 1$ and $|S(T_2) \setminus S(T_1)| = 1$. Let $f$ be the edge of $T_1 \setminus T_2$ and $g$ be the edge of $T_2 \setminus T_1$. Let $u = S(T_2)\setminus S(T_1)$ and $v=S(T_1)\setminus S(T_2)$. Note that in order to modify $S(T)$ (for some tree $T$), we need to modify the degree of a special vertex in an edge-gadget of an edge of $G$ incident to it. So both $f$ and $g$ have to have both endpoints in the same edge-gadget. And the following remark ensures that the addition deletion of $f$ and $g$ can only modify by one vertex the set $S$. 
In particular, it implies that $|S(T_1) \, \triangle \, S(T_2)| \le 2$ ) 

\begin{remark}\label{rem:stwml_distance3}
Let $a,b$ be two special vertices in the same edge-gadget. The distance between $a$ and $b$ is at least three in $H(G)$.
\end{remark}

Remark~\ref{rem:stwml_distance3} ensures that, if we remove or add an edge of $T$, the degree of exactly one entering or exit vertex is modified. Since $S(T_2) \setminus S(T_1)$ and $S(T_1) \setminus S(T_2)$ are non empty, an entering or exit vertex of $u$ or $v$ has to be incident to $f$ and an entering or exit vertex of the other vertex of $u$ or $v$ has to be incident to $g$. By abuse of notation we will say that $f$ (resp. $g$) \emph{adds $u$ to $S(T_2)$} (resp. \emph{remove $v$ from $S(T_1)$}).

Since the edge $f$ (resp. $g$) adds $u$ or remove $v$, it has to have both endpoints in the same edge-gadget. Indeed, in order to add $u$ to $S(T_2)$ (or remove $v$ from $S(T_1)$) we must modify the degree of $x_v^e$ or $y_v^e$ (resp. $x_u^e$ or $y_u^e$) inside an edge-gadget.

Now let us distinguish cases depending on the degree of the endpoints of $f$. If both endpoints of $f$ are of degree two, then the deletion of $f$ creates two vertices of degree one. By Remark~\ref{rem:stwfl_nots1s2}, at most one of them is a leaf in $T_2$. So $g$ has to be incident to one of them. And by Remark~\ref{rem:stwml_distance3}, the edge $g$ cannot be incident to another special vertex of the edge-gadget. And thus $g$ does not add or remove a good vertex, a contradiction.

If one endpoint of $f$ has degree three and one has degree one, then the deletion of $f$ creates a vertex of degree zero. Thus $g$ must be incident to the degree zero vertex. Again, by Remark~\ref{rem:stwml_distance3}, $g$ cannot add or remove another vertex of $S(T_1)$, a contradiction.
Note that we get a similar contradiction if one endpoint of $f$ has degree two and the other has degree one.

So we can assume that one endpoint of $f$ has degree two and the other has degree three.
The edge $g$ cannot be added between two vertices of degree at least two in $T_1 \setminus f$ since otherwise $T_2$ would have two branching nodes. So at least one endpoint of $g$ (and even exactly one by Remark~\ref{rem:stwfl_nots1s2}) has degree one in $T_1 \setminus f$. By Remark~\ref{rem:stwml_distance3}, the endpoint of $g$ of degree one was already of degree one in $T_1$ since $g$ has to modify $S$. Moreover, the other endpoint of $g$ has degree exactly two in $T_1 \setminus f$ (otherwise we would create a vertex of degree four in $T_2$), and then by Remark~\ref{rem:stwml_distance3} has degree two in $T_1$. So in particular, the edge-gadget containing $f$ has one vertex of degree three and all the others have degree two and the edge-gadget containing $g$ has one vertex of degree one and all the others have degree two in $T_1$. 
Note that the deletion of $f$ can have two effects on $S(T_1)$: either a vertex disappears (because the degree of a special vertex drops from one to zero), or a vertex appears (because the degree of a special vertex drops from two to one).
\medskip

\noindent
\textbf{Case 1.} $v$ is removed from $S(T_1)$ when $f$ is removed. \\
Let $e=wv$ be the edge such that $f$ has both endpoints in $\mathcal{G}_e$. Let $R$ be the subgraph induced by the vertices of $\mathcal{G}_e$ and $T'$ the restriction of $T_1$ on $R$. Since $v$ is removed from $S(T_1)$, it implies that $x_v^e$ or $y_v^e$ have degree one in $T'$ and $f$ is incident to that vertex. Up to symmetry let us assume that it is $x_v^e$. If the edge $f$ is not $x_v^er_1$, then $r_1$ is a leaf of $T_1$, a contradiction since the degree one vertex has to be in the edge-gadget containing $g$. 
So the only edge of $T'$ incident to $x_v^e$ is $x_v^er_1$ and then $f=x_v^er_1$. Since one the two endpoints of $f$ has degree three in $T_1$ and $r_1$ has degree two in $H(G)$, there are two edges of $\delta(\mathcal{G}_e)$ incident to $x_v^e$.

\begin{claim}\label{clm:stwml_local2enter0}
Let $\mathcal{G}_e$ with $e=vw$ be an edge-gadget and $R$ be the subgraph of $H$ induced by the vertices of $\mathcal{G}_e$. There does not exist any tree $T$ such that, in the subgraph of $T$ induced by the vertices of $R$, all the local vertices but $r_1$ have degree two, $x_v^e$ has degree zero and $y_v^e$ has degree two.
\end{claim}

\begin{proof}
Let us denote by $T'$ the subgraph of $T$ induced by the vertices of $R$.
Since all the local vertices but $r_1$ have degree two and $y_v^e$ has degree two in $T'$, $T'$ contains the paths $r_1r_2$, $x_w^er_5r_6$ and $r_3r_4y_v^er_7r_8y_w^e$. Since $x_v^er_6$ is not an edge of $T$ (because we assumed that $x_v^e$ has degree zero in $R$) and $r_7$ does not have degree three, $r_6$ is a leaf of $T$, a contradiction.
\end{proof}

When $f$ is removed from $T_1$, $u$ is removed from $S$, thus $y_v^e$ has degree zero or two in $T'$. Since all the local vertices of $R$ have degree two in $T_1$ and $r_4$ has degree two in $H(G)$, both edges incident to it are in $T'$. And then $y_v^e$ does not have degree zero. So by Claim~\ref{clm:stwml_local2enter0}, the edge-gadget must contain another vertex of degree three or another leaf, a contradiction.
\medskip

\begin{figure}[bt]
    \centering
    \includegraphics[scale=.75]{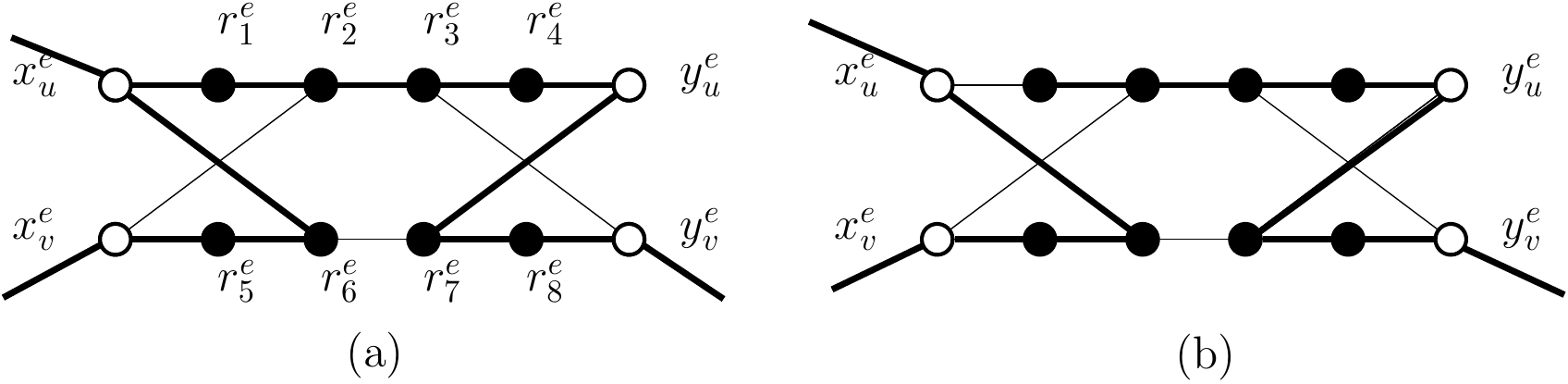}
    \caption{Illustration of the proof of Lemma~\ref{lem:stwml_lastlemma}.}
    \label{fig:stwml_lastlemma}
\end{figure}

\noindent
\textbf{Case 2.} $u$ is added to $S(T_1)$ when $f$ is removed.

Let $e=uv$ be the edge such that $f$ is in $\mathcal{G}_e$. Let $R$ be the subgraph induced by the vertices of $\mathcal{G}_e$. In that case, the vertices $x_u^e$ and $y_u^e$ have degree two in $R \cap T_1$ (if one of them was of degree one, $u$ was already in $S$ and none of them can be of degree zero, otherwise one local vertex should be a leaf, a contradiction since the leaf is in the edge-gadget containing $g$). Since all the local vertices have degree two or three, it implies that $x_v^er_5r_6x_u^er_1r_2$ and $y_v^er_8r_7y_u^er_4r_3$ are in $T_1$. But then $r_2r_3$ must be in $T_1$, otherwise they would be leaves of $T_1$ ($x^e_v r_2 \not\in E(T_1)$ otherwise $x^e_vr_5r_6x^e_ur_1r_2$ is a cycle (same for $r^e_3y^e_u$)). Moreover $r_6r_7$ cannot be an edge since otherwise there is a cycle (and both $r_6,r_7$ would have degree three). So the endpoint of $f$ of degree three has to be $x_u^e$ or $y_u^e$, w.l.o.g $x_u^e$.
So the graph around $\mathcal{G}_e$ is the graph represented in Figure~\ref{fig:stwml_lastlemma}(a). Note in particular that $|\delta_{T_1}(\mathcal{G}_{e})|=3$ since $x_{v}^{e}$ and $y_{v}^{e}$ have degree two in $T_1$ and $x_u^e$ has degree three in $T_1$.

Let $e'$ be the edge such that $g$ is in $\mathcal{G}_{e'}$ with $e'=u'v'$. Recall that $u'$ or $v'$ is removed from $S$ when $g$ is added. So we can assume without loss of generality that $g$ is incident to $x_{u'}^e$.
Let $R'$ be the subgraph induced by the vertices of $\mathcal{G}_{e'}$. All the vertices in the edge-gadget $\mathcal{G}_{e'}$ have degree two in $T_1$ but one vertex which has degree one. Moreover, $g$ is an edge between a vertex of degree two and a vertex of degree one. Since $u'$ is removed from $S$ when we add $g$, $x_{u'}^{e'}$ has degree exactly one in the restriction of $T_1$ to $R'$.

Let us first assume that $x_{u'}^{e'}r_1$ is in $T_1$ and then $x_{u'}^{e'}r_6$ is not in $T_1$ (i.e. $g = x^{e'}_ur_6$). Since all the local vertices but maybe $r_6$ have degree two and $y_{u'}^{e'}$ has degree two, all the subpaths $r_3r_4y_{u'}^{e'}r_7r_8y_{v'}^{e'}$, $x_{v'}^{e'}r_5r_6$ and  $x_{u'}^{e'}r_1r_2$ are in $T_1$. Since $r_3$ must have degree two in $T_1$ and $r_3y_{v'}^{e'}$ closes a cycle, $r_2r_3$ is in $T_1$. Since $x_{u'}^{e'}r_6$ is not in $T_1$ by assumption and $\mathcal{G}_{e'}$ does not contain any vertex of degree three, $r_6$ is a leaf of $T_1$ and then $|\delta_{T_1}(\mathcal{G}_{e'})|=3$ since $x_{u'}^{e'}$, $x_{v'}^{e'}$ and $y_{v'}^{e'}$ have degree two in $T_1$.

Let us now assume that $x_{u'}^{e'}r_1$ is not in $T_1$ and then $x_{u'}^{e'}r_6$ is (i.e.\ $g=x^{e'}_{u'}r_1$). Since all the local vertices but $r_1$ have degree two and $y_{u'}^{e'}$ has degree two, all the subpaths $r_3r_4y_{u'}^{e'}r_7r_8y_{v'}^{e'}$, $x_{v'}^{e'}r_5r_6x_{u'}^{e'}$ and  $r_1r_2$ are in $T_1$. Since $r_3$ must have degree two in $T_1$ and $r_3y_{v'}^{e'}$ closes a cycle, $r_2r_3$ is in $T_1$. Since $r_1$ is a leaf of $T_1$, $x_{u'}^{e'}$ has degree two in $T_1$. And then $|\delta_{T_1}(\mathcal{G}_{e'})|=3$ since $x_{u'}^{e'}$, $x_{v'}^{e'}$ and $y_{v'}^{e'}$ have degree two in $T_1$. The graph around $\mathcal{G}_{e'}$ is the graph represented in Figure~\ref{fig:stwml_lastlemma}(b).

So in both cases ($x_{u'}^{e'}r_1$ or $x_{u'}^{e'}r_6$ in $T_1$), we have $|\delta_{T_1}(\mathcal{G}_{e'})|=3$. Moreover, we have seen that $|\delta_{T_1}(\mathcal{G}_{e})|=3$.

We claim that $S(T_1)$ has size $k+1$. 
Recall that $f=uv$ and $g=u'v'$. Let us show that $S(T_1) \setminus \{ u \}$ is a vertex cover. For every edge $uw$ with $w \ne u',v$, since  $S(T_2)=(S(T_1) \cup \{u\}) \setminus \{u'\}$ is a vertex cover, $w$ is in $S(T_1)$. So if an edge is not covered in $S(T_1) \setminus \{ u \}$, it is $uu'$ or $uv$. After the edge flip, $x_u^{uv}$ and $y_u^{uv}$ have even degree in $T_2$ and thus $x_v^{uv}$ or $y_v^{uv}$ has degree one by Lemma~\ref{lem:stwml_S_VC}. Since neither $f$ nor $g$ changes the degree of $x_v^{uv}$ nor $y_v^{uv}$, $v \in S(T_1)$. So if an edge is not covered, it is $uu'$. But, since $u' \notin S(T_1)$, in the restriction of $T$ to $\mathcal{G}_{uu'}$, either $x_u^{uu'}$ or $y_u^{uu'}$ has degree one and this degree does not change after the edge flip, a contradiction since $u \notin S(T_2)$, so $uu'$ does not exist and then $S(T_1) \setminus \{ u \}$ is a vertex cover.
Since a minimum vertex cover has size $k$, $S(T_1)$ has size at least $k+1$ and then exactly $k+1$ by Lemma~\ref{lem:stwml_k+1}.

So $k-2$ vertices of $S(T_1)$ are not incident to any irregular edge-gadgets. (Indeed, there are at most four irregular vertices and $u' \notin S(T_1)$ is one of them. By Lemma~\ref{lem:stwml_normal}, this gives $2k-4$ edges in $\delta(Z)$. Since, for both $\mathcal{G}_e$ and $\mathcal{G}_{e'}$, there are three edges leaving the gadget and since $e$ and $e'$ are endpoint disjoint, this creates $6$ more edges incident to $Z$. Since there are moreover the two edges $s_1z_1$ and $s_2z_{k+1}$ in $\delta(Z)$. So in total, that gives $2k+4$ edges in $\delta(Z)$, a contradiction with the fact that all the vertices of $Z$ must have degree two. 
\end{proof}

Lemmas~\ref{lem:stwml_k+1} and~\ref{lem:stwml_lastlemma} immediately implies the following:

\begin{lemma}
If there is an edge flip reconfiguration sequence between two spanning trees $T_1$ and $T_2$, then there is a {\sf TAR}reconfiguration sequence (with threshold $k+1$) between $S(T_1)$ and $S(T_2)$.
\end{lemma}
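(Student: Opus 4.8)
The plan is to simply assemble the three facts already established about the map $T \mapsto S(T)$ and read them as the three defining conditions of a valid \TAR reconfiguration sequence. Start from an edge flip reconfiguration sequence $\langle T_1 = P_0, P_1, \ldots, P_r = T_2 \rangle$ in which every $P_i$ is a spanning tree of $H(G)$ with at most three leaves and consecutive trees are adjacent. I would apply the map $S(\cdot)$ termwise to obtain the sequence of vertex sets $S(P_0), S(P_1), \ldots, S(P_r)$ of $G$, and argue that (after a trivial cleanup) this is exactly the desired sequence in $G$.

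First I would check feasibility of every term. By Lemma~\ref{lem:stwml_S_VC}, each $S(P_i)$ is a vertex cover of $G$, and by Lemma~\ref{lem:stwml_k+1} it satisfies $|S(P_i)| \le k+1$. Since $k = \tau(G)$ is the size of a minimum vertex cover, we also have $|S(P_i)| \ge k$, so every set in the sequence is a vertex cover of size $k$ or $k+1$; in particular every term respects the threshold $k+1$ required by \textsc{Minimum \TAR-Vertex Cover Reconfiguration}.

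Next I would verify the adjacency of consecutive terms. Since $P_i$ and $P_{i+1}$ are adjacent spanning trees with at most three leaves, Lemma~\ref{lem:stwml_lastlemma} gives $|S(P_i) \, \triangle \, S(P_{i+1})| \le 1$. There are then exactly two possibilities: either $S(P_i) = S(P_{i+1})$, in which case the step is a repetition that can be deleted from the sequence; or $|S(P_i) \, \triangle \, S(P_{i+1})| = 1$, meaning $S(P_{i+1}) = S(P_i) \cup \{x\}$ or $S(P_i) = S(P_{i+1}) \cup \{x\}$ for a single vertex $x$, which is precisely a \TAR step. Deleting all repeated consecutive terms therefore turns the image sequence into one in which any two consecutive vertex covers are \TAR-adjacent, all of size at most $k+1$, starting at $S(T_1)$ and ending at $S(T_2)$. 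This is exactly a \TAR reconfiguration sequence with threshold $k+1$ between $S(T_1)$ and $S(T_2)$.

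The point worth emphasizing is that there is no genuine obstacle left at this stage: all the hard work is already packaged in Lemmas~\ref{lem:stwml_k+1} and~\ref{lem:stwml_lastlemma}, and the only things to be careful about are the bookkeeping items, namely that a symmetric difference of size one matches the \emph{\TAR-adjacent} definition (addition or removal of a single token), that a symmetric difference of size zero is a harmless repetition to be suppressed, and that the uniform size bound $k \le |S(P_i)| \le k+1$ guarantees the whole trajectory stays within the threshold $k+1$. Hence the lemma follows directly.
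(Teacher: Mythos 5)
Your proposal is correct and matches the paper's intent exactly: the paper states this lemma as an immediate consequence of Lemmas~\ref{lem:stwml_k+1} and~\ref{lem:stwml_lastlemma} (with Lemma~\ref{lem:stwml_S_VC} guaranteeing each $S(P_i)$ is a vertex cover), and your argument just spells out the same termwise application of $S(\cdot)$ with the necessary bookkeeping (suppressing repeated consecutive terms and checking the threshold). Nothing is missing and no different route is taken.
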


\subsubsection{VCR to ST-reconfiguration}

We now prove the converse of the previous subsection\footnote{The statement will not be exactly the converse but it will actually be enough to conclude.}. We will prove that if there is a {\sf TJ}-transformation sequence  between two vertex covers $X$ and $Y$ then we also have an edge flip reconfiguration sequence between Hamiltonian paths corresponding to $X$ and $Y$. Let $X,Y$ be two vertex covers of size $k$. In the {\sf TJ}-adjacency rule, $X$ and $Y$ are adjacent if there exist two vertices $x$ and $y$ such that $Y= (X \setminus \{x\}) \cup \{y\}$. 

We have already remarked that there might be a lot of  Hamiltonian paths associated with a vertex cover $X$ in $H(G)$. Note that, in all these paths, for every $u \in X$, the subpath $P_x$ between the first entering vertex of $u$ and the last exit vertex of $u$ is the same. However {\em (i)} the order in which these subpath appear in the path may differ (depending in which ordering they are attached to $Z$); {\em (ii)} when we follow the path from $s_1$ to $s_2$ we might see the path in the ordering of $P_x$ or in the reverse ordering depending if the first vertex of $Z$ incident to $P_x$ is incident to the first entering vertex of the last exit vertex. The goal of the proof consists of showing that, if we have one of them, then we can reach all of them, i.e. change the order of appearance of the paths $P_x$ and reverse their ordering. The first part of this section consists of proving that they all are in the same connected component of the reconfiguration graph. Let us first show the following intermediate lemma.

\begin{lemma}\label{lem:stwml_bip}
Let $A,B$ be two sets such that $|A| = |B|+1$ and $G$ be the bipartite graph $\mathcal{B}$ on vertex set $(A,B \cup \{s_1,s_2\})$ where $A$ is complete to $B$ and $s_1,s_2$ be two vertices of $B$, each connected to exactly one (distinct) vertex of $A$. Let $P_1,P_2$ be two Hamiltonian paths with the same endpoints $s_1,s_2$. 
Then one can transform $P_1$ into $P_2$ via edge flips where all the intermediate spanning trees have at most three leaves.
\end{lemma}

\begin{proof}
We say that two paths $P,P'$ on the same vertex set {\em agree up to $i \in \mathbb{N}$} if the first $i$ vertices of  $P$ and $P'$ are the same. Note that $P_1,P_2$ agree up to $2$ since both start with $s_1$ and $s_1$ only have one neighbor in $\mathcal{B}$. We prove iteratively that if we have two paths that agree up to $i$, then we can transform the second into two paths that agree up to $i+1$.

Assume that $P_1$ and $P_2$ agree up to $i$. Let $u$ be the $i$-th vertex and $v$ be the $(i+1)$-th in $P_1$. If $v$ also is the $(i+1)$-th vertex in $P_2$, the conclusion holds. So we can assume that the $(i+1)$-th vertex of $P_2$ is $y \ne v$. Let $w$ be the vertex after $v$ in $P_2$. Note that it cannot be $y$ since both $y$ and $v$ are in the same set of $A,B$.
We perform the following edge flips in $P_2$: we remove $uy$ to create $uv$. We then remove $vw$ to create $yw$. 

After these two operations, all the vertices have degree two. Moreover the intermediate and final graphs are connected. Indeed, since $u,y,v$ appears in $P_2$ in that order, the removal of $uy$ to create $uv$ keeps a connected graph. And one can remark that the two operations just consists in permuting the subpath between $y$ and $v$ in $P_2$. To conclude, we have to prove that the edges we want to create indeed exist in $\mathcal{B}$. Since $B$ is complete to $A$, if $u$ and $w$ are in $B$, the conclusion follows. So we can assume that they are in $A \cup \{ s_1,s_2 \}$. Since $A$ is complete to $B$, and $u$ is distinct from $s_1$, and $y,v \in B$ (since they are not the last vertices of $P_1$ and $P_2$), the only edge that might not exist is $yw$ if $w = s_2$. But it is impossible since $s_2$ only have one neighbor in $\mathcal{B}$ and then the second to last vertex  of $P_1$ and $P_2$ are the same, i.e. $y$ cannot be incident to $u$ in $P_2$.
\end{proof}

Using this lemma, let us prove the following:

\begin{lemma}\label{lem:stwfl_associatedpaths}
Let $G=(V,E)$ be a graph and $X$ be minimum vertex cover of $G$. Then all the Hamiltonian paths associated with $X$ in $H(G)$ are in the same connected component of the reconfiguration graph of spanning trees with at most three leaves.
\end{lemma}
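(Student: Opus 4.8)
The plan is to prove Lemma~\ref{lem:stwfl_associatedpaths} by reducing it to Lemma~\ref{lem:stwml_bip}. First I would recall the structure of a Hamiltonian path $P$ associated with the minimum vertex cover $X=\{v_1,\ldots,v_k\}$. For each $u \in X$, all the special edges of $u$ together with the internal gadget edges (given by Figure~\ref{fig:stwl_reduction2}(a) or (b)) form a fixed subpath $P_u$ running from the first entering vertex of $u$ to the last exit vertex of $u$; crucially, $P_u$ itself is \emph{identical} in every Hamiltonian path associated with $X$, since it is determined entirely by the local structure of the gadgets incident to $u$ and does not depend on how the subpaths are attached to $Z$. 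The only freedom in choosing a Hamiltonian path associated with $X$ is therefore \emph{(i)} the order in which the $k$ subpaths $P_{v_1},\ldots,P_{v_k}$ are threaded through the vertices $z_1,\ldots,z_{k+1}$ of $Z$, and \emph{(ii)} for each subpath, which of its two endpoints (first entering vertex or last exit vertex) is attached to the smaller-indexed vertex of $Z$, i.e. the orientation of $P_u$.

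Second, I would contract each subpath $P_u$ to a single ``super-vertex'' $a_u$ and observe that the resulting quotient structure is exactly the bipartite graph $\mathcal{B}$ of Lemma~\ref{lem:stwml_bip}: take $A=\{a_{v_1},\ldots,a_{v_k}\}$ together with $s_1,s_2$, wait—more precisely, take $A$ to be the set of the $k$ super-vertices plus one extra handling for $s_1,s_2$, and $B=\{z_1,\ldots,z_{k+1}\}$. Since in $H(G)$ every entering/exit vertex attached to $Z$ is adjacent to \emph{all} of $Z$, after contraction $A$ is complete to $B$, and $s_1,s_2$ are each adjacent to exactly one vertex of $A$ (namely via $z_1$ and $z_{k+1}$), matching the hypotheses of Lemma~\ref{lem:stwml_bip} with $|A|=|B|$... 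I must be careful with the exact cardinality bookkeeping here, since the lemma requires $|A|=|B|+1$; I would set up the correspondence so that the $k$ contracted subpaths plus the degenerate handling of $s_1,s_2$ play the role of $A$ and the $k+1$ vertices of $Z$ play the role of $B$, and verify the count works out (there are $k+1$ ``slots'' in $Z$ to be filled by $k$ subpaths plus the two pendant attachments to $s_1,s_2$).

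Third, the key point is that each edge flip performed by Lemma~\ref{lem:stwml_bip} in the contracted graph $\mathcal{B}$ lifts to a single edge flip in $H(G)$ acting only on the special edges incident to $Z$, leaving every subpath $P_u$ intact. Because the flip in $\mathcal{B}$ removes one edge $z_ia_u$ and adds another edge of the form $z_ja_u$ or $z_ia_{u'}$, and these correspond in $H(G)$ to removing/adding an edge between a vertex of $Z$ and a (first entering or last exit) special vertex of a subpath, each intermediate graph is again a Hamiltonian path, hence has exactly two leaves, and in particular at most three leaves as required. A flip that reverses the orientation of a subpath $P_u$ corresponds to swapping which endpoint of the super-vertex $a_u$ is attached to the adjacent $Z$-vertex; this too is realized by the two-flip ``subpath permutation'' move of Lemma~\ref{lem:stwml_bip}. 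Thus any two Hamiltonian paths associated with $X$ are connected by a sequence of edge flips through spanning trees with at most three leaves.

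The main obstacle I anticipate is \emph{not} the topological argument but the bookkeeping needed to make the contraction a clean instance of Lemma~\ref{lem:stwml_bip}: I must verify that the adjacencies survive contraction exactly (complete bipartiteness of $A$ to $B$, and the pendant condition on $s_1,s_2$), that the cardinality constraint $|A|=|B|+1$ is met under the chosen correspondence, and—most delicately—that \emph{reversing the orientation} of an individual subpath $P_u$ is actually reachable via moves of the type permitted by Lemma~\ref{lem:stwml_bip}. The orientation-reversal issue is subtle because two Hamiltonian paths associated with $X$ may differ not just in the order of the subpaths but in whether a given $P_u$ is traversed forwards or backwards; I would need to confirm that the ``$s_1,s_2$'' endpoints of $\mathcal{B}$ give enough flexibility to flip a subpath's orientation in place, possibly by detouring through an intermediate configuration, rather than assuming it is immediate.
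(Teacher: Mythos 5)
Your overall route---contract each subpath $P_u$ to a super-vertex, recognize the quotient as the bipartite graph of Lemma~\ref{lem:stwml_bip}, and lift its flips back to $H(G)$---is exactly the strategy of the paper's proof. But the two points you flag as ``bookkeeping'' are precisely where your write-up fails, and one of them is a genuine gap. First, the bipartition: you place the $k$ super-vertices on the $A$-side and $Z$ on the $B$-side, which gives $|B|=|A|+1$, the wrong direction; and your proposed repair (absorbing $s_1,s_2$ into $A$) violates the hypothesis that $A$ be complete to $B$, since $s_1$ is adjacent only to $z_1$. The correct instantiation is the reverse of yours: take $A:=Z$ (of size $k+1$) and $B:=$ the $k$ super-vertices, with $s_1,s_2$ appended to the $B$-side as pendant vertices on $z_1$ and $z_{k+1}$ respectively; then $|A|=|B|+1$, $A$ is complete to $B$ because the first entering vertex and the last exit vertex of each $P_u$ are joined to all of $Z$, and the hypotheses of Lemma~\ref{lem:stwml_bip} hold verbatim. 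This is fixable, but as written your reduction does not meet the lemma's hypotheses.

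Second, and more seriously: reversing the orientation of a subpath $P_u$ is \emph{not} ``realized by the two-flip subpath permutation move of Lemma~\ref{lem:stwml_bip}'', as you assert in your third step. Contraction erases the orientation of $P_u$, so two Hamiltonian paths associated with $X$ that thread the subpaths in the same order but traverse some $P_u$ in opposite directions project to the \emph{same} path of $\mathcal{B}$, and Lemma~\ref{lem:stwml_bip} gives you nothing to connect them. You correctly identify this as the delicate point in your closing paragraph, but you leave it unresolved, so the proof is incomplete exactly there. The paper closes it with a separate explicit move performed after the lemma has been applied: if $z_i$ is attached to the first entering vertex $x$ of $P_u$ and $z_{i+1}$ to its last exit vertex $y$, while the target attaches them the other way, one removes $z_ix$ and adds $z_iy$, then removes $z_{i+1}y$ and adds $z_{i+1}x$, which reverses $P_u$ in place. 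Note that the intermediate tree in this move has three leaves, which also corrects your claim that every intermediate graph is again a Hamiltonian path: that is false both here and between the two flips of each permutation move inside Lemma~\ref{lem:stwml_bip} (after the first flip one vertex has degree three and one has degree one), but a third leaf is exactly what the budget of the statement permits.
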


\begin{proof}
Let $k=|X|$. Let us denote by $A$ the set $Z$ of $H(G)$ and by $B$ the set $X$. Note that by construction $|A|=|B|+1$. We now add two new vertices $s_1,s_2$ one connected to $z_1$ and the other connected to $z_{k+1}$ and create all the edges between $A$ and $B$. We denote by $\mathcal{B}$ the resulting graph that satisfies the condition of Lemma~\ref{lem:stwml_bip}. Now one can associate with any Hamiltonian path associated with $X$ a path of $\mathcal{B}$ where $x$ in $\mathcal{B}$ is connected to $z,z'$ in $A$ if $z$ and $z'$ are the vertices of $Z$ attached to the first and last entering and exit vertices of $x$. By Lemma~\ref{lem:stwml_bip}, one can transform any path of $\mathcal{B}$ into any other. We claim that such a transformation can be immediately extended for the Hamiltonian paths of $H(G)$. Indeed, by definition of a Hamiltonian path of $H(G)$ associated with  $X$ the subpath $P_u$ between the first entering vertex of $u$ and the last exit vertex of $u$ (for $u \in X$) does not contain any other entering or exit vertex of vertices of $X$ and only contain degree two vertices. So the connectivity of the graph as well as its non-degree two vertices remain the same if can contract $P_u$ into a single vertex $u$. 

After this operation, we know that in the resulting Hamiltonian path, the subpaths associated with each vertex appear in the same ordering. However, it might be the case that in some path $z_i$ is connected to the first entering vertex $x$ of $u \in X$ and $z_{i+1}$ to the last exit vertex $y$ of $u$ and that we have the converse in the other path. In other words, instead of ``reading'' the path from the first entering vertex to the last exit vertex we ``read'' it in the other direction.
In that case, for every such $i$, we perform the following edge flips: remove $z_iu$ to create $z_iv$; and then remove $z_{i+1}v$ to create $z_{i+1}u$. 
\end{proof}

Let us now prove that if we are given any {\sf TJ}-transformation between two vertex covers $X$ and $Y$ can be adapted into an edge flip transformation between the corresponding Hamiltonian paths via spanning trees of at most three nodes. In order to prove it, we simply have to prove that we can do it for each single step transformation.

\begin{lemma}\label{lem:stwml_vctoham}
Let $X$ be a minimum vertex cover of $G$ and $Y= (X \setminus \{ u \}) \cup \{ v \}$ be another vertex cover, for some vertices $u$ and $v$. Then we can transform any Hamiltonian path associated with $X$ into any Hamiltonian path associated with $Y$ via a sequence of spanning trees with at most three leaves.
\end{lemma}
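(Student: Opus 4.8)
The plan is to realise the single Token Jumping step $X \to Y$ in the spanning-tree world by passing through a spanning tree $T^{+}$ with exactly three leaves whose associated vertex cover is $S(T^{+}) = X \cup \{v\}$ (of size $k+1$), in two symmetric phases: first ``add $v$'' to go from a Hamiltonian path associated with $X$ to $T^{+}$, then ``remove $u$'' to go from $T^{+}$ to a Hamiltonian path associated with $Y$. The load-bearing structural observation is that $uv \in E(G)$. Indeed, since $X$ is minimum it is minimal, so $u$ has a private edge $uw$ with $w \notin X$; for $Y = (X \setminus \{u\}) \cup \{v\}$ to cover $uw$ while $u \notin Y$ we must have $w \in Y$, and since $w \notin X$ this forces $w = v$. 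The same minimality argument shows that every neighbour of $u$ other than $v$ lies in $X$, and since $v \notin X$ every neighbour of $v$ lies in $X$ as well. Consequently $X \setminus \{u\} = Y \setminus \{v\}$, so the two covers ``differ only around the edge $uv$''.

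Next I would reduce to convenient canonical representatives. By Lemma~\ref{lem:stwfl_associatedpaths} it suffices to transform \emph{one} Hamiltonian path associated with $X$ into \emph{one} associated with $Y$, so I may move freely inside each connected component. I would pick canonical paths sharing a common prefix: the $k-1$ subpaths threading the vertices of $X \setminus \{u\} = Y \setminus \{v\}$ are attached to $Z$ in exactly the same way in both paths, and only the tail threading $u$ (in the $X$-path) is replaced by the tail threading $v$ (in the $Y$-path), with the gadget $\mathcal{G}_{uv}$ processed last. With this choice, by Lemma~\ref{lem:stwml_intersectionedgegadget} each gadget $\mathcal{G}_{vw}$ with $w \neq u$ is in configuration Figure~\ref{fig:stwl_reduction2}(b) on the $w$-side in the $X$-path and must become configuration~(a); that is, one only needs to \emph{add} the $v$-side threading while leaving the common $w$-side threading untouched. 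Symmetrically, each $u$-gadget only needs its $u$-side threading \emph{removed}.

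Then I would give the explicit edge-flip sequence. For phase one I grow the branch $P_v$ one gadget at a time from its free end: first a constant-size local surgery attaches the first entering vertex of $v$ to the appropriate vertex of $Z$ (creating the single degree-three vertex and a third leaf at the frontier), and then I repeatedly flip a constant number of edges inside the next gadget $\mathcal{G}_{vw}$ to pass from configuration~(b) to~(a), always advancing the unique loose end forward. Once the last gadget $\mathcal{G}_{uv}$ has been threaded on its $v$-side I reach $T^{+}$, which has three leaves ($s_1$, $s_2$ and the frontier leaf) and satisfies $S(T^{+}) = X \cup \{v\}$ by the definition of $S$ (and is a vertex cover by Lemma~\ref{lem:stwml_S_VC}). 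Phase two is the mirror image: I dismantle the $u$-side threading of $u$'s gadgets one at a time, reeling the loose end back in and rewiring the tail at the last vertices of $Z$, until the branch disappears and the tree is again a Hamiltonian path, now associated with $Y$.

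The main obstacle is maintaining the invariant ``spanning tree with at most three leaves'' at \emph{every} single flip, not merely at the gadget boundaries. Each within-gadget transition between configurations~(a) and~(b) temporarily perturbs several vertex degrees, and a careless ordering could momentarily create a fourth leaf, a cycle, or a disconnection. The key is to order the flips so that there is always exactly one loose end -- the frontier leaf of the branch being built or destroyed -- and that this end moves only monotonically through the $v$-gadgets (resp. the $u$-gadgets). I would verify this by checking the constant-size local pictures of $T$ around each gadget (Figure~\ref{fig:stwl_reduction2} and Figure~\ref{fig:stwml_lastlemma}), confirming connectivity and acyclicity of each intermediate tree and that no special vertex outside the current gadget changes degree. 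Because the common prefix is never touched, the degrees of the $z$-vertices and of all other special vertices stay fixed, which is exactly what keeps the leaf count at three throughout.
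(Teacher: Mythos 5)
Your proposal is correct and follows essentially the same route as the paper: after the same minimality argument showing $uv \in E(G)$ (and that all neighbours of $u$ other than $v$ lie in $X$, all neighbours of $v$ in $X$), you invoke Lemma~\ref{lem:stwfl_associatedpaths} to pick convenient canonical representatives and pass through a three-leaf intermediate tree --- your $T^{+}$ is exactly the paper's ``half-path'' $T_h$, with one dangling branch end as the third leaf and a single degree-three vertex --- built by threading $v$'s gadgets one at a time from configuration of Figure~\ref{fig:stwl_reduction2}(b) to (a) with a monotonically advancing frontier, then dismantling $u$'s side symmetrically after the one rewiring flip at $Z$. The invariant you flag (exactly one loose end, constant-size local surgeries preserving degree sequence and connectivity, common prefix untouched) is precisely how the paper verifies the three-leaf bound via the local transformations of Figure~\ref{fig:stwml_transformation1}.
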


\begin{proof}
By Lemma~\ref{lem:stwfl_associatedpaths}, all the Hamiltonian associated with $X$ are in the same connected component of the reconfiguration graph and the same holds for $Y$. So we simply have to show that there exists a transformation from a Hamiltonian path associated with $X$ into a Hamiltonian path associated with $Y$. First, observe that since $X$ and $Y$ are both minimum vertex covers of $G$ and $Y = (X \setminus \{u\}) \cup \{v\}$, $X \setminus \{u\}$ covers all the edges of $G$, but $uv$. In particular, all the neighbors of $u$ but $v$ are in $X$. Similarly, all the neighbors of $v$ but $u$ are in $Y$. Let $W= X \cap Y$  given with an arbitrary ordering of $W$. The \emph{canonical path associated with $W,u$} (resp. $W,v$) is the Hamiltonian path of $H(G)$ with the ordering $u$ (resp. $v$) and then the ordering of $W$. More formally, recall that given a vertex cover $W$, we can define a path $P_w$ for every $w \in W$ between the first entering vertex of $w$ and the last exit vertex of $w$ that does not contain any special vertex of $w' \in W$ with $w' \ne w$. And that any Hamiltonian path associated with $W$ is the concatenation of these paths linked together thanks to the vertices of $Z$. So the ordering of $W$ of a path $P$ is the ordering of appearance of the subpaths $P_w$ for $w \in W$. In particular, in the ordering of $W_u$, the subpath $P_u$ appears at the beginning of the path and then $P_u$ is connected to $z_1$ and $z_2$.

The {\em half-path $T_h$ associated with $W,u,v$} is the following.
For every edge-gadget $\mathcal{G}_e$ with $e$ distinct from the first edge of $v$, the restriction of $T_h$ around $\mathcal{G}_e$ is one of the graphs of Figure~\ref{fig:stwl_reduction2}; If both endpoints of $e$ are in $W \cup \{u,v \}=X \cup Y$, the gadget is the one of Figure~\ref{fig:stwl_reduction2}(a), otherwise it is the one of Figure~\ref{fig:stwl_reduction2}(b) (the edges of $\delta_{T_h}(\mathcal{G}_e)$ being incident to the entering and exit vertex of $W \cup \{u,v\}$). For the edge-gadget of $e'=vw$ (note that we possibly have $w=u$), the first edge of the ordering of $v$, the restriction of $T_h$ around $\mathcal{G}_{e'}$ is the graph $x_v^{e'}r_1r_2r_3r_4y_v^{e'}$ and $x_w^{e'}r_5r_6r_7r_8y_w^{e'}$ plus edges leaving $y_v^{e'}$, $x_w^{e'},y_w^{e'}$ but no edge leaving $x_v^{e'}$.

Let us now explain how the vertices of $Z$ are connected to entering and exit vertices. The vertex $z_1$ is incident to $s_1$ and the first vertex of $u$. The vertex $z_2$ is incident to the last exit vertex of $u$ and the last exit vertex of $v$. Moreover, the vertex $z_{i+1}$ is incident to the last exit vertex of the $i$-th vertex of $W$ and the first entering vertex of the $(i+1)$-th vertex of $W$. Finally the vertex $z_{k+1}$  is incident to $s_2$.

One can easily check that the following holds for $T_h$:
\begin{itemize}
    \item All the vertices of $T_h$ have degree two but $z_2$ that has degree three\footnote{$z_2$ is incident to the last exit vertex of $u$, the last exit vertex of $v$ and the first entering vertex of the first vertex of $W$.} and the first entering vertex of $v$ that has degree one\footnote{It is not connected to any vertex of $Z$.}.
    \item The subpath of $T_h$ from $s_1$ to $z_2$ is $s_1z_1$ and then the concatenation of the paths (for every edge $e$ incident to $u$) $x^e_u,r_1,r_2,r_3,r_4,y^e_u$ (or $x^e_u,r_5,r_6,r_7,r_8,y^e_u$) connected by the special edges between consecutive edge-gadgets of $u$. Indeed, $W \setminus \{u\}$ covers all the edges but $uv$, all the neighbors of $u$ but $v$ are in $W$. Let $e''=uv$. The construction of $W,u,v$ also ensures that $\mathcal{G}_{e''}$ contains the subpath $x^{e''}_ur_1r_2r_3r_4y^{e''}_u$ (ou $r_5r_6r_7r_8$), no matter whether $e''$ is the first edge of $v$, or not.
    \item Similarly the subpath of $T_h$ from the first entering vertex of $v$ to $z_2$ is the concatenation of the paths (for every edge incident to $v$) containing the entering vertex of $v$ for the current edge, $r_1,r_2,r_3,r_4$ and the exit vertex of $v$.
    \item The subpath of $T_h$ from $z_2$ to $s_2$ is the subpath of the canonical path associated with $W,u$ except that for every edge $e=vw$, the graph around $\mathcal{G}_e$ is the graph of Figure~\ref{fig:stwl_reduction2}(a) instead of the graph of Figure~\ref{fig:stwl_reduction2}(b). 
\end{itemize}
In particular, one can notice that $T_h$ is a tree. Note moreover that, if we denote by respectively $e$ and $e'$ the first edges of $u$ and $v$ respectively, the edge flip $z_1x_u^e$ into $z_1x_v^{e'}$ transforms the half-path of $W,u,v$ into the half-path of $W,v,u$.

So in order to conclude, we simply have to prove that we can transform the canonical path associated with $W,u$ into the half-path associated with $W,v,u$. 

The proof is based on local transformations for every edge-gadgets iteratively on the gadgets. There are three types of local transformations illustrated in Figure~\ref{fig:stwml_transformation1}.

\begin{figure}[bt]
    \centering
    \includegraphics[scale=.75]{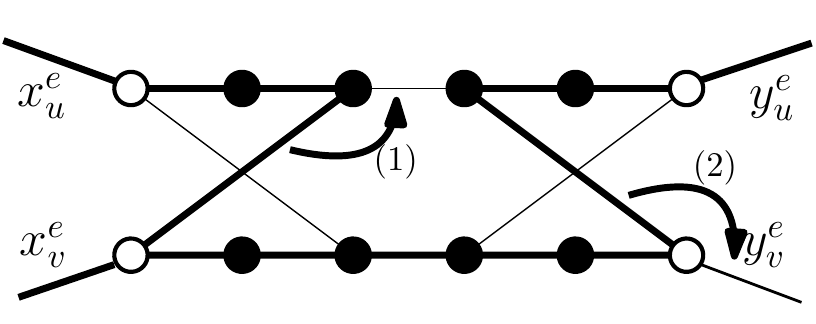}
    \caption{First transformation of an edge-gadget.}
    \label{fig:stwml_transformation1}
\end{figure}

Let $P$ be the Hamiltonian path associated with $X,u$. Since $X$ and $Y$ are vertex covers, the path $P$ has the following properties:

\begin{itemize}
    \item The restriction of $P$ to every edge-gadget with endpoint $v$ is of type Figure~\ref{fig:stwl_reduction2}(b). Indeed $v \notin X$ and $X$ is a vertex cover. In particular, the restriction of $P$ to the edge-gadget of $uv$ is of type Figure~\ref{fig:stwl_reduction2}(b). 
    \item The restriction of $P$ to every other edge-gadget edge incident to $u$ is of type Figure~\ref{fig:stwl_reduction2}(a). Indeed $(X \setminus \{ u \}) \cup \{ v \}$ is a vertex cover. So all the neighbors of $u$ but $v$ are in $X$.
\end{itemize}

Let us denote by respectively $x_u$, $x_v$, $y_u$, $y_v$ the first entering vertices of $u$ and $v$ and the last exit vertices of $u$ and $v$. Since $P$ is associated with $X,u$, $z_1x_u$ and $z_2y_u$ are edges of $P$. Let us delete $z_1x_u$ and add $z_1x_v$. Note that this operation creates a vertex of degree one (namely $x_u$) and a vertex of degree three (namely $x_v$). The resulting graph is indeed connected since only $s_1z_1$ is attached to $z_1$.

Let us prove iteratively on the edges incident to $v$ that we can transform the current graph keeping the degree sequence and the connectivity in such a way {\em (i)} the unique vertex of degree three is the current entering vertex $x$ of $v$, and {\em (ii)} there is a subpath attached to $x$ which is $s_1z_1$ and then the concatenation of the paths (for every edge incident to $v$ smaller than the current edge) containing the entering vertex of $v$ for that edge, $r_1,r_2,r_3,r_4$ and the exit vertex of $v$.

Note that the property indeed holds at the beginning since $x_v$ the first entering vertex of $v$ has degree three and there is a path $s_1z_1$ attached to it. Since $v$ is not in $X$, the graph around the current edge $e$ is indeed the graph of Figure~\ref{fig:stwl_reduction2}(b) in $P$. So in the current tree, we have the graph of Figure~\ref{fig:stwml_transformation1}. One can remark that the transformations proposed in Figure~\ref{fig:stwml_transformation1} keeps the degree sequence. Moreover, after these operations, one can note that the property holds up to the next entering vertex of $v$. So we simply have to show that the connectivity is kept to conclude. The first transformation indeed keeps connectivity. The second also keeps connectivity since the next entering vertex of $v$ is not in the subpath between $s_1$ and $x_v^e$. 

When we treat the last edge-gadget of $v$, we simply have to connect the last exit vertex to $z_2$ (which now has degree three) in order to obtain the subpath associated with $W,v,u$.
Similarly, we can transform the path associated with $W,v$ into the half-path associated with $W,u,v$. And, as we already observed, there is one edge flip that transforms the first into the second.
So it is possible to transform the canonical path associated with $W,u$ into the canonical path associated with $W,v$, which completes the proof.
\end{proof}

\subsection{Spanning trees with more leaves}\label{sec:stwfl_moreleaves}

Note that the reduction given in the previous Section can be easily adapted to more leaves.

\begin{theorem}
 Let $k \ge 3$ be an integer. \STWkL is PSPACE-complete.
\end{theorem}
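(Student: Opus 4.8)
The plan is to reduce from the $k=3$ case (Theorem~\ref{thm:stwfl-hardness}, already proved), by modifying the construction of $H(G)$ so that every spanning tree is forced to carry $k-3$ additional ``parasitic'' leaves that never interfere with the reconfiguration dynamics. Concretely, I would augment $H(G)$ into a graph $H_k(G)$ by attaching, to one of the degree-one terminals (say $s_1$), a rigid gadget that always contributes exactly $k-3$ extra leaves to any spanning tree, independently of what happens inside the rest of $H(G)$. The cleanest way to do this is to glue $k-3$ pendant paths (or pendant vertices) onto a single new vertex $w$ that is joined to $s_1$: if $w$ has $k-3$ private degree-one neighbours plus its edge to $s_1$, then in every spanning tree those $k-3$ neighbours are forced leaves, while $w$ itself becomes an internal branching node. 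The key point is that this gadget is entirely local and rigid: its tree structure is forced, so it adds a constant offset of exactly $k-3$ to the leaf count of every spanning tree of $H_k(G)$ without creating any new reconfiguration moves.

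With this offset in place, the correspondence is immediate. A spanning tree $T$ of $H_k(G)$ restricts to a spanning tree $T'$ of $H(G)$, and $T$ has at most $k$ leaves if and only if $T'$ has at most $3$ leaves, since the pendant gadget always contributes precisely $k-3$ leaves and the attachment vertex $w$ is never a leaf. First I would verify that any edge flip in $H_k(G)$ either acts entirely inside the copy of $H(G)$ (mirroring an edge flip there) or would have to touch the rigid gadget; but because every edge of the gadget is a cut-edge in any spanning tree (each pendant neighbour has degree one), no edge flip can involve those edges, so flips are in bijection with flips of $H(G)$. Consequently a transformation keeping at most $k$ leaves in $H_k(G)$ corresponds exactly to a transformation keeping at most $3$ leaves in $H(G)$, and conversely. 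Membership in \PSPACE is standard (the usual nondeterministic polynomial-space search over spanning trees), so it suffices to transfer the hardness.

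I would then spell out the reduction formally: given an instance of \STWThL on $H(G)$ with input trees $T_1,T_2$, construct $H_k(G)$, and let $\hat T_i$ be $T_i$ augmented by the unique forced tree on the gadget. Since $T_1,T_2$ have at most $3$ leaves, $\hat T_1,\hat T_2$ have at most $k$ leaves and form a valid instance of \STWkL. The equivalence of the two instances follows directly from the leaf-offset and flip-bijection observations above, giving a polynomial-time many-one reduction from the \PSPACE-complete problem \STWThL to \STWkL, and hence \PSPACE-completeness of the latter for every fixed $k \ge 3$.

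The main obstacle I anticipate is not the offset construction itself but ensuring the rigidity claim is airtight: one must check that attaching the gadget cannot open up new flips that mix gadget edges with $H(G)$ edges in a way that temporarily violates or circumvents the leaf bound, and that no intermediate tree in a valid transformation could ``borrow'' a leaf from the gadget to stay under $k$ while $T'$ exceeds $3$. Choosing the gadget so that every one of its edges is forced in \emph{every} spanning tree (pendant vertices are the simplest guarantee of this, as a degree-one vertex's unique edge is always in the tree and is never flippable) closes this gap cleanly; the verification that the gadget vertices are never leaves except for the $k-3$ designated pendants, and that $w$ is never a branching obstruction to an $H(G)$-flip, is the part that requires care but is ultimately routine.
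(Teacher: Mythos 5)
Your overall strategy is genuinely different from the paper's: rather than reducing from \STWThL as a black box, the paper reruns its entire vertex-cover reduction with modified terminals (it replaces $s_1,s_2$ by $s_1,\ldots,s_{\ell+1}$, attaching the new vertices $s_3,\ldots,s_{\ell+1}$ as pendants directly on $s_1$, and observes that Remark~\ref{rem:stwfl_nots1s2} still holds, so the whole $k=3$ argument carries over verbatim). Your padding idea, with the observation that every gadget edge is a bridge and hence present in every spanning tree and never involved in a nontrivial flip, would be cleaner if it worked — the flip-bijection part of your argument is correct. However, as written there is a counting error that breaks the central equivalence.

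The gap is this: attaching $w$ to $s_1$ \emph{destroys a forced leaf}. In $H(G)$ the vertex $s_1$ has degree one, so it is a leaf of every spanning tree; in your $H_k(G)$ both $s_1z_1$ and $s_1w$ are bridges, so $s_1$ has degree two in every spanning tree and is internal. The net contribution of your gadget is therefore $(k-3)-1 = k-4$ leaves, not $k-3$: writing $T'$ for the restriction of a spanning tree $T$ to $H(G)$, the number of leaves of $T$ equals the number of leaves of $T'$ plus $k-4$, so ``$T$ has at most $k$ leaves'' is equivalent to ``$T'$ has at most \emph{four} leaves,'' not three. Concretely, for $k=4$ your gadget changes the leaf count not at all, so a transformation in $H_4(G)$ under the bound $k=4$ corresponds to a transformation in $H(G)$ in which intermediate trees may have four leaves; this is a strictly weaker constraint (it corresponds, in the paper's analysis, to raising the {\sf TAR} threshold on the associated vertex covers), so a \NO-instance of \STWThL can map to a \YES-instance of your constructed \STWkL instance, and the forward direction of your claimed equivalence fails. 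The fix is simply to give $w$ exactly $k-2$ pendant neighbours — or, dispensing with $w$, to attach $k-2$ pendant vertices directly to $s_1$, which is precisely the paper's construction; this restores the slack of exactly one leaf, and then your bridge/flip-bijection argument goes through. (Reducing only from instances of the special form $H(G)$ with designated $s_1$ is legitimate, since the PSPACE-hardness of \STWThL is established on exactly that family.)
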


\begin{proof}
We perform the same reduction as in the previous sections except that in the construction of the graph $H(G)$ we replace the two vertices $s_1,s_2$ by the $\ell+1$ additional vertices $s_1,s_2,\ldots,s_{\ell+1}$ where $s_1$ is connected to $z_1$, $s_2$ is connected to $z_{k+1}$ and $s_3,\ldots,s_{\ell+1}$ are connected to $s_1$. Note that in any spanning tree, $s_2,\ldots,s_{\ell+1}$ are leaves. So Remark~\ref{rem:stwfl_nots1s2} also holds with this reduction. The rest of the proof is the same.
\end{proof}

\section{Spanning tree with many leaves}

Before stating the main results of this section, let us prove the following:

\begin{lemma}\label{lem:stwml_samecomp}
Let $G$ be a graph and $T_1,T_2$ be two trees. There exists a transformation from $T_1$ to $T_2$ such that every intermediate tree $T$ satisfies $in(T) \subseteq in(T_1) \cup in(T_2)$. \\
In particular, all the trees with the same set of internal nodes are in the same connected component of the reconfiguration graph.
\end{lemma}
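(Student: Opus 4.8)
The plan is to reduce the statement to a degree-tracking problem and then split the transformation into two phases. Write $I := in(T_1) \cup in(T_2)$ for the set of vertices that are internal in $T_1$ or in $T_2$, and $L := V \setminus I$ for the vertices that are leaves in \emph{both} trees. The requirement $in(T) \subseteq I$ for every intermediate tree is equivalent to asking that every vertex of $L$ stays a leaf (degree one) throughout, while vertices of $I$ are allowed to take any degree. I may assume $n \ge 3$, since for $n \le 2$ there is a unique spanning tree and $T_1 = T_2$. For $w \in L$, denote by $e_w^1$ (resp. $e_w^2$) its unique incident edge in $T_1$ (resp. $T_2$).

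\textbf{Phase 1 (repairing the leaf edges).} First I would fix, one vertex at a time, every $w \in L$ with $e_w^1 \ne e_w^2$ by performing the flip that deletes $e_w^1$ and inserts $e_w^2$. Since $w$ has degree one, deleting $e_w^1$ isolates $w$ and inserting $e_w^2 = wx$ reconnects it, so the result is again a spanning tree and $w$ keeps degree one. The only vertex whose degree increases is $x$, the $T_2$-neighbour of $w$; because $w$ is a leaf of $T_2$ and $n \ge 3$, the vertex $x$ has degree at least two in $T_2$, hence $x \in in(T_2) \subseteq I$, so no vertex of $L$ becomes internal. The key point to check here is that these flips do not interfere: the edges $e_w^1, e_w^2$ touched for distinct $w \in L$ are pairwise disjoint (two of them could share an endpoint only in the degenerate case $n = 2$), so when it is $w$'s turn it still carries exactly $e_w^1$ and the flip is valid and keeps $in(T) \subseteq I$. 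After Phase 1 we reach a spanning tree $T_1'$ that agrees with $T_2$ on every edge incident to $L$; in particular every $w \in L$ is a leaf of $T_1'$, so $in(T_1') \subseteq I$.

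\textbf{Phase 2 (exchanging inside $I$).} Since $T_1'$ and $T_2$ now coincide on all edges incident to $L$, every edge of the symmetric difference $T_1' \, \triangle \, T_2$ has both endpoints in $I$. I would then transform $T_1'$ into $T_2$ by the standard monotone basis-exchange for the graphic matroid: as long as $T_1' \ne T_2$, pick $e \in T_1' \setminus T_2$; deleting $e$ splits $T_1'$ into two components, and since $T_2$ is connected some edge $f \in T_2 \setminus T_1'$ crosses this cut, so $(T_1' \setminus e) \cup f$ is a spanning tree with strictly smaller symmetric difference to $T_2$. Because both $e$ and $f$ lie inside $I$, every such flip leaves the degrees of all $L$-vertices unchanged, so each intermediate tree still satisfies $in(T) \subseteq I$. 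Concatenating the two phases yields the desired transformation from $T_1$ to $T_2$ with $in(T) \subseteq in(T_1) \cup in(T_2)$ throughout.

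The ``in particular'' clause then follows immediately: if $in(T_1) = in(T_2) = S$ with $k := n - |S|$, then every tree produced by the transformation has at least $k$ leaves, so $T_1$ and $T_2$ lie in the same connected component of the reconfiguration graph for \STWML with threshold $k$. I expect the only genuinely delicate step to be the coordination in Phase 1 — verifying that the leaf-repair flips are mutually non-interfering and that the endpoint gaining degree is always internal in $T_2$ — whereas Phase 2 is entirely routine matroid exchange.
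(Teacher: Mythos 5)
Your proof is correct, and it rests on the same basic mechanism as the paper's proof --- graphic-matroid exchange steered so that no vertex outside $I = in(T_1) \cup in(T_2)$ ever gains degree --- but it organizes the argument differently. The paper runs a single exchange loop: it adds an edge of the symmetric difference to the current tree, and when this addition would turn a common leaf $u$ into an internal node, it repairs locally by deleting the other edge of the created cycle incident to $u$, so that $u$ stays a leaf. You instead split the work into two phases: first normalize all edges incident to the set $L$ of common leaves so that the current tree agrees with $T_2$ there, then observe that the remaining symmetric difference lives entirely inside $I$, where the unconstrained exchange argument applies verbatim. Your decomposition buys a cleaner invariant (after Phase 1 no flip ever touches an $L$-incident edge, so the leaf condition is preserved for free in Phase 2), at the cost of the coordination checks in Phase 1; the paper's interleaved version is shorter but must argue at every step about where the created cycle passes.

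One slip in your Phase 1 justification, which does not break the proof: the parenthetical claim that the edges $e_w^1, e_w^2$ for distinct $w \in L$ are pairwise disjoint for $n \ge 3$ is false --- two leaves of $T_1$ may share their unique neighbour (take a star), so these edges can perfectly well share their $I$-endpoint. What you actually need, and what is true, is weaker: every edge touched at step $w'$ has its unique $L$-endpoint equal to $w'$, since its other endpoint is the neighbour of a leaf in a spanning tree on $n \ge 3$ vertices and hence has degree at least two, i.e.\ lies in $in(T_1) \subseteq I$ or $in(T_2) \subseteq I$. Consequently no flip performed for $w'$ deletes or adds an edge incident to $w \ne w'$, and distinct steps touch distinct edges, which is exactly the non-interference you use. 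With the parenthetical corrected in this way, Phase 1, and with it the whole proof, goes through.
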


\begin{proof}
Let us prove that we can iteratively add an edge of $E(T_1) \setminus E(T_2)$ to $T_2$ and remove an edge of $E(T_2) \setminus E(T_1)$  without creating any internal node in $V \setminus (in(T_1) \cup in(T_2)$. Let $uv \in E(T_2) \setminus E(T_1)$. We add this edge to $T_1$ and observe that it creates a unique cycle in $T_1$. If it does not create any internal node note in $V \setminus (in(T_1) \cup in(T_2)$, we remove from the cycle any edge that is not in $T_1$. Otherwise, assume $u \notin in(T_1) \cup in(T_2)$. In particular $u$ is a leaf of $T_1$ and $uv$ is an edge of $T_1$ so $v \in in(T_1)$. Since $u$ was a leaf of $T_2$, the cycle in $T_2 \cup \{uv \}$ passes through the other edge incident to $u$. We remove it in order to keep a connected graph.
\end{proof}

\subsection{Hardness results}

\begin{theorem}\label{thm:stwml-split-bipartite}
 \STWML is PSPACE-complete even restricted to bipartite graphs or split graphs.
\end{theorem}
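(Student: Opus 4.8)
The plan is to prove PSPACE-completeness of \STWML on bipartite and split graphs by reduction from a known PSPACE-complete reconfiguration problem. Since the paper has already established \textsc{\TAR-Vertex Cover Reconfiguration} (equivalently, \textsc{Maximum Independent Set Reachability} in the token jumping model) is PSPACE-complete (Theorem~\ref{thm:stwfl_TARPSPACE}), and since the "few leaves" direction used a reduction from that problem, the natural strategy for the "many leaves" direction is to again reduce from an independent-set / vertex-cover reconfiguration problem, but now exploiting the degree-based leaf count. Recall the identity stated in the definitions: the number of leaves of a tree equals $(\sum_{v}\max\{0,d_T(v)-2\})+2$, equivalently $n - in(T)$. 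So requiring \emph{many} leaves is the same as requiring \emph{few internal nodes}, and controlling the internal-node set is exactly what Lemma~\ref{lem:stwml_samecomp} lets us manipulate. First I would design a gadget graph $H$ in which the internal nodes of a low-internal-node spanning tree are forced to correspond to a feasible configuration (an independent set, or a vertex cover, or a dominating-style structure) of the source instance, so that trees with at least $k$ leaves biject with valid configurations.

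Concretely, for the \textbf{split graph} case I would build $H$ with a clique part $K$ and an independent part $I$, arranging that any spanning tree with the required number of internal nodes must route through $K$ in a constrained way; the internal nodes inside $I$ then encode "tokens," and an edge flip that keeps the leaf count high corresponds to a single token move. The threshold $k$ (part of the input, which we are allowed to exploit) is tuned so that a tree has $\ge k$ leaves iff at most $\tau$ specified vertices are internal, mirroring the $k+1$ threshold in the TAR setting. For the \textbf{bipartite} case I would give a separate gadget (the statement says the two hardness results come from two different reductions) where the bipartition is respected; here I would likely encode each vertex of the base instance as a small bipartite gadget attached to a shared "spine," and use the fact established in Lemma~\ref{lem:stwml_bip} that complete-bipartite-like pieces admit free internal reconfiguration, so that the only genuine choices correspond to configuration moves. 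In both cases membership in PSPACE is routine: a nondeterministic algorithm guesses the polynomial-length transformation one flip at a time, checking adjacency and the $\ge k$ leaf constraint in polynomial space, and $\mathsf{NPSPACE}=\mathsf{PSPACE}$.

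The key steps, in order, would be: (1) fix the target reconfiguration source problem and build the gadget graph $H$ (one construction for split, one for bipartite), verifying $H$ lies in the claimed class; (2) prove a \emph{correspondence lemma} showing every spanning tree of $H$ with at least $k$ leaves induces a canonical valid configuration of the base instance, and conversely every valid configuration is realized by such a tree — this is the analogue of the $S(T)$ machinery of Section~\ref{sec:stwml_S}; (3) prove a \emph{single-step lemma} that one edge flip between two $\ge k$-leaf trees changes the associated configuration by at most one token (the analogue of Lemma~\ref{lem:stwml_lastlemma}), using the degree/leaf bookkeeping and the distance-type remarks; (4) prove the converse, that each single configuration move can be simulated by a short sequence of edge flips staying above the leaf threshold, where Lemma~\ref{lem:stwml_samecomp} and Lemma~\ref{lem:stwml_bip} do the heavy lifting by letting us freely reorder internal nodes within a gadget without dropping below $k$ leaves; and (5) assemble (2)–(4) into the reachability equivalence and add the PSPACE-membership remark.

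The hard part, as in the "few leaves" proof, will be step (3): controlling what a single edge flip can do to the leaf count and hence to the associated configuration. An edge flip removes one edge and adds one, changing four vertex degrees in the worst case, and I must guarantee that any flip preserving the "at least $k$ leaves" invariant can only shift one token — ruling out flips that simultaneously make one gadget-vertex internal and another a leaf in a way that corresponds to two simultaneous configuration changes. This is where the precise degree sequence of the gadget and an analogue of Remark~\ref{rem:stwml_distance3} (special vertices being far apart) must be invoked to show the two affected regions cannot both encode configuration changes. A secondary subtlety is making the two different reductions each land cleanly in their target class (split versus bipartite) while both supporting the same token-move correspondence; I expect the split construction to be the cleaner of the two, since the clique gives ample freedom for the free-reconfiguration steps, whereas the bipartite construction will require the more delicate \mbox{Lemma~\ref{lem:stwml_bip}}-style argument to reorder internal nodes without violating bipartiteness or the leaf threshold.
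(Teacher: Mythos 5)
Your high-level architecture (reduce from a token-reconfiguration problem, identify tokens with internal nodes via the identity ``many leaves $=$ few internal nodes,'' use Lemma~\ref{lem:stwml_samecomp} for the simulation direction, and get PSPACE membership from $\NPSPACE=\PSPACE$) does match the paper's, but the proposal has a genuine gap and two misdirections. On the misdirections: the paper reduces from {\sf TAR}-\textsc{Dominating Set Reconfiguration} (Haddadan et al.), not from vertex-cover/independent-set reconfiguration, and this choice is not cosmetic. The structural fact one gets for free in such a construction is that every vertex on the non-token side must be adjacent to an internal node in order to reach the rest of the tree (Claim~\ref{clm:stwml:hardbip}: $in(T)\cap A$ dominates $G$) --- a \emph{domination} constraint, not a covering one; encoding vertex cover instead requires edge-subdivision gadgetry, which the paper only deploys in the planar reduction and which does not stay split. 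Moreover the paper uses a \emph{single} construction for both classes (a doubled copy $B$ of $V(G)$, a copy $A$, a hub $x$ universal to $A$ with a pendant $y$ forcing $x$ internal), obtaining the split case by completing $A$ into a clique; the ``two different reductions'' in the introduction are bipartite/split versus planar, not split versus bipartite as you assumed. Relatedly, Lemma~\ref{lem:stwml_bip} belongs to the few-leaves Hamiltonian-path machinery and gives you nothing here; the free-reordering work is done by Lemma~\ref{lem:stwml_samecomp} together with the hub $x$ (Claim~\ref{clm:stwml:hardbip2} pushes internal nodes out of $B$ by reattaching their subtrees to $x$), an anchor vertex your sketch lacks.

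The genuine gap is your step (3), which you correctly flag as the hard part but then propose to resolve in a way that would fail. You assert that the gadget can be designed so that any leaf-count-preserving flip shifts at most one token, invoking an analogue of Remark~\ref{rem:stwml_distance3}. But a single edge flip can remove an edge at a degree-two token vertex $a_i$ (turning it into a leaf) while adding an edge at a token vertex $a_{i+1}$ that was a leaf (turning it internal), so the associated configuration legitimately changes by \emph{two} tokens in one flip; the paper explicitly identifies this case ($a_i$ of degree two, $a_{i+1}$ a leaf in $T_i$) and does \emph{not} rule it out. No distance-type argument can exclude it --- and in a split graph, where the diameter is at most three, distance arguments are unavailable anyway. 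The paper's fix is different in kind: every vertex of $B$ is doubled into false twins $b_{j,0},b_{j,1}$ with identical neighborhoods, precisely so that when $|D_i \, \triangle \, D_{i+1}| = 2$ one can show some internal node other than $a_i$ dominates the relevant twin, hence $D_i' := D_i \setminus \{a_i\}$ is still a dominating set, and the offending single step is repaired by routing the {\sf TAR} sequence through the intermediate configuration $D_i'$. Without this twin/repair mechanism (or an equivalent relaxation of your single-step lemma to ``at most two changes, always bridgeable by a valid intermediate configuration''), the flip-sequence-to-{\sf TAR}-sequence direction of your correspondence breaks at exactly the point on which the whole reduction rests.
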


\begin{proof}
We first prove Theorem \ref{thm:stwml-split-bipartite} for bipartite graphs and then explain how we can adapt the proof for split graphs. We give a polynomial-time reduction from the {\sf TAR}-\textsc{Dominating Set Reconfiguration} problem (abbreviated in {\sf TAR}-DSR problem). Haddadan et al.~\cite{Haddadan16} showed that the {\sf TAR} reconfiguration of dominating sets is PSPACE-complete. More precisely, they proved that given a graph $G$ and $D_\source$, $D_\target$ two dominating sets of $G$, deciding whether there is a reconfiguration sequence between $D_\source$ and $D_\target$ under the {\sf TAR($\max(|D_\source|,|D_\target|)+1$)} rule is PSPACE-complete.

Let $G=(V,E)$ be a graph with vertex set $V(G) = \{v_1,v_2, \ldots, v_n\}$ and let $D_\source$, $D_\target$ be two dominating sets of $G$. Free to add vertices to the set of smallest size, we can assume without loss of generality that $D_\source$ and $D_\target$ are both of size $k$. Let $(G,k+1,D_\source,D_\target)$ be the corresponding instance of {\sc Dominating Set Reconfiguration} under \TAR, where $k+1$ is the threshold that we cannot exceed. We construct the bipartite graph $G'$ as follows: we make a first copy $A=\{a_1,a_2, \ldots, a_n\}$ of the vertex set of $G$, and a second copy $B=\{b_{1,0},b_{1,1},b_{2,0},b_{2,1},\ldots,b_{n,0},b_{n,1}\}$ where we double each vertex.  We add an edge between $a_i \in A$ and $b_{j,k} \in B$ for $k \in \{0,1\}$ if and only if $v_j \in N_G[v_i]$. Note that $N(b_{i,0})=N(b_{i,1})$, for every $1 \le i \le n$. We finally add a vertex $x$ adjacent to all the vertices in $A$ and we attach it to a degree-one vertex $y$. See Figure \ref{fig:reduc-bipartite} for an illustration. Note that $G'$ is bipartite since $A \cup \{y\}$ and $B \cup \{x\}$ induce two independent sets. 

\begin{figure}[bt]
    \centering
    \begin{subfigure}[b]{0.45\textwidth}
        \vfill
        \centering
        \includegraphics[height=4.5cm]{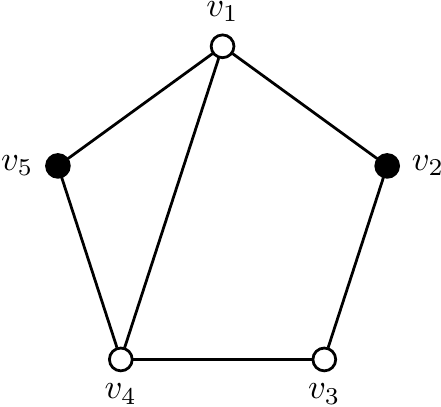}
        \vfill
        \caption{Original graph $G$.} \label{subfig:original-bipartite}
    \end{subfigure}
    \begin{subfigure}[b]{0.45\textwidth}
        \centering
        \includegraphics[height=8cm]{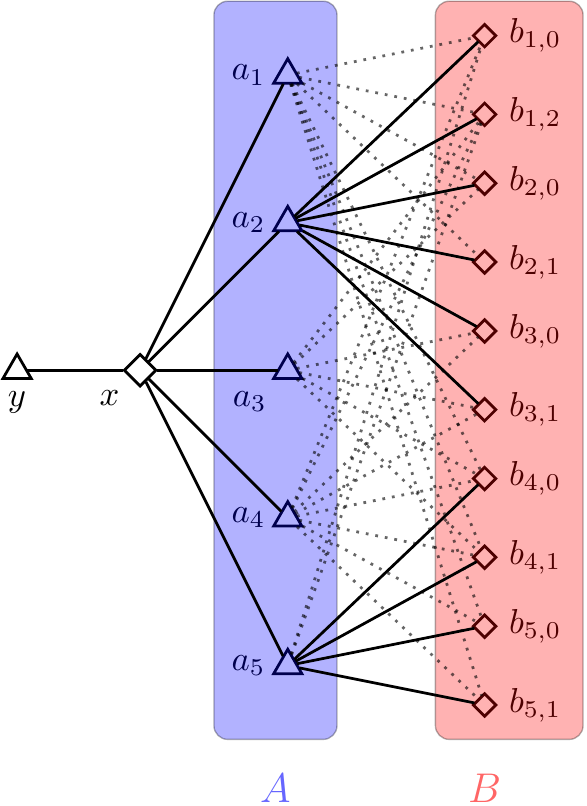}
        \caption{Corresponding bipartite graph $G'$.} \label{subfig:reduc-bipartite}
    \end{subfigure}
    \caption{Example for the reduction of Theorem \ref{thm:stwml-split-bipartite}: the dominating set $D = \{v_2,v_5\}$ of $G$ is depicted by the black vertices and the spanning tree of $G'$ associated with $D$ is the tree induced by the solid edges. For the split case, we add all the possible edges in $G'[A]$ so that $G'[A \cup \{x\}]$ is a clique and $G'[B \cup \{y\}]$ an independent set.}
    \label{fig:reduc-bipartite}
\end{figure}

\begin{claim}\label{clm:stwml:hardbip}
For every spanning tree $T$ of $G'$, $in(T) \cap A$ is a dominating set of $G$.
\end{claim}

\begin{proof}
Let $b_i$ be a vertex of $B$. Since $x$ is an internal node of $T$, there is a path from  $b_i$ to $x$. Since $N(b) \subseteq A$, the second vertex of the path is in $A$. So there exists an internal node of $T$ incident to $b_i$.
\end{proof}

\begin{claim}\label{clm:stwml:hardbip2}
For every spanning tree $T$ of $G'$, there exists a tree $T_A$ in the connected component of $T$ such that $in(T_A) \subseteq in(T) \cap (A \cup \{ x \})$.
\end{claim}

\begin{proof}
If $in(T) \subseteq A \cup \{ x \}$, the conclusion holds. So we can assume that there exists $b \in B$ such that 
$b \in in(T)$. Let us prove that we can transform $T$ into another spanning tree $T'$ such that $in(T') \subseteq in(T) \setminus \{b\}$ without creating a new internal node.
First, recall that $x \in in(T)$ since $x$ must be an internal node in any spanning tree of $G$. Let $a$ be the unique neighbor of $a$ in the path from $b$ to $x$ in $T$. Now, for every vertex $a' \ne a$ incident to $b$, we remove the edge $ba'$ and create the edge $xa'$. Since $x$ is internal in every tree, it does not increase the number internal nodes. Since $b$ is on the path between $a'$ and $x$ in $T$, it keeps the connectivity of the graph. After all these operations, the resulting tree $T_A$ satisfies $in(T_A) \subseteq in(T) \setminus \{ b \}$. We repeat this operation until no vertex of $B$ is internal.
\end{proof}

Let $D$ be a dominating set of $G$ of size $k$. We can associate with $D$ a spanning tree of $G'$ with $k+1$ internal nodes as follows. We attach every vertex in $A \cup \{y\}$ to $x$. Every vertex $b_i \in B$ is a leaf adjacent to a vertex that dominates $v_i$ in $D$. If $v_i$ has more than one neighbor in $D$, we choose the one with the smallest index. This spanning tree is called the \emph{spanning tree associated with $D$}.
See Figure \ref{fig:reduc-bipartite} for an example.

Let $(G,k+1,D_\source,D_\target)$ be an instance of {\sf TAR}-DSR. It is clear that $G'$ can be constructed in polynomial-time as well as $T_\source$ and $T_\target$ the spanning trees associated with $D_\source$ and $D_\target$. 
It remains to prove that $(G,k+1,D_\source,D_\target)$ is \textsf{yes}-instance of {\sf TAR}-\textsc{DSR} if and only if $(G',k',T_\source,T_\target)$ is a \textsf{yes}-instance of \STWML. 
\medskip

\noindent
($\Leftarrow$)
Suppose that there is a reconfiguration sequence of spanning trees $S' = \langle T_0 = T_\source, T_1, \ldots, T_{\ell'}=T_\target \rangle$, where each spanning tree has at most $k+2$ internal nodes. Since $x$ is an internal node of any spanning tree of $G'$, $D_i=in(T_i) \cap A$ has size at most $k+1$, for every $0 \le i \le \ell'$. Moreover, by construction of $T_\source$ and $T_\target$, $in(T_\source) \cap A=D_\source$ and  $in(T_\target) \cap A=D_\target$. For every vertex $b$ of $B$ and every $i$, there exists a vertex of $A$ in the path from $b$ to $x$ in $T_i$. It follows that the set $D_i$ is a dominating set of $G$, for every $0 \le i \le \ell'$. Hence, $\langle D_\source=D_0,\ldots,D_{\ell'}=D_\target \rangle$ is a transformation from $D_\source$ to $D_\target$.
It remains to prove that $|D_{i+1} \, \triangle \, D_i| \le 1$ for every $0 \le i < \ell'$ to guarantee the existence of a {\sf TAR}($k+1$)-reconfiguration sequence between $D_\source$ and $D_\target$ in $G$. What we will show is actually a bit more subtle. We will show that it is not necessarily the case but that, if it is not the case for some $i$, there exists a dominating set $D_i'$ such that $D_i,D_i',D_{i+1}$ satisfies $|D_i \, \triangle \, D_i'| \le 1$ and $|D_i' \, \triangle \, D_{i+1}| \le 1$ which is enough to conclude.

We consider an edge flip between two consecutive spanning trees of $S'$, let us say $T_i$ and $T_{i+1}$. Let $e_i$ (respectively $e_{i+1}$) be the edge in $E(T_i) \setminus E(T_{i+1})$ (resp. $e_{i+1} = E(T_{i+1}) \setminus E(T_i)$). We denote by $e_i \rightsquigarrow e_{i+1}$ the edge flip that transforms $T_i$ into $T_{i+1}$. Since $G'$ is bipartite and $y$ has degree one in $G'$, both $e_i$ and $e_{i+1}$ have an endpoint in $A$ and $|\{e_i,e_{i+1}\} \cap A| \le 2$. Hence, $|D_i \, \triangle \, D_{i+1}| \le 2$. If $e_i$ and $e_{i+1}$ are incident to a same vertex of $A$, the edge flip preserves its degree and thus $|D_i \, \triangle \, D_{i+1}| = 0$. Let us denote by $a_i$ (resp. $a_{i+1}$) the vertex in $A$ incident to $e_i$ (resp. $e_{i+1}$) in $A$.  Observe that $|D_i \, \triangle \, D_{i+1}| \le 1$ unless $a_i$ has degree two and $a_{i+1}$ is a leaf in $T_i$. 

First assume that the other endpoint of $e_i$ is a vertex $b_i$ in $B$. Let $b_i'$ be the vertex $b_{j,1}$ if $b_i$ is $b_{j,0}$ or $b_{j,0}$ if $b_i$ is $b_{j,1}$, i.e.\ $b_i'$ corresponds to the false twin of $b_i$. We claim that there exists an internal node of $T_i$ distinct from $a_i$ incident to %$b_i$ or 
$b_i'$. By contradiction. The neighbor of $b_i'$ on the unique path from $b_i'$ to $x$ has to be $a_i$ (since otherwise the neighbor of $b_i'$ which is not $x$ has to have a path to $x$ which provides the desired internal node).
Since $a_i$ has degree two, the two neighbors of $a_i$ are $b_i$ and $b_i'$. But this $P_3$ has to be connected to $x$. So $b_i$ or $b_i'$ are incident to another internal vertex $a'$ of $A$. 
But then $D_i \setminus \{a_i\}$ is a dominating set (since $a'$ dominates $b_i$) and then setting $D_i'=D_i \setminus \{a_i\}$, we have  $|D_i' \, \triangle \, D_i| \le 1$ and $|D_{i+1} \, \triangle \, D_i'| \le 1$.

Now assume that $e_i=a_ix$. Let $b_i$ be the other neighbor of $a_i$ in $T_i$.  Let $b_i'$ be the vertex $b_{j,1}$ is $b_i$ is $b_{j,0}$ or $b_{j,0}$ if $b_i$ is $b_{j,1}$. The neighbor $a$ of $b_i'$ on the path from $b_i'$ to $x$ is neither $a_i$ nor $a_{i+1}$. So, again $D_i':=D_i \setminus \{a_i\}$ is a dominating set and we have  $|D_i' \, \triangle \, D_i| \le 1$ and $|D_{i+1} \, \triangle \, D_i'| \le 1$.
So the conclusion follows.

\medskip

\noindent
($\Rightarrow$)
Suppose now that there exists a {\sf TAR}($k+1$)-reconfiguration sequence $S' = \langle D_0 = D_\source, D_1, \ldots, D_{\ell'}=D_\target \rangle$, from $D_\source$ to $D_\target$ in $G$. Let us prove that, for every $i$, there exists an edge flip between a tree with internal node $\{x\} \cup A(D_i)$ and a tree with internal nodes included in $\{x\} \cup A(D_{i+1})$ (for every $j$, $A(D_j)$ is the set of vertices of $A$ corresponding to the set $D_j$). The existence of a transformation from $T_\source$ to $T_\target$ follows since all the trees with the same set of internal nodes of $A$ are in the same connected component of the reconfiguration graph by Lemma~\ref{lem:stwml_samecomp} and Claim~\ref{clm:stwml:hardbip2}. Free to permute $D_i$ and $D_{i+1}$, we can assume that $D_i$ contains $D_{i+1}$. Let $u$ be the vertex of $D_i \setminus D_{i+1}$. Now, let $T_i$ be a spanning tree with internal nodes included in $D_i \cup \{x \}$. By Claim~\ref{clm:stwml:hardbip2}, we can assume that the set of internal nodes of the spanning tree is included $\{ x \}\cup A(D_i)$. Now, we can flip the edges in such a way, all the edges incident to $x$ are in the spanning tree (since $x$ already is an internal node, it cannot increase the number of internal nodes). Now for every edge $a_ub$ in the spanning tree, we can replace it by an edge $a_vb$ with $v \in D_{i+1}$ since $D_{i+1}$ is a dominating set. After this transformation, the resulting tree has its internal nodes in $x \cup A(D_{i+1})$ which completes the proof.

We now discuss how to adapt the proof for split graphs. First, we add an edge between any two vertices in $A$ so that $G'[A]$ is a clique. Then, observe that $G'[A \cup \{x\}]$ is a clique, and $G'[B \cup \{y\}]$ an independent set. Given two dominating sets $D_\source$ and $D_\target$, we associate with $D_\source$ and $D_\target$ the two corresponding spanning trees $T_\source$ and $T_\target$ of $G'$ in the same way as in the proof for bipartite graphs. Now, given a reconfiguration sequence between $D_\source$ and $D_\target$, the same proof as for bipartite graphs also holds here. A transformation for bipartite graphs indeed gives a transformation for split graphs. The converse direction also holds since we can assume that no vertex of $B$ is internal all along the transformation.
Suppose now that there is a reconfiguration sequence $S'$ between $T_\source$ and $T_\target$. We can assume that every vertex in $B$ is a leaf in any spanning tree $T_i$ of $S'$ for the same reason as in the proof for bipartite graphs. Since $x$ must be an internal node in any spanning tree, we can suppose that no edge between two vertices in $A$ is added to $S'$. Suppose that an edge $a_ia_j$ is added. This edge must have replaced either the edge $xa_i$ or the edge $xa_j$. In any way, we cannot decrease the number of internal nodes since $x$ is still an internal node. It follows that $S'$ only touches edges between $A$ and $B$. Hence, we can conclude as in the proof for bipartite graphs. The conclusion follows.
\end{proof}

Using another reduction we can prove the following: 

\begin{theorem}\label{thm:stwml-planar}
\STWML is PSPACE-complete on planar graphs.
\end{theorem}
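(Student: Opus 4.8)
The plan is to prove Theorem~\ref{thm:stwml-planar} by a reduction from a reconfiguration problem whose standard hardness construction is already planar or nearly so. The bipartite/split reduction in Theorem~\ref{thm:stwml-split-bipartite} goes through {\sf TAR}-\textsc{Dominating Set Reconfiguration}, but the graph $G'$ built there (with the vertex $x$ complete to the copy $A$, and two copies $B$ of each vertex joined to their dominators) is highly nonplanar in general. So for the planar case I would instead start from a source problem that is known to be PSPACE-complete \emph{on planar instances}---the natural candidate being planar \textsc{NCL} (Nondeterministic Constraint Logic) in the sense of Hearn and Demaine, or a planar-preserving variant of {\sf TAR}-\textsc{Dominating Set Reconfiguration}. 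The key feature I need from the source is that it remains PSPACE-complete when its incidence/interaction graph is planar, so that the gadgets I attach can be laid out without crossings.

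First I would design a planar gadget that plays the role the star at $x$ played before: in the bipartite proof, $x$ was forced to be internal in every spanning tree and served as a ``hub'' so that $in(T)\cap A$ could be read off as the current dominating set. In a planar reduction I cannot afford a single vertex adjacent to everything, so the hub must be replaced by a planar ``backbone'' (for instance a path or a tree of forced-internal vertices, each locally adjacent only to the gadgets it must connect), arranged so that the forced structure still lets me recover the reconfiguration state from the internal-node set of $T$. Concretely I would attach to each vertex/edge of the planar source instance a small local gadget with degree-one pendant vertices that must always be leaves (mirroring the role of $y$, $s_1$, $s_2$ in earlier constructions), so that forcing a minimum number of internal nodes pins down which gadget vertices are internal; the number-of-leaves threshold $k$ (equivalently the bound $n-k$ on internal nodes) then encodes the budget constraint of the source reconfiguration problem.

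The two directions of correctness would follow the template already used in Theorem~\ref{thm:stwml-split-bipartite}: for the forward direction I would show that any spanning tree $T$ with enough leaves induces a valid configuration of the source instance (by reading off the internal nodes inside each gadget), and that a single edge flip changes this induced configuration by at most one elementary move---here Lemma~\ref{lem:stwml_samecomp} is the crucial tool, letting me normalise intermediate trees so that their internal-node sets lie in $in(T_1)\cup in(T_2)$ and that all trees with the same internal-node set are reconfiguration-equivalent. For the backward direction I would show that each elementary move of the source problem is realised by a short sequence of edge flips between gadget-normalised spanning trees whose number of leaves never drops below $k$, again invoking Lemma~\ref{lem:stwml_samecomp} to move freely among trees sharing an internal-node set.

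\textbf{The hard part} will be the planar layout together with the forcing: I must guarantee simultaneously that (i) the overall graph is planar, which forbids the global hub and forces a distributed backbone; (ii) the backbone and pendant vertices \emph{force} exactly the intended vertices to be internal in every spanning tree with at least $k$ leaves, so that the induced-configuration map is well defined and injective on the relevant trees; and (iii) the pendant ``leaf budget'' is tight enough that exceeding the source problem's threshold is impossible without creating too few leaves. Reconciling (i) with the strong forcing in (ii) is the crux, since planarity removes the easy ``one vertex adjacent to all'' trick that made the split/bipartite forcing clean; I expect most of the technical effort to go into the gadget design and the degree-counting argument (analogous to the $\delta_T(Z)$ counting in Section~\ref{sec:stwfl_3leaves}) that certifies the forcing.
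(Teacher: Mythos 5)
Your high-level instincts are right---the paper does replace the global hub by a distributed planar structure and does reduce from a reconfiguration problem that is PSPACE-complete on planar instances---but your proposal stops exactly where the work begins, and the two concrete choices you defer are the ones that make the proof go through. First, the source problem: the paper uses neither NCL nor a {\sf TAR}-style dominating set problem, but {\sc Minimum {\sf TJ}-Vertex Cover Reconfiguration} ({\sc MVCR}), planar-hard via Hearn and Demaine; this matters because in {\sc MVCR} every configuration has exactly the minimum size $\tau(G)$, with \emph{no slack}. Second, the gadget: the paper fixes a planar embedding of $G$, subdivides every edge $uv$ by an edge-vertex $w_{uv}$, and for every face $f$ adds a face-vertex $w_f$ adjacent to all vertices of $f$ plus a pendant leaf $u_f$. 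The face-vertices, forced internal by their pendant leaves, are exactly your ``planar backbone,'' but they arise canonically from the faces of the embedding rather than from bespoke gadgets, and the configuration is read off locally: the tree-neighbor of $w_{uv}$ is the vertex covering $uv$.

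The step where your template would actually fail is the forcing in your items (ii)--(iii). You plan to follow the {\sf TAR}-style argument of Theorem~\ref{thm:stwml-split-bipartite}, where the budget has slack one. With slack, an edge-vertex (or whatever records the configuration in your gadgets) can temporarily become internal, and then the induced-configuration map is no longer well defined on intermediate trees; also, the $\delta_T(Z)$-style counting you invoke belongs to the few-leaves reduction and has no analogue here. The paper instead sets the threshold $k = 2(|E(G)|+1) - \tau(G)$ and proves this is the \emph{maximum} number of leaves any spanning tree of $G'$ can have; hence every tree in a feasible sequence has exactly $k$ leaves, and a short exchange argument (rerouting through a face-vertex adjacent to both endpoints) shows that an internal edge-vertex would yield a tree with $k+1$ leaves, a contradiction. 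This tightness---possible only because the source is the zero-slack {\sf TJ} minimum vertex cover problem---is what pins every edge-vertex to be a leaf throughout, guarantees each edge flip corresponds to at most one token jump, and is supplemented by Lemma~\ref{lemma:vc-st} (a BFS on the dual graph producing a canonical tree with exactly $k$ leaves for each minimum cover) together with Lemma~\ref{lem:stwml_samecomp} for normalisation. Without committing to a zero-slack source and an extremal threshold, your plan has no mechanism keeping the configuration readable across intermediate trees.
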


\paragraph*{The reduction.} 
First, observe that {\sc MVCR} is PSPACE-complete, even if the input graph is planar~\cite{Hearn:2005:PSP:1140710.1140715}\footnote{Actually, Hearn and Demaine \cite{Hearn:2005:PSP:1140710.1140715} showed the PSPACE-completeness for the reconfiguration of maximum independent sets. Since the complement of a maximum independent set is a minimum vertex cover, we directly get the PSPACE-completeness of MVCR.}. 
We use a reduction from MVCR, which is a slight adaptation of the reduction used in \cite[Theorem 4]{DBLP:conf/mfcs/MizutaHIZ19}. Let $G=(V,E)$ be a planar graph and let $(G, C_\source, C_\target)$ be an instance of MVCR. We can assume that $G$ is given with a planar embedding of $G$ since such an embedding can be found in polynomial time. Let $F(G)$ be the set of faces of $G$ (including the outer face). We construct the corresponding instance $(G', k, T_\source, T_\target)$ as follows (see Figure \ref{fig:reduc-planar} for an example).

We define $G'$ from $G$ as follows. We start from $G$ and first subdivide every edge $uv \in E(G)$ by adding a new vertex $w_{uv}$. Then, for every face $f \in F(G)$, we add a new vertex $w_f$ adjacent to all the vertices of the face $f$. Finally, we attach a leaf $u_f$ to every vertex $w_f$. Note that $G'$ is a planar graph and $|V(G')| = |V(G)|+|E(G)|+2 \cdot |F(G)|$. The vertices $w_{uv}$ for $uv \in E$ (resp. $w_f$ for $f \in F$) are {\em edge-vertices} (resp. {\em face-vertices}). The vertices $u_f$ for every $f$ are called the \emph{leaf-vertices}. Note that, for every spanning tree $T$, all the face-vertices are internal nodes of $T$ and all the leaf-vertices are leaves of $T$. The vertices of $V(G')$ which are neither edge, face of leaf vertices are called {\em original vertices}.
Finally, we choose arbitrarily order of $V(G)$ and $F$. It will permit us to define later a canonical spanning tree for every vertex cover. 

\begin{figure}[bt]
    \centering
    \begin{subfigure}[b]{0.49\textwidth}
        \centering
        \includegraphics[height=6.3cm]{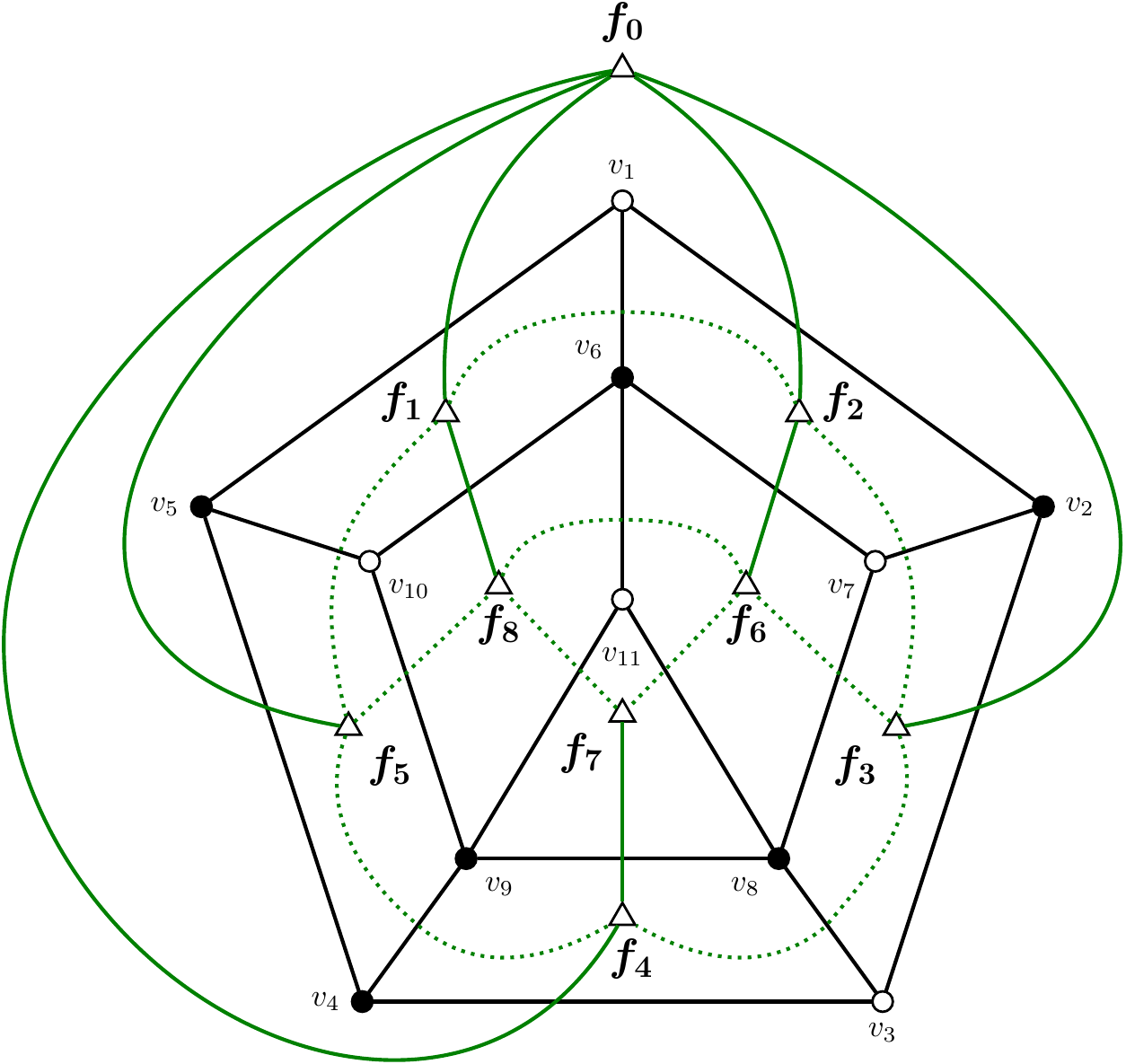}
        \caption{Original labeled planar graph $G$. } \label{subfig:original-planar}
    \end{subfigure}
    \begin{subfigure}[b]{0.49\textwidth}
        \centering
        \includegraphics[height=6.3cm]{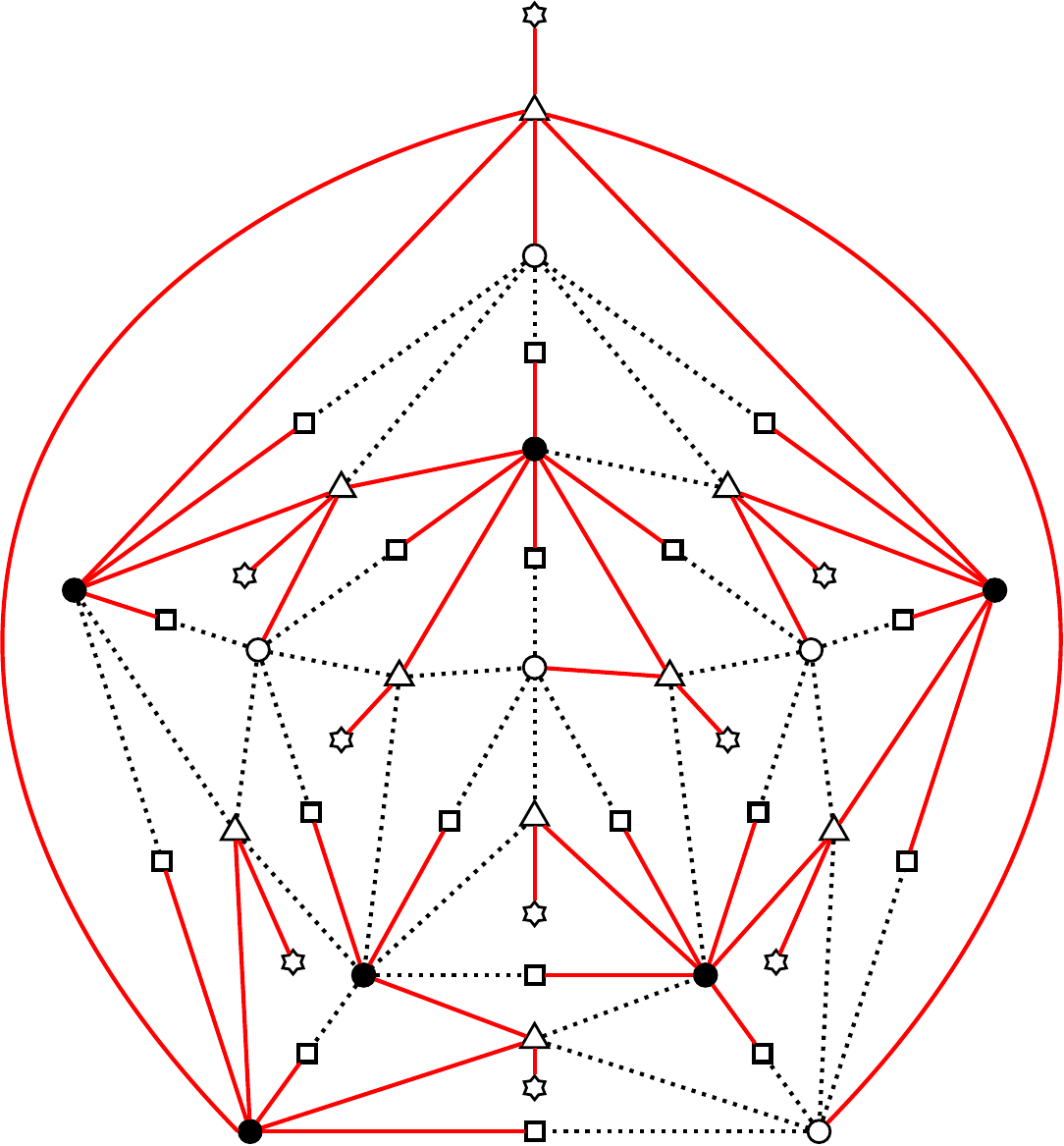}
        \caption{Corresponding planar graph $G'$.} \label{subfig:reduc-planar}
    \end{subfigure}
    \caption{Reduction for Theorem \ref{thm:stwml-planar}. The vertex cover $C$ of $G$ is depicted by the black vertices. The dual graph is the graph induced by the green edges. The spanning tree obtained from the BFS is represented by the solid edges.
    The face-vertices (respectively edge-vertices) of $G'$ are depicted by triangles (resp. squares). The spanning tree $T$ of $G'$ associated with the vertex cover $C$ is the tree induced by the red edges. The number of leaves of $T$ is $2(|E(G)|+1)-|C|=32$.}
    \label{fig:reduc-planar}
\end{figure}

\begin{lemma}
Every spanning tree of $G'$ has at most $2(|E(G)|+1)-\tau(G)$ leaves.
\end{lemma}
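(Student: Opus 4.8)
The plan is to convert the statement into a lower bound on the number of internal nodes and then read the bound off from Euler's formula. Recall that any spanning tree $T$ of $G'$ has exactly $|V(G')| - in(T)$ leaves, and here $|V(G')| = |V(G)| + |E(G)| + 2|F(G)|$. Since every face-vertex is an internal node of $T$ while every leaf-vertex is a leaf, it suffices to prove that $in(T) \ge |F(G)| + \tau(G)$. Indeed, plugging this into the leaf count and using Euler's formula $|V(G)| - |E(G)| + |F(G)| = 2$ (so $|V(G)| + |F(G)| = |E(G)|+2$) yields exactly $|V(G')| - in(T) \le 2(|E(G)|+1) - \tau(G)$. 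As the $|F(G)|$ face-vertices are already internal, the whole problem reduces to showing that the number of internal nodes among the original vertices and the edge-vertices is at least $\tau(G)$.

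To get this, I would exhibit a vertex cover $C$ of $G$ whose size is at most the number of such internal nodes. Write $I_V$ for the set of internal original vertices and $I_E$ for the set of internal edge-vertices. The key structural fact I would use is that each edge-vertex $w_{uv}$ has degree exactly two in $G'$, its only neighbours being $u$ and $v$. For a fixed edge $uv \in E(G)$ I distinguish two cases according to the behaviour of $w_{uv}$ in $T$: if $w_{uv}$ is internal, I charge the edge to $w_{uv} \in I_E$ and place one (arbitrary) endpoint of $uv$ into $C$; if $w_{uv}$ is a leaf, its unique tree-edge joins it to one endpoint, say $u$, and then $u$ must itself be internal --- otherwise $\{u,w_{uv}\}$ would be an entire connected component of $T$, contradicting that $T$ is a spanning tree of the (larger, connected) graph $G'$ --- so I put $u \in I_V$ into $C$. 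Letting $C = I_V \cup \{\, p(e) : e=uv,\ w_e \in I_E \,\}$, every edge of $G$ is covered and $|C| \le |I_V| + |I_E|$.

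Consequently $\tau(G) \le |C| \le |I_V| + |I_E|$, which combined with the fact that all face-vertices are internal gives $in(T) \ge |F(G)| + |I_V| + |I_E| \ge |F(G)| + \tau(G)$, and the Euler-formula computation above finishes the proof. The step I expect to be the crux is the case analysis of the previous paragraph, and in particular the observation that a leaf edge-vertex forces its neighbouring endpoint to be internal; this is where the degree-two property of edge-vertices is essential, and it is exactly the place where the construction (with $w_f$ adjacent only to the original vertices of $f$, so that edge-vertices are \emph{not} attached to face-vertices) is used. Everything else is a routine vertex count together with one application of Euler's formula.
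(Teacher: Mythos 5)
Your proof is correct, but it takes a genuinely different route from the paper's. The paper argues by contradiction in two stages: it first observes that a spanning tree in which \emph{every} edge-vertex is a leaf has at most $2(|E(G)|+1)-\tau(G)$ leaves, because the tree-neighbours of the leaf edge-vertices form a vertex cover of $G$ (this is essentially your second case, which the paper states only as ``minimizing the number of original vertices that have to be internal nodes is equivalent to minimizing the size of a vertex cover''); it then eliminates internal edge-vertices one at a time: if $w_{uv}$ is internal, deleting $uw_{uv}$ and reattaching $v$ to a face-vertex $w_f$ of a face incident to $uv$ yields a spanning tree with at least as many leaves and one fewer internal edge-vertex, so a hypothetical tree with more than $2(|E(G)|+1)-\tau(G)$ leaves could be normalized into one of the already-excluded kind. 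Your argument skips the normalization entirely via a one-shot charging: an internal edge-vertex $w_{uv}$ itself pays for an arbitrary endpoint of $uv$ in the cover, giving $\tau(G)\le |I_V|+|I_E|$ directly for an \emph{arbitrary} spanning tree, and your treatment of the leaf case (a pendant edge-vertex whose unique tree-edge goes to a leaf endpoint would create a two-vertex component, contradicting connectivity) makes explicit the covering step the paper leaves implicit. Your version is more elementary and self-contained; what the paper's route buys is the leaf-preserving flip itself, which is reused almost verbatim in the ($\Leftarrow$) direction of the proof of Theorem~\ref{thm:stwml-planar} to show that no tree in a tight reconfiguration sequence can keep an edge-vertex internal --- a purpose your charging argument would not serve. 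Both proofs ultimately rest on the same structural facts (face-vertices are forced internal by their pendant leaf-vertices, edge-vertices have degree two with neighbours $u$ and $v$ only) and on the same Euler-formula computation $|V(G)|+|F(G)|=|E(G)|+2$.
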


\begin{proof}
Let $k:=2(|E(G)|+1)-\tau(G)$.
Assume by contradiction that $G'$ has a spanning tree $T$ with at least $k+1$ leaves. First, observe that if we require every edge-vertex to be a leaf in $T$, then $T$ has at most $k$ leaves. Indeed, as we already noticed, every face-vertex is an internal node. Then, minimizing the number of original vertices that have to be internal nodes in $T$ is equivalent to minimize the size of a vertex cover in $G$. Hence, the total number of internal nodes in $T$ is at least $|F(G)|+\tau(G)$ and thus the number of leaves is at most $|V(G')|- |F(G)|-\tau(G) = |V(G)|+|E(G)|+|F(G)|-\tau(G)=k$ since $|F(G)|=2-|V(G)|+|E(G)|$ by Euler's formula.

It follows that since $T$ has at least $k+1$ leaves, then $T$ \emph{must} contain an edge-vertex $w_{uv}$ as an internal node. So both $uw_{uv}$ and $vw_{uv}$ are in $T$. Let $T' = T \setminus \{uw_{uv}\}$. We denote by $C_u$ (respectively $C_v$) the connected of $T'$ containing $u$ (respectively $v$). 
By symmetry, we can assume that $w_f \in C_u$. If we add $vw_f$ to $T'$, the resulting set of edges $T''$ induces a spanning tree of $G'$. Besides, the number of leaves in $T''$ is at least the number of least in $T$ since $w_{uv}$ has degree one in $T'$ and $w_f$ was already an internal node in $T$. The number of edge-vertices which are internal nodes have decreased without increasing the number of internal nodes. 
We repeat this process as long as there is at least one internal edge-vertex. We end up with a spanning tree in which every edge-vertex is a leaf and which contains at least $k+1$ leaves, a contradiction. 
\end{proof}

\begin{lemma} \label{lemma:vc-st}
For any minimum vertex cover $C$ of $G=(V,E)$, we can define a canonical tree with exactly $k := 2(|E(G)|+1)-\tau(G)$ leaves which are all the edge-vertices, all the leaf-vertices and all the original vertices but the ones in $C$. Moreover, this spanning can be computed in polynomial time.
\end{lemma}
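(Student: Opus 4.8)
The plan is to build the canonical tree in two layers: first a connected \emph{skeleton} spanning exactly the vertices that are meant to be internal, namely all face-vertices together with the vertices of $C$, and then to hang every remaining vertex on this skeleton as a pendant leaf. The target degree profile is that every face-vertex and every vertex of $C$ is internal, while every edge-vertex, every leaf-vertex, and every original vertex outside $C$ is a leaf. A quick count, using $|C|=\tau(G)$ and Euler's formula $|F(G)|=2-|V(G)|+|E(G)|$, shows that this profile yields exactly $|E(G)|+|F(G)|+(|V(G)|-\tau(G))=2(|E(G)|+1)-\tau(G)=k$ leaves, so only the existence of a spanning tree with this profile remains to be shown.

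For the skeleton I would work inside the subgraph $\mathcal{H}$ of $G'$ induced by $C\cup\{w_f : f\in F(G)\}$. Since every edge of $G$ is subdivided, no two original vertices are adjacent in $G'$, so $\mathcal{H}$ is bipartite and its edges join a vertex $c\in C$ to a face-vertex $w_f$ whose face $f$ has $c$ on its boundary. The key step, which I expect to be the main obstacle, is to prove that $\mathcal{H}$ is connected. For this I would exploit connectivity of the dual graph $G^{\ast}$ of the (connected, embedded) graph $G$: given two faces $f,f'$, take a path $f=f_0,\dots,f_m=f'$ in $G^{\ast}$ in which consecutive faces $f_{i-1},f_i$ share a primal edge $e_i=u_iv_i$; because $C$ is a vertex cover it contains some $c_i\in\{u_i,v_i\}$, and $c_i$ lies on the boundary of both $f_{i-1}$ and $f_i$, so $w_{f_{i-1}}c_i$ and $c_iw_{f_i}$ are edges of $\mathcal{H}$. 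Concatenating these paths links all face-vertices, and since every vertex of $C$ lies on some face boundary it is attached to this component as well. A canonical spanning tree $T_0$ of $\mathcal{H}$ (for instance a BFS tree, using the fixed orderings of $V(G)$ and $F(G)$ for tie-breaking) then connects all intended internal vertices.

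I would complete $T_0$ to a spanning tree of $G'$ using pendant edges only. For each edge $uv\in E(G)$, at least one endpoint lies in $C$, and I attach $w_{uv}$ to the endpoint in $C$ of smallest index; for each face $f$ I attach the leaf-vertex $u_f$ to $w_f$; and for each original vertex $u\notin C$ I attach it to the incident face-vertex $w_f$ of smallest index. Each newly attached vertex is absent from the current forest, so every addition is a pendant edge that preserves connectivity and creates no cycle. Counting edges ($|C|+|F(G)|-1$ in $T_0$, then $|E(G)|$, $|F(G)|$, and $|V(G)|-\tau(G)$) gives exactly $|V(G')|-1$, so the result $T$ is a spanning tree.

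It remains to verify the degree profile, which is where minimality of $C$ enters. Every edge-vertex, leaf-vertex, and original non-$C$ vertex receives exactly one edge in the pendant phase and none in $T_0$, hence is a leaf. Every face-vertex has degree at least one in $T_0$ and additionally carries the edge to its leaf-vertex $u_f$, so is internal. For $c\in C$, minimality of the minimum vertex cover guarantees a ``private'' edge $e$ covered only by $c$; for that edge the chosen endpoint is $c$, so $c$ receives a pendant edge from $w_e$ on top of its $T_0$-edge and is internal. Thus $T$ has precisely the prescribed leaves, and since $T_0$ and all the choices above are computable in polynomial time, so is $T$. The only genuinely delicate point is the connectivity of $\mathcal{H}$ through the dual graph; the remainder is bookkeeping.
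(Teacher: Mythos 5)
Your proof is correct, and it proves the lemma by a genuinely different (though related) construction. Both arguments share the same ingredients: Euler's formula for the leaf count, hanging each edge-vertex $w_{uv}$ on a smallest-index endpoint in $C$, minimality of $C$ providing a private edge that forces every $c \in C$ to be internal, and---the crucial point---connectivity obtained from the dual graph together with the fact that $C$ contains an endpoint of every edge shared by two adjacent faces. The difference is how connectivity is deployed. The paper builds the tree operationally by a BFS over the dual rooted at the outer face: when a face $f_i$ is visited, all not-yet-reached vertices of $f_i$ (including those of $C$) are attached to $w_{f_i}$, and $w_{f_i}$ is attached to a cover vertex of an edge shared with its BFS-parent, connectivity being the BFS invariant. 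You instead isolate the one nontrivial step as a standalone claim---the bipartite incidence graph $\mathcal{H}$ on $C \cup \{w_f : f \in F(G)\}$ is connected---proved by transporting dual paths through cover vertices, then take any spanning tree of $\mathcal{H}$ and attach everything else as pendants, verifying the tree property by an edge count. Your modularization is cleaner as a proof of the lemma as stated. What the paper's version buys is extra structural control used later in the proof of Theorem~\ref{thm:stwml-planar}: the forward direction there relies on every original vertex having a face-vertex parent determined by a BFS order on faces that is independent of the vertex cover, so canonical trees of {\sf TJ}-adjacent covers differ only locally. Your tree need not have this property (a vertex of $C$ may have several face-vertex neighbors, and the skeleton itself depends on $C$), so substituting your construction into the paper would require re-checking that later argument; for the lemma itself, your proof is complete.
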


\begin{proof}
We first explain how to construct $T$ from $C$. 
For every edge-vertex $w_{uv}$, we select in $T$ an edge between $w_{uv}$ and a vertex of $\{u,v \} \cap C$ (if both $u$ and $v$ are in $C$ we attach it to the one with the minimum label value). Such a vertex exists since $C$ is a vertex cover of $G$. For every face $f$, we select the edge $w_fu_f$.

Let $f_\o$ be the outer face and let $w_\o$ be the face-vertex of $f_\o$. We attach every vertex of $f_\o$ to $w_\o$. If the resulting graph is already a spanning tree, we are done.
 
We say that two faces are adjacent if they share a common edge. We now consider the following graph $G''$: we create a vertex for every face of $G$ and two vertices of $G''$ are adjacent if the corresponding faces of $G$ are adjacent. In other words, $G''$ is the dual graph of $G$, without multiple edges. We then run a breadth-first search algorithm from the vertex of $G''$ which corresponds to $f_\o$. Here again, we can first label the vertices in order to process the children in the same order. We use the breadth-first search to incrementally increase the size of the connected component of $T$ which contains $w_\o$ and denoted by $S_\o$. Observe that every vertex of $f_\o$ belongs to $S_\o$.

Now, let $f_i$ be the $i$-th face visited by the breadth-first search traversal. We assume that all the vertices that belong to faces whose index is strictly less than $i$ already belong to $S_\o$. This includes the edge-vertices and the face-vertices with their respective degree-one neighbor. We now explain how to add the vertices of $f_i$ to $S_\o$. Let $f_j$ be the parent of $f_i$ in the BFS traversal, for some $j < i$. By assumption, all the vertices of $f_j$ belong to $S_\o$. Since $f_j$ is the parent of $f_i$, these two faces share at least one edge. Among all the edges incident to both $f_i$ and $f_j$, we pick the one which is covered in $C$ by the vertex with the smallest identifier. We denote by $u$ this vertex. We attach every vertex in $f_i \setminus S_\o$ to the face-vertex $w_i$. Finally, we attach $w_i$ to $u$. 

Therefore, at the end of the BFS traversal, every vertex belongs to $S_\o$. Since at every step, we only attach vertices that did not belong to $S_\o$ before, we do not create any cycle. It follows that the resulting graph is a spanning tree. Besides, it is clear that it can be computed in polynomial time. It remains to prove that the number of leaves is exactly $2(|E(G)|+1)-\tau(G)$. First, recall that for every planar graph $G=(V,E)$, the number of faces of $G$ is precisely $2-|V|+|E|$. Now, let $T$ be the spanning tree obtained by the previous algorithm. We classify the vertices of $G'$ in four different categories: the edge-vertices, the face-vertices, the leaves attached to these face-vertices, and finally the original vertices from $G$. By construction, each edge-vertex and each vertex in $V(G) \setminus C$ is a leaf in $T$. On the other hand, each face-vertex is an internal node in $T$ since it must be adjacent to his degree-one neighbor and it must be connected to the rest of the spanning tree $T$. Finally, since $C$ is minimum and thus minimal, for every vertex $u \in C$, there is an edge $uv \in E(G)$ which is only covered by $u$. Therefore, it follows from the construction of $T$ that the corresponding edge-vertex $w_{uv}$ is attached to $u$ and thus that $u$ is an internal node. 

As a result, the total number of leaves in $T$ is $|F(G)|+|E(G)|+|V(G)|-|C| = 2(|E(G)|+1)-\tau(G)$, as desired. 
\end{proof}

Recall that $(G,C_\source, C_\target)$ is an instance of \textsc{Minimum Vertex Cover Reconfiguration}. We already explained how to construct the corresponding graph $G'$ from $G$. By Lemma \ref{lemma:vc-st}, we can compute in polynomial time two spanning trees $T_\source$ and $T_\target$ from $C_\source$ and $C_\target$ with $2(|E(G)|+1)-\tau(G)$ leaves. Finally, we set $k = 2(|E(G)|+1)-\tau(G))$. Let $(G',k,T_\source,T_\target)$ be the resulting instance of \STWML. We claim that $(G,C_\source, C_\target)$ is a \textsf{yes}-instance if and only $(G',k,T_\source, T_\target)$ is a \textsf{yes}-instance.

\vspace{\baselineskip}

($\Rightarrow$)
Suppose first that $(G,C_\source, C_\target)$ is a \textsf{yes}-instance and let $S=\langle C_1 = C_\source, C_2, \ldots, C_\ell = C_\target\rangle$ be a reconfiguration sequence between $C_\source$ and $C_\target$. For every vertex cover $C_i$ in the sequence, there exists a spanning tree $T_i$ of $G'$ associated with $C$ with $k$ leaves by Lemma \ref{lemma:vc-st}. It is sufficient to show that we can transform two spanning trees $T_i$ and $T_{i+1}$ corresponding to two consecutive vertex covers $C_i$ and $C_{i+1}$, without increasing the number of internal nodes during the transformation. Let $u$ be the vertex of $C_i \setminus C_{i+1}$ and let $v$ be the vertex of $C_{i+1} \setminus C_i$. We first claim that $uv \in E(G)$. Suppose that $uv \not\in E(G)$. Since $v \not\in C_i$, all the neighbors of $v$ belong to $C_i$ by the definition of vertex cover. Therefore, $C_{i+1} \setminus \{ v \}$ contains $N[v]$ and thus is a vertex cover. A contradiction with the minimality of $k$.

Since $v \not\in C_i$, it follows from the construction of $T_i$ that $v$ is a leaf. Therefore, before attaching any vertex to $v$, we first need to reduce the degree of $u$. Since $C_i \, \triangle \, C_{i+1} = \{u,v\}$, we have that $N[u] \setminus \{v\} \subseteq C_i$. Recall that every vertex that belongs to $C_i$ is an internal node in $T_i$. Let $X$ be the set of edge-vertices except $w_{uv}$ attached to $u$ in $T_i$. First, we attach every vertex in $X$ to its other extremity. 

Now, we root $T_i$ and $T_{i+1}$ on the leaf attached to the face-vertex of the outer face, denoted by $w_\o$. If $u$ belongs to the outer face, its parent in $T_i$ and $T_{i+1}$ is $w_\o$. Therefore, for every face $f$ incident to $u$ such that the corresponding face-vertex $w_f$ is attached to $u$ in $T_i$, we attach $w_f$ to the same vertex as in $T_{i+1}$, except if this vertex is $v$. Since we do not want to increase the number of internal nodes, we first need to attach $w_f$ to a vertex in $(f \cap C_i) \setminus \{u\}$. Note that this vertex exists since any vertex cover contains at least two vertices per face. It follows that now $u$ has degree two. Therefore, we can attach the edge-vertex $w_{uv}$ so that $u$ becomes a leaf and $v$ and internal node. Let $T'$ be the resulting tree. Finally, we can now attach to $v$ every face-vertex that is adjacent to it in $T_{i+1}$.

If $u$ does not belong to the outer face, we need to be more careful since we should not isolate $u$ while modifying $T_i$ into $T_{i+1}$. Recall that the parent of $u$ in $T_i$ is the face-vertex corresponding to the first face incident to $u$ visited during the BFS traversal. Since the labeling of the faces is independent of the vertex cover, $u$ has the same parent in $T_{i+1}$ as in $T_i$. The same argument also applies to $v$ and thus the parent of $v$ is the same in $T_i$ and $T_{i+1}$. Therefore, $(G',k,T_\source, T_\target)$ is a \textsf{yes}-instance, as desired.

\vspace{\baselineskip}
($\Leftarrow$)
For the other direction, let $S' = \langle T_1 = T_\source, T_2, \ldots, T_{\ell-1}, T_\ell=T_\target\rangle$ be a reconfiguration sequence between $T_\source$ and $T_\target$ such that the number of leaves is at least $k$ at any time. Recall that the number of leaves in $T_\source$ and $T_\target$ is maximal. Hence, each spanning tree in $S'$ has exactly $k$ leaves.

We claim that every edge-vertex is a leaf in any spanning tree of $S'$. First, recall that this statement holds for $T_\source$ and $T_\target$. Let $T_i$ be the first spanning tree in $S'$ which contains an edge-vertex as an internal node. Since every edge-vertex is a leaf in $T_{i-1}$ and $|E(T_{i-1}) \, \triangle \, E(T_i)| = 2$, exactly one edge-vertex in $T_i$ is an internal node. Let $w_{uv}$ be this vertex. We assume without loss of generality that $uw_{uv} \in E(T_{i-1})$ and thus the edge in $T_i \setminus T_{i-1}$ is $vw_{uv}$. We consider the (only) edge in $T_{i-1} \setminus T_i$, denoted by $e$. $T_{i-1}$ contains three kinds of edges: between an original vertex and an edge-vertex, between an original vertex and a face-vertex, or between a leaf and a face-vertex.
Since all the vertices of the form $u_f$ or $w_{xy}$ have degree one in $T_{i-1}$, $e$ is necessarily of the form $xw_f$, i.e.\ an edge linking a face-vertex and an original vertex. Recall that $w_f$ is an internal node in any spanning tree of $G'$. Since $w_{uv}$ is a leaf in $T_i$ but not in $T_{i+1}$, the degree of $x$ in $T_i$ \emph{must} be two, otherwise we would increase the total number of internal nodes. Note that $uv \in E(G)$ since $w_{uv}$ is an edge-vertex of $G'$ and thus $G'$ contains a face-vertex $w_{f'}$ adjacent to both $u$ and $v$.  Let $T'_{i+1}$ be the forest obtained from $T_{i+1}$ by removing the edge $vw_{uv}$ and observe that $T'_{i+1} = T_i \setminus \{xw_f\}$. We denote by $C_u$ (respectively $C_v$) the connected component of $u$ (respectively $v$) in $T'_{i+1}$. We apply the same argument as in the proof of Lemma \ref{lemma:vc-st}. The node $w_{f'}$ has a neighbor either in $C_u$ or in $C_v$ (which might be $u$ or $v$) but not in both $C_u$ and $C_v$ otherwise $T_{i+1}$ would contain a cycle. We assume without loss of generality that $w_{f'} \in C_u$. Then, observe that if we add the edge  $vw_{f'}$ to $T'_{i+1}$, we get a spanning tree of $G'$ such that $|T_i \, \triangle \, T'_{i+1}| = 2$ but with $k+1$ leaves, a contradiction. 

It follows that for every $T_i \in S'$, $1 \le i \le \ell$, the number of leaves in $T_i$ is exactly $k = 2(|E(G)|+1)-\tau(G)$, and every edge-vertex of $G'$ is a leaf in $T_i$. From $T_i$, we can deduce a vertex cover $C_i$ of $G$: the vertex that covers the edge $uv \in E(G)$ in $C_i$ corresponds to the neighbor of the edge-vertex $w_{uv}$ in $T_i$. In particular, the corresponding vertex covers of $T_\source$ and $T_\target$ are $C_\source$ and $C_\target$, respectively.

Then, from $S'$, we can deduce a sequence $S'' = \langle C_1 = C_\source, C_2, \ldots, C_{\ell'}= C_\target\rangle$ of vertex covers of $G$. Note that the length of $S''$ is not necessarily the same as the length of $S'$, i.e.\ it is possible that two adjacent spanning trees $T_i$ and $T_{i+1}$ in $S'$ give the same corresponding vertex cover of $G$. It remains to prove that $|C_i| = \tau(G)$ for every $1 \le i \le \ell'$, and $|C_i \, \triangle \, C_{i+1}|=2$ for any two adjacent vertex covers of $S''$, i.e.\ $S''$ is a {\sf TJ}-sequence of minimum vertex covers of $G$. Since $|C_1|=|C_{\ell'}|=\tau(G)$, it is sufficient to prove that $|C_i \, \triangle \, C_{i+1}| = 2$, for every $1 \le i < \ell'$. Let $C_i$ and $C_{i+1}$ be two consecutive vertex covers in $S''$. Let $i'$ be the maximal index such that the vertex cover induced by the spanning tree $T_{i'} \in S'$ is $C_i$. Due to the maximality of $i'$, the vertex cover induced by $T_{i'+1}$ corresponds to $C_{i+1}$, since it cannot be $C_i$. Therefore, the edge in $T_{i'} \setminus T_{i'+1}$ is between an edge vertex and an original vertex. We denote by $uw_{uv} \in E(G')$ this edge. Then, since $w_{uv}$ has degree one in $T_{i'}$, the edge in $T_{i'+1} \setminus T_{i'}$ must be $vw_{uv}$. Therefore, $C_{i+1} = (C_i \setminus \{u\}) \cup \{v\}$ and thus $|C_i \, \triangle \, C_{i+1}| \le 2$ holds, for every $1 \le i < \ell'$ as desired. Hence, $(G',k, T_\source, T_\target)$ is a \textsf{yes}-instance. This concludes the proof of Theorem \ref{thm:stwml-planar}.

\subsection{Two internal nodes}

\begin{theorem}\label{thm:stwml_2int}
 Let $G$ be a graph and $T_\source$ or $T_\target$ be two spanning trees with at most two internal nodes.  Then we can check in polynomial time if one can transform the other via a sequence of spanning trees with at most two internal nodes.
\end{theorem}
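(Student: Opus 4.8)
The plan is to reduce the reachability question to connectivity in a polynomial-size auxiliary graph whose vertices are the possible internal-node sets. First I would describe the shape of the feasible configurations. A spanning tree $T$ with $in(T)\le 2$ has all but at most two vertices as leaves, and since the internal nodes of a tree induce a connected subtree, $T$ is either a star (one internal node $c$, which must then be universal in $G$) or a double star (two adjacent internal nodes $a,b$, so $ab\in E(G)$, with every other vertex attached as a leaf of $a$ or of $b$). Thus a feasible tree is determined, up to its leaf assignment, by its internal-node set $in(T)$, which is a singleton $\{c\}$ or an edge $\{a,b\}$ of $G$. I would record the feasibility conditions: $\{c\}$ is feasible iff $c$ is adjacent to every other vertex; $\{a,b\}$ is feasible iff $ab\in E(G)$, every vertex lies in $N[a]\cup N[b]$, and the leaves admit an assignment giving both $a$ and $b$ degree at least two. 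All of these are checkable in polynomial time, and there are only $O(n^2)$ candidate sets.

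Next I would build the auxiliary graph $\mathcal{R}$ on the feasible internal-node sets, putting an edge between $I_1$ and $I_2$ exactly when some feasible tree with internal set $I_1$ and some feasible tree with internal set $I_2$ differ by a single edge flip. The key point is that reachability of $T_\target$ from $T_\source$ among trees with at most two internal nodes is equivalent to $in(T_\source)$ and $in(T_\target)$ lying in the same connected component of $\mathcal{R}$. The forward direction is immediate: along any valid flip sequence, consecutive trees have internal sets that are equal or $\mathcal{R}$-adjacent, so the internal sets trace a walk in $\mathcal{R}$. For the converse I would use Lemma~\ref{lem:stwml_samecomp}: if $in(T_1)=in(T_2)=I$ then the transformation it provides has every intermediate tree $T'$ with $in(T')\subseteq I$, hence at most two internal nodes, so all feasible trees with a fixed internal set lie in one component of the restricted reconfiguration graph. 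Given an $\mathcal{R}$-path $I_0,\dots,I_r$ I would then concatenate: move inside the bag $I_0$ to a witness of the flip $I_0I_1$, perform that flip, move inside $I_1$ to a witness of $I_1I_2$, and so on, every step staying within at most two internal nodes. Thus the algorithm is to construct $\mathcal{R}$ and test whether $in(T_\source)$ and $in(T_\target)$ are in the same component by BFS.

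It then remains to compute the edges of $\mathcal{R}$ in polynomial time, and this is where the main work lies. I would analyse how a single flip (delete $e$, add $f$) can change a star or double star. Since a flip lowers the degree of only the two endpoints of $e$ and raises the degree of only the two endpoints of $f$, for $n\ge 5$ at least one centre has degree at least three (by pigeonhole, since there are at least three leaves) and so cannot become a leaf in a single flip; hence that centre is preserved, and only a degree-two centre can be exchanged. Consequently an $\mathcal{R}$-edge can only be of the form $\{c\}\!-\!\{c,m\}$ (a star acquires a second centre) or $\{a,b\}\!-\!\{a,b'\}$ (a double star exchanges its degree-two centre), the two directions being symmetric. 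For each such candidate I would give a local, polynomially checkable condition — essentially that both endpoints are feasible and that a single leaf can be moved between the relevant centres along an edge of $G$ — and prove it characterises the existence of the required pair of flip-adjacent witnesses; the small cases $n\le 4$ (where deleting the backbone edge $ab$ of a $4$-vertex double star can flip to a disjoint centre pair) are handled by direct inspection.

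The delicate part, and the main obstacle, is precisely this case analysis: verifying that each allowed reshaping of a (double) star is realisable by one flip between appropriate leaf assignments, and that no other transition between internal-node sets is possible, so that $\mathcal{R}$ faithfully captures single-flip adjacency while every step remains feasible. One subtlety to check is that when both centres have degree at least three no flip can change the internal set at all without creating a forbidden third internal node, which confirms that only a degree-two centre is ever exchanged. Once $\mathcal{R}$, with its $O(n^2)$ vertices and polynomially many edges, is computed, the BFS runs in polynomial time, completing the proof.
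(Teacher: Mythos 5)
Your overall architecture is sound and genuinely different from the paper's. The paper proves a direct structural characterization: it defines \emph{pivot} vertices (degree at least $n-2$ in $G$), shows that a double star with no pivot among its centres is frozen (Claim~\ref{claim:frozen}), shows that all trees sharing a pivot internal node lie in one component (Claim~\ref{claim:pivot}), and finishes with an explicit domination test when the two trees contain distinct pivots. You instead reduce reachability to BFS on a meta-graph $\mathcal{R}$ over the $O(n^2)$ feasible internal-node sets. That reduction is correct: the forward direction is immediate, and the backward direction follows from Lemma~\ref{lem:stwml_samecomp} exactly as you say, since for $in(T_1)=in(T_2)=I$ with $|I|\le 2$ every intermediate tree has $in(T')\subseteq I$. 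Your structural observations are also right: in a double star $\deg_T a+\deg_T b=n$, so for $n\ge 5$ some centre has tree-degree at least three and survives any single flip, only a degree-two centre can be exchanged, and the $n\le 4$ backbone-deletion exception is real. This meta-graph route is more mechanical than the paper's and would subsume the paper's separate treatment of the one-internal-node case.

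The genuine gap is in the one step you defer, which is where all the mathematical content sits: the characterization of the edges of $\mathcal{R}$. The local condition you sketch for $\{a,b\}\sim\{a,b'\}$ --- both endpoints feasible plus a leaf $\ell$ with $\ell b,\ell b'\in E(G)$ movable between the centres --- is not sufficient. Since $\deg_T a+\deg_T b=n$, a witness tree in which $b$ has degree two forces $\deg_T a=n-2$, so the retained centre $a$ must be adjacent in $G$ to \emph{every} vertex other than $b$ and the single leaf $\ell$ hanging on $b$; in particular $\deg_G a\ge n-2$, which is precisely the paper's pivot condition. Your condition omits this and would misclassify instances. Concretely, take $V=\{a,b,b',\ell,w_1,w_2,p\}$ with edges $ab$, $ab'$, $a\ell$, $ap$, $b\ell$, $b'\ell$, $bw_1$, $bw_2$, $b'w_1$, $b'w_2$. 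Both $\{a,b\}$ and $\{a,b'\}$ are feasible internal sets and $\ell$ is a common neighbour of $b$ and $b'$, so your test would insert the $\mathcal{R}$-edge and answer \YES. But every vertex has degree at most $4=n-3$, so there is no pivot: in any double star with internal set $\{a,b\}$ the tree-degrees split as $(3,4)$ or $(4,3)$, both centres have tree-degree at least three, and by your own pigeonhole observation no flip can change the internal set --- the instance is frozen and the answer is \NO. The fix stays within your framework: add to each candidate $\mathcal{R}$-edge the requirement that the retained centre dominate $V$ minus at most one vertex (and that the star centre be universal in the $\{c\}\sim\{c,m\}$ case), after which your edge test becomes correct and the BFS algorithm goes through, in effect re-deriving the paper's pivot machinery locally, edge by edge.
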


\begin{proof}
 We first consider the case where either $T_\source$ or $T_\target$ has one internal node, but not both. We assume without loss of generality that $in(T_\source) = \{u\}$, with $u \in A$. If $u \in in(T_\target)$, we just have to attach every leaf in $T_\target$ to $u$, one by one. It follows that $in(T_\source) \cap in(T_\target) = \emptyset$ and thus $u$ has degree one in $T_\target$. Hence, if we want to reconfigure $T_\source$ into $T_\target$, we \emph{must} remove all but one edges incident to $u$ and thus we must create a new internal node. Therefore, it is sufficient to consider the last following case: $|in(T_\source)| = |in(T_\target)| = 2$.
    
First, observe that if $in(T_\source) = in(T_\target)$, then $(G,k,T_\source,T_\target)$ is a \textsf{yes}-instance. Indeed, we just have to change the parent of a node, and this can be done without increasing the number of internal nodes. Hence, in the remaining of the proof of this case, we only consider the case $in(T_\source) \neq in(T_\target)$. 

A vertex $u$ is a \emph{pivot} vertex of $G$ if  $\deg{u} \ge n-2$ in $G$ ($\deg{u}$ being the size of the neighborhood of $u$, $u$ not included). A spanning tree $T$ of $G$ is {\em frozen} if all the spanning trees in its component of the reconfiguration graph have the same internal nodes. 

\begin{claim} \label{claim:frozen}
Let $T$ be a spanning tree of $G$. If $in(T)$ does not contain a pivot vertex, then $T$ is frozen.
\end{claim}
    
\begin{proofclaim}
By contradiction. Assume that $in(T)$ does not contain a pivot vertex and thus each vertex in $in(T)$ has degree at most $n-3$. Then, we want to prove that we cannot modify $in(T)$. Let $in(T) = \{u,v\}$, and note that $uv \in E(T)$. Note also that since $\deg{u} \le n-3$ and $\deg{v} \le n-3$, both $u$ and $v$ have degree at least three in $T$. 
Since $k=2$ and $|in(T)|=2$, we first need to lower the degree of $u$ or $v$ to one or two, without creating a new internal node. Suppose without loss of generality that we want to lower the degree of $u$, the other case follows by symmetry. 
First, observe that we cannot remove the edge $uv \in E(T)$ with an edge flip because it would create a new internal node, as the degree of both $u$ and $v$ is at least three. Recall that $\sum_{u \in V(T)} \deg_T{u} = 2n -2$. Since $T$ has $n-2$ leaves, $\deg_T{u} + \deg_T{v} = n$. Hence, if we want $u$ to have degree two, $v$ must have degree $n-2$, which is not possible. \claimqed
\end{proofclaim}

\begin{claim}\label{claim:pivot}
Let $u$ be a pivot vertex. All the trees containing $u$ as internal vertex are in the same connected component of the reconfiguration graph.
\end{claim}

\begin{proofclaim}
Let $T$ and $T'$ be two trees such that $u \in in(T) \cap in(T')$. If the other internal vertices (if they exist) are the same, then the conclusion follows from Lemma~\ref{lem:stwml_samecomp}. So we can assume that $in(T)=\{u,v\}$ and $in(T')=\{u,w\}$ with $v \ne w$. 
Since $\deg{u} \ge n-2$, there exists a spanning tree $T_2$ with internal nodes $\{u,v\}$ such that $\deg{u}=n-2$ and $\deg{v}=2$ and $uv \in T_2$. By Lemma~\ref{lem:stwml_samecomp}, this spanning tree is in the component of $T$. Let $z$ be the neighbor of $v$ distinct from $u$. Now remove the edge $vz$ and create $wz$ or $uz$ (one of them must exist since $\{u,w\} = in(T_2)$. The internal nodes of the resulting tree is in $\{u,w \}$ and then the conclusion follows by Lemma~\ref{lem:stwml_samecomp}. \claimqed
\end{proofclaim}

A spanning tree $T$ \emph{contains a pivot vertex} if $in(T)$ contains a pivot vertex.
By Claim \ref{claim:pivot}, if $T_\source$ and $T_\target$ contains a common pivot vertex, then the answer is positive. (Note that the existence of a pivot vertex can be checked in polynomial time). If $T_\source$ or $T_\target$ does not contain any pivot vertex, then the answer is negative by Claim~\ref{claim:frozen} (except if the set of internal nodes are the same by Lemma~\ref{lem:stwml_samecomp}). So we restrict our attention to the case where they contain a pivot vertex which is different.

Let $in(T_\source)= \{u,v \}$ and $in(T_\target) = \{x,y\}$ where $u$ and $x$ are pivot vertices. (note that we can possibly have $v=y$). If $u$ (or $x$) is a universal vertex, we can modify $in(T_\source)$ (or $in(T_\target)$) into a spanning tree $T$ with $in(T)=\{u \}$ (resp. $\{x \}$). Claim~\ref{claim:pivot} ensures that both $T_\source$ and a spanning containing $u$ and $x$ as internal nodes are in the same component. And this latter spanning tree is in the component of $T_\target$ by Claim~\ref{claim:pivot}. So we can assume that none of the four internal vertices is universal.

If $in(T_\source)$ or $in(T_\target)$ contains two pivot vertices, w.l.o.g. $in(T_\source)$, $u \cup x$ or $v \cup x$ dominates $G$. So there exists a spanning tree $T$ with $in(T)=\{ u,x \}$ or $\{v,x \}$. Up to symmetry, let us say $\{u,x \}$. Again by Claim~\ref{claim:pivot}, $T$ is both in the connected component of $T_\source$ and $T_\target$.

So $in(T_\source)$ and $in(T_\target)$ contain exactly one pivot vertex; respectively $u$ and $x$. 
Observe that, if we want to reconfigure $T_\source$ into $T_\target$, we must remove $u$ from the spanning tree at some point since it does not belong to $in(T_\target)$. But then, just before disappearing, the second internal node has to have degree $n-2$ in the spanning tree, and then has to be a pivot vertex. So the previous paragraph ensures that $T_\source$ can be transformed into $T_\target$ if and only if there exists a spanning tree in the component of $T_\source$ with two pivot vertices. It is the case if and only if there exists a second pivot vertex $w$ such that $\{u,w \}$ dominates the graph, which can be checked in polynomial time.
\end{proof}

One can naturally wonder if this can be extended to larger values of $k$ or if it is special for $k=2$. We left this as an open problem. We were only interested in the case $k=2$ since it was of particular interest for cographs.

\subsection{Cographs}

Recall that the family of cographs can be defined as the family of graphs with no induced $P_4$, or equivalently by the following recursive definition:

\begin{itemize}
    \item $K_1$ is a cograph;
    \item for $G_1$ and $G_2$ any two cographs, the \emph{disjoint union} $G_1 \cup G_2$ is a cograph (the disjoint union being the graph with vertex set $V(G_1) \cup V(G_2)$ and edge set $E(G_1) \cup E(G_2)$);
    \item for $G_1$ and $G_2$ any two cographs, the \emph{join} $G_1 + G_2$ is a cograph (the join being the graph with vertex set $V(G_1) \cup V(G_2$ and edge set $E(G_1) \cup E(G_2) \cup V(G_1)\times V(G_2)$).
\end{itemize}

Cographs can be recognized in linear time, see e.g. \cite{HABIB2005183}.

\begin{theorem}\label{thm:stwml-cographs}
 \STWML can be decided in polynomial time on cographs.
\end{theorem}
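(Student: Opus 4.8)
The plan is to restate everything in terms of internal nodes: a spanning tree has at least $k$ leaves iff it has at most $\ell:=n-k$ internal nodes, so \STWML on $(G,k,T_1,T_2)$ asks whether $T_1$ can be reconfigured into $T_2$ through spanning trees with at most $\ell$ internal nodes. I would first dispose of the small thresholds. If $\ell=2$, the trees have at most two internal nodes and Theorem~\ref{thm:stwml_2int} decides the instance in polynomial time; if $\ell=1$ both trees are stars and, since any single flip leaving a star creates a second internal node, the answer is positive iff $T_1=T_2$. The heart of the statement is the case $\ell\ge 3$, where I claim the answer is \emph{always} positive; the algorithm then just outputs \YES once it checks $\ell\ge 3$.

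For $\ell\ge 3$ I would prove that the reconfiguration graph restricted to spanning trees with at most $\ell$ internal nodes is connected, by exhibiting a canonical target reachable from every tree. As $G$ is a connected cograph on at least two vertices, it is a join $G=G_1+G_2$; fixing $a\in V(G_1)$ and $b\in V(G_2)$ gives an edge $ab\in E(G)$ whose endpoints dominate $G$, and I take as canonical target the double star $T^\ast$ with $in(T^\ast)=\{a,b\}$ (degenerating to a star when $a$ is universal). The key lemma is a reduction step: if a spanning tree $T$ has $m\ge 3$ internal nodes, then through trees with at most $m$ internal nodes one can reach a spanning tree with only $m-1$ internal nodes. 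The internal nodes of $T$ form a subtree $S$; pick a leaf $w$ of $S$, let $p$ be its unique neighbour in $S$ and $q$ another neighbour of $p$ in $S$ (which exists since $|S|=m\ge 3$). Every $T$-child $c$ of $w$ is a leaf of $T$, and I would show $c$ is adjacent in $G$ to $p$ or to $q$: otherwise $c,w,p,q$ induce a $P_4$ (we have $cw,wp,pq\in E(G)$ and $cp,cq,wq\notin E(G)$, the last because $w$ is a leaf of $S$), contradicting $P_4$-freeness. Reattaching each child of $w$ to $p$ or $q$ is a single edge flip onto an already internal vertex, so it never increases the number of internal nodes, and once all children of $w$ are moved $w$ becomes a leaf, yielding a tree with $m-1$ internal nodes.

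Iterating this reduction drives any $T$ with at most $\ell$ internal nodes down to a tree with at most two internal nodes while staying within the budget throughout. It then remains to connect all such ``small'' trees to $T^\ast$, where I would exploit the slack (two internal nodes against a budget at least three): Lemma~\ref{lem:stwml_samecomp} lets me rearrange freely inside a fixed internal set, and the one spare internal slot lets me, one centre at a time, introduce $a$ and then $b$ and discard the unwanted centres (reapplying the reduction step with $m=3$), thereby rotating the pair of centres onto $\{a,b\}$. Chaining $T_1\to T^\ast$ with $T^\ast\to T_2$ produces the required transformation, so every instance with $\ell\ge 3$ is positive, which together with the $\ell\le 2$ cases proves the theorem.

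The step I expect to be the main obstacle is the budget-tight regime $m=\ell\ge 3$ with all internal nodes concentrated on one side of the join, where there is no room to spawn an auxiliary internal vertex. The observation that resolves it is precisely the $P_4$-free argument above: it forces every child of a leaf of the internal subtree $S$ to be adjacent to one of the next two vertices up $S$, so the reduction is always count-non-increasing and no slack is needed until the final, two-internal-node, phase.
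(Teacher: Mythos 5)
Your overall architecture matches the paper's (reduce to internal nodes, handle thresholds $\ell\le 2$ via Theorem~\ref{thm:stwml_2int}, answer \YES whenever $\ell\ge 3$ by funnelling both trees to a canonical two-centre tree), but the proof of your key reduction lemma has a genuine gap. You justify $wq\notin E(G)$ ``because $w$ is a leaf of $S$'', conflating adjacency in the tree $T$ with adjacency in $G$: that $w$ has a unique $S$-neighbour in $T$ says nothing about the edge $wq$ existing in $G$. If $wq\in E(G)$, the set $\{c,w,p,q\}$ induces a paw (triangle $wpq$ plus pendant $c$), which is itself a cograph, so $P_4$-freeness yields no contradiction and the child $c$ may be adjacent to neither $p$ nor $q$ --- indeed to no internal node other than $w$. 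Concretely, take $G=K_1+(K_1\cup P_3)$, i.e.\ a universal vertex $w$, a vertex $c$ whose only neighbour is $w$, and a path $p\,q\,x$; let $T$ be the Hamiltonian path $c\,w\,p\,q\,x$, so $S=w\,p\,q$ and $m=3$. At the leaf $w$ of $S$, the child $c$ is adjacent to neither $p$ nor $q$, so your reattachment step fails there (it happens to succeed at the other leaf $q$, but your lemma as stated picks an arbitrary leaf and gives no argument that \emph{some} leaf always works; with two universal leaves of $S$ each carrying a pendant child, no leaf of $S$ can be emptied and one must instead collapse an interior vertex of $S$, a move your recipe does not cover). The paper sidesteps exactly this difficulty by exploiting the join decomposition rather than $P_4$-freeness locally: for an $(A,B)$-tree it picks internal $a\in A$, $b\in B$ with $ab\in T$ and reattaches every other vertex to $a$ or $b$ directly, which is always possible since $A$--$B$ is a join, reaching two internal nodes at once; for a one-sided tree it recurses into the cograph $G[A]$ while keeping one vertex internal throughout.

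A secondary, smaller error: for $\ell=1$ the answer is not ``\YES{} iff $T_1=T_2$''. On a graph with at most three vertices every spanning tree has at most one internal node, so the threshold never binds and distinct stars are reachable (e.g.\ in $K_3$, flip $bc\rightsquigarrow ac$ changes the centre from $b$ to $a$); the correct condition, as in the paper, is that $G$ has at most three vertices or the two stars share their centre. Your final phase (using the spare slot when $\ell\ge 3$ to rotate the pair of centres onto $\{a,b\}$, with Lemma~\ref{lem:stwml_samecomp} handling trees with a fixed internal set) is sound and essentially the paper's argument, but as written your proof of the theorem is incomplete until the reduction lemma is repaired, e.g.\ by replacing the local $P_4$ argument with the join-based collapse.
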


\begin{proof}
Let $G=(V,E)$ be a cograph and let $(G,k, T_\source, T_\target)$ be an instance of \STWML. We denote by $n$ the number of vertices of $G$. First, since $T_\source$ and $T_\target$ are two spanning trees of $G$, $G$ must be connected. Hence, $G$ has been obtained from the join of two graphs, let us say $A$ and $B$. Recall that maximizing the number of leaves of a spanning tree is equivalent to minimizing the number of internal nodes. Hence, in the remaining of this proof, we refer as $k$ to the threshold on the maximum number of internal nodes.

If $k=1$, any spanning tree of $G$ is a star and thus contains exactly one internal node. Therefore, two spanning trees of $G$ are reconfigurable if and only if $G$ contains at most three vertices or the same internal vertex by Lemma~\ref{lem:stwml_samecomp}. Hence, we can safely assume that $k \ge 2$. Since G is the join of two cographs, $G$ can be partitioned into two subsets $A$ and $B$ such that  $G[A]$ and $G[B]$ are two cographs, and we have all possible edges between $A$ and $B$. Let $T$ be a spanning tree of $G$, and let $in(T)$ be its set of internal nodes. We say that $T$ is an \emph{$A$-tree} (resp. \emph{$B$-tree}) if $in(T) \subseteq A$ (resp. $in(T) \subseteq B$. Otherwise, we say that $T$ is an \emph{$(A,B)$-tree}.

If $k=2$, Theorem~\ref{thm:stwml_2int} ensures that the problem can be decided in polynomial time. So from now on, we can assume that $k \ge 3$.
In this case, we claim that $(G,k,T_\source,T_\target)$ is a \textsf{yes}-instance. 

Let us first prove by induction on the size of $G$ that there exists a transformation from any tree $T$ with at most $k$ internal nodes to a tree $T'$ with at most 
$k-1$ of them such that all along the transformation there exists a vertex $x$ which is always an internal node. We moreover prove that this transformation can be found in polynomial time. If $T$ has at most two internal nodes, the conclusion follows. So we can assume that $T$ has exactly $k$ internal nodes.

 If $T$ is a $(A,B)$-tree, we can reach $T'$ as follows. Let $a \in in(T) \cap A$ and $b \in in(T) \cap B$ such that $ab \in T$ (such an edge must exist). Using edge flips, we make $a$ adjacent to any vertex in $B$ and $b$ incident to every vertex of $A$ (which is possible since $A-B$ is a join). After all these modifications, the resulting tree has exactly two internal nodes.
 
 So we can assume that $T$ is an $A$-tree or a $B$-tree, without loss of generality an $A$-tree. Thus every vertex in $B$ is a leaf and then the restriction $T_A$ of $T$ to $G[A]$ also is a spanning tree of $G[A]$. By induction, since $G[A]$ is a connected cograph, we can find in polynomial time a transformation of $T_A$ into a $T_A'$ in such a way that $x$ is an internal node all along the transformation (and this transformation can be found in polynomial time). This transformation can be adapted for $T$ by first connecting all the vertices of $B$ to $x$ using edge flips and then transforming the edges of $G[A] \cap T$ into $T_A'$. All along the transformation $x$ is an internal node and at any step the set of internal nodes are precisely the ones of the tree restricted to $G[A]$.
 
So we can assume that $T_\source$ and $T_\target$ have at most $k-1$ internal nodes.
Let us define a {\em canonical} spanning tree $T_\canonical$ with two internal nodes and show that both $T_\source$ and $T_\target$ can be reconfigured into $T_\canonical$. We define $in(T_\canonical)$ as follows: we pick a vertex $a \in A$, and a vertex $b \in B$ arbitrarily. We only explain without loss of generality how to reconfigure $T_\source$ into $T_\canonical$. 

Since $|in(T_\source)|<k$, we can trivially modify it into $in(T_\canonical)$.
We only show the statement for $|in(T_\source)|=2$, and $k=3$. The proof is similar for other values of $k$. Let $in(T_\source) = \{u,v\}$. Suppose first that $T_\source$ is an $A$-tree or a $B$-tree. We will consider the case where $T_\source$ is an $(A,B)$-tree later. We assume without loss of generality that $T_\source$ is an $A$-tree. We first add $b$ to $in(T_\canonical)$, i.e. we attach each vertex in $A$ to $b$. Observe that we can now remove a vertex in $\{u,v\}$ since all the vertices in $A$ are covered by $b$ and only vertex is needed to cover $B$. It follows that $T_\source$ is now an $(A,B)$-tree with two internal vertices. Hence, we can now first $a$ (it creates a third internal node but this is allowed since $k \ge 3$. It remains to remove the vertex in $(in(T_\source) \cap A) \setminus \{a\}$.
This concludes the proof of Theorem \ref{thm:stwml-cographs}.
\end{proof}

\subsection{Interval graphs}

A graph $G$ is an \emph{interval graph} if $G$ can be represented as an intersection of segments on  the line. More formally, each vertex can be represented with a pair $(a,b)$ (where $a\leq b$) and vertices $u=(a,b)$ and $v=(c,d)$ are adjacent if the intervals $(a,b)$ and $(c,d)$ intersect.
Let $u=(a,b)$ be  a vertex; $a$ is the \emph{left extremity} of $u$ and $b$ the \emph{right extremity} of $u$. The left and right extremities of $u$ are denoted by respectively $l(u)$ and $r(u)$. Given an interval graph, a representation of this graph as the intersection of intervals in the plane can be found in $\mathcal{O}(|V|+|E|)$ time (see for instance~\cite{Booth76}). 
Using small perturbations, we can moreover assume that all the intervals start and end at distinct points of the line. In the remaining of this section we assume that we are given such a representation of the interval graph.

\begin{theorem}\label{thm:stwml-interval}
 \STWML can be decided in polynomial time on interval graphs.
\end{theorem}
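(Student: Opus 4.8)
The plan is to reduce the spanning-tree reconfiguration problem to a reconfiguration problem on the \emph{sets of internal nodes}, and then to decide the latter with a left-to-right dynamic program over the interval representation. Since maximizing the number of leaves is equivalent to minimizing the number $in(T)$ of internal nodes, the instance asks whether $T_\source$ can be transformed into $T_\target$ through spanning trees with at most $n-k$ internal nodes. By Lemma~\ref{lem:stwml_samecomp}, any two spanning trees with the same set of internal nodes lie in the same connected component of the reconfiguration graph; moreover that lemma lets us move between trees without ever using an internal node outside the union of the two current internal sets. Hence the whole question depends only on the internal-node sets: defining an auxiliary reconfiguration graph $\mathcal{R}$ whose vertices are the \emph{admissible} vertex sets $I$ (those that arise as the internal-node set of some spanning tree) with $|I|\le n-k$, the trees $T_\source,T_\target$ are reconfigurable if and only if $in(T_\source)$ and $in(T_\target)$ lie in the same component of $\mathcal{R}$.

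First I would pin down which sets are admissible and what the edges of $\mathcal{R}$ are. A set $I$ is the internal-node set of some spanning tree exactly when $G[I]$ is connected, $I$ dominates $V\setminus I$, and every vertex of $I$ can be given degree at least two (the only obstruction being a vertex all of whose neighbours lie in $I$ and which is forced to be a leaf of $G[I]$). A single edge flip changes vertex degrees by $\pm 1$, so it modifies $I$ by at most adding one vertex, removing one vertex, or a simultaneous add-and-remove; conversely each such local modification can be realized by an edge flip between suitable spanning trees, again by appealing to Lemma~\ref{lem:stwml_samecomp} to pass between trees sharing an internal set. Thus $\mathcal{R}$ is a connected-dominating-set reconfiguration graph under token addition/removal and token sliding, subject to the budget $n-k$ on the set size.

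The interval structure is what makes $\mathcal{R}$ tractable. Fixing the representation with distinct endpoints and ordering vertices along the line, $G[I]$ is connected precisely when the intervals of $I$ cover a contiguous stretch of the line, and both domination and realizability become local covering conditions at each point. Following the token-sliding dynamic program of Bonamy and Bousquet~\cite{BonamyB17}, I would sweep a point from left to right and maintain at the frontier only a bounded \emph{signature}: the status (internal or leaf, and whether already joined to the processed prefix) of the intervals currently crossing the sweep point, together with the number of internal nodes committed so far measured against the budget. The structural fact to prove is that, although many intervals may overlap, only polynomially many signatures are reachable and relevant, because a slide of an internal node across the frontier is controlled by the extreme active intervals rather than by the full active subset. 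Reachability in $\mathcal{R}$ between $in(T_\source)$ and $in(T_\target)$ then reduces to reachability between two states of this polynomial-size transition system, testable in polynomial time.

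The main obstacle is twofold. The delicate part is establishing that $\mathcal{R}$ faithfully captures the spanning-tree reconfiguration graph: I must show the local add/remove/slide moves are both sound (each corresponds to a genuine edge flip) and complete (a transformation that goes back and forth through large intermediate trees can always be simulated by these local moves while respecting the $n-k$ budget). The second difficulty, exactly as for independent sets in~\cite{BonamyB17}, is bounding the dynamic-programming state to polynomial size in the presence of arbitrarily wide cliques of overlapping intervals, which requires arguing that the reachable frontier signatures are governed by a constant amount of boundary data rather than by the exact set of active internal intervals. Dispatching the realizability corner cases (forced-leaf vertices and the small-size regimes $|I|\le 2$, which overlap with Theorem~\ref{thm:stwml_2int}) then completes the argument.
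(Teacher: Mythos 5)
Your reduction to a reconfiguration graph $\mathcal{R}$ on internal-node sets is unsound, and this is the heart of the matter, not a deferred technicality. In one direction, your claim that a single edge flip modifies $in(T)$ by at most one addition, one removal, or one exchange is false: a flip removes an edge $uv$ and adds an edge $ab$, so up to four vertices change degree, and the symmetric difference of internal sets can be $4$. (Take the path $a$--$u$--$v$--$b$ inside a tree, remove $uv$ and add $ab$: the tree becomes $u$--$a$--$b$--$v$, with $in$ changing from $\{u,v\}$ to $\{a,b\}$.) In the other, and more fatally, an exchange between two admissible sets need not be realizable by flips within the budget: adjacency of sets in your $\mathcal{R}$ does not imply adjacency (or even connectivity) in the true reconfiguration graph. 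The paper's Claim~\ref{claim:frozen} inside Theorem~\ref{thm:stwml_2int} exhibits exactly this phenomenon: a tree with $in(T)=\{u,v\}$ where neither vertex is a pivot is \emph{frozen}, even though other admissible two-element sets $\{u',v\}$ (differing by a single token jump) may exist; such instances are easily realized by interval graphs when the budget is tight, so your $\mathcal{R}$ declares \YES on \NO-instances. Realizing a set move by a flip requires degree preconditions (the leaving vertex must be reducible to degree two while the entering endpoint attaches between already-internal vertices), and whether these can be arranged is precisely what the paper's Lemmas~\ref{lem:notincluded} and~\ref{lem:notincluded2} establish only under specific structural hypotheses (interval containment, or a triangle with $N[u]\subseteq N[v]\cup N[w]$) --- not for arbitrary admissible exchanges. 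Lemma~\ref{lem:stwml_samecomp} only connects trees sharing the \emph{same} internal set; it cannot bridge this gap.

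Your second pillar --- that a left-to-right sweep maintains only polynomially many frontier signatures --- is asserted rather than proved, and it is not clear it can be proved for your $\mathcal{R}$, since internal nodes (unlike the independent sets of~\cite{BonamyB17}) may form arbitrarily large cliques crossing the sweep point. The paper's actual route is structurally different: it first normalizes so that internal sets induce paths (via Lemmas~\ref{lem:notincluded} and~\ref{lem:notincluded2}), proves a monotonicity statement (Remark~\ref{rem:orderspanning}) comparing the ordered internal nodes of any tree with those of a greedily-defined canonical tree $T_C$, shows every non-C-minimum tree reaches $T_C$ (Corollary~\ref{coro:stwml_notCmin}), and then decides reachability between C-minimum trees by a recursion over suffix graphs $H_v$ rooted at the first internal node (Lemmas~\ref{lem:HvtoH}, \ref{lem:secondvertex}, \ref{lem:stwml_computingR}, and the characterization in Lemma~\ref{lem:cclH_v}), computing for each vertex whether it is good or normal together with the extremal second internal nodes $\ell'_v(T)$ and $r'_v(T)$. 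None of these invariants --- C-minimality, the frozen/canonical dichotomy, path structure, first-internal-node monotonicity --- appear in your sketch, and without them neither the soundness of the set abstraction nor the boundedness of the dynamic-programming state can be recovered.
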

The proof techniques are inspired from~\cite{BonamyB17}.
The rest of this section is devoted to prove Theorem~\ref{thm:stwml-interval}. Recall that, for every tree, the number of leaves is equal to $n$ minus the number of internal nodes. So, for convenience, our goal would consist of minimizing  the number of internal nodes rather than maximizing the number of leaves.

If $G$ is a clique, then $G$ is a cograph and then the problem can be decided in polynomial by Theorem~\ref{thm:stwml-cographs}. So, from now on, we can assume that $G$ is not a clique and in particular $in(G) \ge 2$.

\paragraph*{Canonical spanning tree.}

Let $G$ be an interval graph (distinct from a clique) given with its representation. All along the proof we assume that the vertices $v_1,\ldots,v_n$ are given by increasing right extremity. The \emph{canonical set} is the subset of vertices returned by the following algorithm
\begin{itemize}
 \item Set $X:= \emptyset$ and $G_0=G$. 
 \item Repeat until the graph $G_i$ is reduced to a clique, add to $X$ the vertex $v_i$ which is the largest vertex such that $v_i$ is incident to $v_j$ for every $j \le i$ in $G_i$. And set $G_{i+1}:=G_i \setminus \{ v_j , j <i\}$ 
 \item Return $X$.
\end{itemize}

\begin{figure}[bt]
    \centering
    \includegraphics[width=0.9\textwidth]{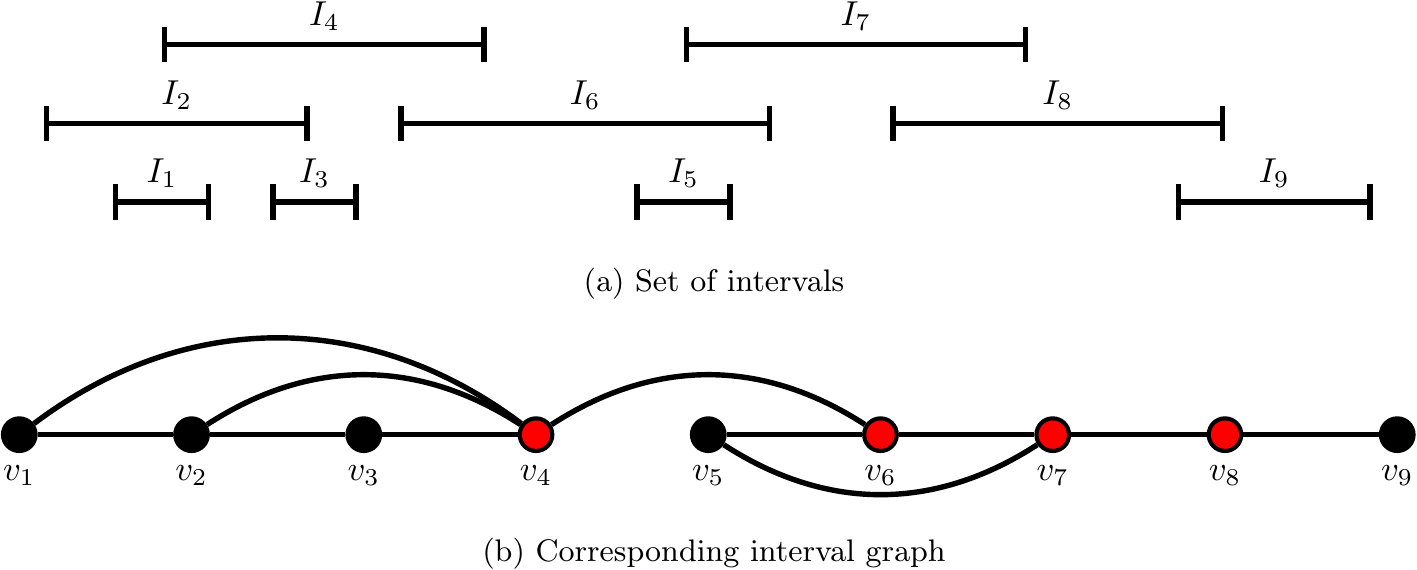}
    \caption{Interval graph with the canonical set $X = \{v_4,v_6,v_7,v_8\}$.}
    \label{fig:interval}
\end{figure}

Note that, after step $i$, the set $X$ is connected and dominates all the vertices before $v_i$ in $G$. Indeed, for every $v_i \in X$, $v_i$ is incident to all the vertices between $v_{i-1}$ and $v_i$ and since $v_{i-1}$ belongs to $G_i$, $v_{i-1}v_{i}$ is an edge. Moreover, the stopping condition ensures that $X$ is a dominating set of $G$. Indeed we stop when $G_i$ is a clique and by assumption $v_{i-1}$ belongs to $G_i$. We moreover claim that $X$ induces a path. Indeed by maximality of $v_{i-1}$, $v_i$ is not incident to $v_{i-2}$. So the set $X$ induces a path. In particular, we have a natural ordering of the vertices of $X$ since the left and right extremity orderings of $X$ \emph{agree} (in other words the orderings of $X$ given by $l(\cdot)$ and $r(\cdot)$ are the same). See Figure~\ref{fig:interval} for an illustration. 

Since all the vertices of $G$ are connected to at least one vertex of $X$, we can easily construct a tree whose set of internal nodes is included in the canonical set. 
Since all the trees with internal nodes included in $X$ are in the same connected component of the reconfiguration graph by Lemma~\ref{lem:stwml_samecomp}, we define, by abuse of notation, the \emph{canonical tree of $G$} as any spanning tree with internal nodes $X$. We denote by $T_C$ the canonical tree of $G$. The internal nodes of $T_C$ are called the \emph{canonical vertices of $G$}. 

\begin{remark}\label{rem:orderspanning}
For every spanning tree $T$ of $G$, if we order the vertices of $in(T)$ by increasing right extremity, the $i$-th vertex of $in(T)$ ends before the $i$-th vertex of $T_C$.

In particular, every spanning tree of $G$ has at least $in(T_C)$ internal nodes. 
\end{remark}

\begin{proof}
By induction on $i$. For the first internal node of $T$, the result holds. Indeed, the first internal node is either the vertex with the smallest right extremity (and the conclusion holds) or a vertex incident to it (since $T$ is spanning). And by definition of $T_C$ the first internal node is the vertex with the rightmost end which is incident to all the vertices that end before it. 

Assume now that the $i$-th internal node $x$ of $T$ ends before the $i$-th internal node $x'$ of $T_C$. Since, by definition of canonical set, the $(i+1)$-vertex of $T_C$ is the vertex $y$ with the maximal right extremitity incident to $x'$ and since $x$ ends before $x'$, the $(i+1)$-th internal node $y$ of $T$ has to be incident to $x$, and then has to end before $y$.
\end{proof}

\paragraph*{C-minimal components.}

Let $k$ be an integer, $G$ be a graph. We denote by $\mathcal{R}(G,k)$ the edge flip reconfiguration graph of the spanning trees of $G$ with at most $k$ internal nodes. 

Let $T,T'$ be two spanning trees with the same set of internal nodes. Lemma~\ref{lem:stwml_samecomp} ensures that $T$ and $T'$ are in the same connected component of $\mathcal{R}(G,k)$. So in what follows, we will often associate a tree $T$ with its set $in(T)$ of internal nodes.

A tree $T$ is \emph{C-minimum} if no tree $T'$ in the connected component of $T$ in $\mathcal{R}(G,k)$ contains fewer internal nodes than $T$. The goal of this part consists of showing that all the trees that are not C-minimum are in the connected component of $T_C$ in $\mathcal{R}(G,k)$. Before doing it, let us give some conditions on the set of internal nodes that ensure that $T$ is not C-minimum:

\begin{lemma}\label{lem:notincluded}
Let $T$ be a spanning tree of $G$ and $k \ge in(T)$. If there exist two internal nodes $u,v$ of $T$ such that the interval of $u$ is included in the interval of $v$ then $T$ is not C-minimum in $\mathcal{R}(G,k)$. Moreover a tree with internal nodes included in $in(T) \setminus \{ u \}$ in the component of $T$ can be found in polynomial time, if it exists.
\end{lemma}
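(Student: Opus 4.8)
The plan is to exploit the interval inclusion to show that $u$ is strongly dominated by $v$, and then to peel off the edges incident to $u$ one at a time, rerouting each of them to $v$, until $u$ is forced to become a leaf. This directly exhibits a tree with one fewer internal node in the same component of $\mathcal{R}(G,k)$.

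First I would record the structural fact underlying everything: if the interval of $u$ is included in the interval of $v$, then $N[u] \subseteq N[v]$. Writing the intervals as $[l(u),r(u)] \subseteq [l(v),r(v)]$, any interval meeting $[l(u),r(u)]$ also meets $[l(v),r(v)]$, so every neighbour of $u$ is either $v$ itself or a neighbour of $v$. In particular, for every $w \in N(u)$ with $w \neq v$ the edge $vw$ exists in $G$, which is exactly what licenses the rerouting below.

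Next I would describe the transformation. Root $T$ at $v$ and consider the children $w_1,\dots,w_d$ of $u$ in this rooted tree; none of them equals $v$ since $v$ is the root, so by the previous paragraph each edge $vw_i$ exists. For each child $w_i$ in turn, perform the edge flip removing $uw_i$ and adding $vw_i$. Removing $uw_i$ splits $T$ into the component containing $v$ (and $u$, through $u$'s unchanged parent edge) and the subtree hanging below $w_i$; since the endpoints of $vw_i$ lie in these two distinct components, adding it reconnects the tree without creating a cycle, so this is a genuine spanning-tree edge flip. Each such flip leaves every degree unchanged except that $\deg u$ drops by one and $\deg v$ rises by one; as $v$ is internal throughout, no new internal node is ever created, so every intermediate tree has at most $in(T) \le k$ internal nodes and the whole walk stays inside $\mathcal{R}(G,k)$. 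After processing all $d$ children, $u$ retains only its parent edge and hence becomes a leaf, while $v$ remains internal. Denoting the resulting tree by $T'$ we get $in(T') = in(T) \setminus \{u\}$, so $T'$ lies in the component of $T$ and has strictly fewer internal nodes, certifying that $T$ is not C-minimum. Since the transformation consists of at most $d \le n$ explicit edge flips, $T'$ is computed in polynomial time, which settles the ``moreover'' clause.

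The only delicate point — and what I expect to be the main thing to state cleanly rather than anything hard to prove — is the simultaneous bookkeeping that each individual flip is (i) a valid spanning-tree edge flip, (ii) non-increasing in the number of internal nodes, and (iii) harmless to the connectivity that later flips rely on. All three follow from the single invariant enforced by rooting at $v$ and always rerouting a \emph{child} edge of $u$ to $v$: under this convention $u$ stays attached to $v$ through its (never touched) parent edge while each detached subtree is reattached directly to the already-internal vertex $v$. Once this invariant is fixed, the verification is routine.
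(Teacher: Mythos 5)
Your proof is correct and takes essentially the same route as the paper's: both rest on the observation that interval inclusion gives $N_G[u] \subseteq N_G[v]$, and both peel off the edges incident to $u$ one flip at a time, rerouting each to the always-internal vertex $v$, until $u$ becomes a leaf. The only cosmetic difference is the connectivity anchor: the paper first performs a preliminary flip to place the edge $uv$ in $T$ and argues through it, whereas you root $T$ at $v$ and keep $u$'s parent edge untouched --- two interchangeable bookkeeping devices for the same argument.
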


\begin{proof}
First, observe that since the interval $u$ is included in $v$, we have $N_G[u] \subseteq N_G[v]$. Free to add $uv$ and remove any other edge of the cycle created by this addition, we can assume that $uv$ is in $T$.
Let us now prove that we can decrease the degree of $u$ without changing the set of internal nodes while keeping the existence of $uv$ until $u$ becomes a leaf. 
For every vertex $w$ incident to $u$ in $T$ with $w \ne v$, we delete $uw$ from $T$. Since $uv$ is in the tree, $v$ is not in the component of $w$ in $T \setminus uw$. So the edge flip where we remove $uv$ to create $vw$ keeps the connectivity and reduce the degree of $u$, which completes the proof.
\end{proof}

\begin{lemma}\label{lem:notincluded2}
Let $T$ be a spanning tree of $G$. If there exist three pairwise adjacent internal nodes $u,v,w$ such that $N[u] \subseteq N[v] \cup N[w]$ then $T$ is not C-minimum. Moreover a tree with internal nodes included in $in(T) \setminus \{ u \}$ in the connected component of $T$ can be found in polynomial time.
\end{lemma}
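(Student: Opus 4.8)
The plan is to mimic the proof of Lemma~\ref{lem:notincluded}, where the single containment $N[u]\subseteq N[v]$ lets one reroute every tree-neighbour of $u$ onto $v$ and thereby turn $u$ into a leaf. Here $N[u]$ is only covered by the \emph{pair} $\{v,w\}$, so each neighbour of $u$ must be pushed onto $v$ or onto $w$, whichever it is adjacent to. The difficulty is purely one of connectivity: to be allowed to delete $ux$ and create $vx$ (resp.\ $wx$) by an edge flip, the vertex $v$ (resp.\ $w$) must stay in the component of $u$ once $ux$ is removed. I will guarantee this by keeping both edges $uv$ and $uw$ inside the tree throughout the rerouting, and only at the very end cutting $u$ loose from $w$ by using the edge $vw$, which exists precisely because $u,v,w$ are pairwise adjacent.

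Concretely, I first bring both $uv$ and $uw$ into $T$. If $uv\notin T$, adding it creates a unique cycle and deleting any edge of that cycle yields, in one flip, a spanning tree containing $uv$; then, if $uw\notin T$, I repeat the operation for $uw$, choosing the deleted cycle-edge to be different from $uv$ (possible since the path from $u$ to $w$ now has length at least two). Note that at this point $vw\notin T$, as otherwise $u,v,w$ would form a triangle in the tree. Next, for every neighbour $x$ of $u$ in $T$ with $x\notin\{v,w\}$, the hypothesis $N[u]\subseteq N[v]\cup N[w]$ gives $x\in N(v)$ or $x\in N(w)$; in the first case I flip $ux$ to $vx$, in the second $ux$ to $wx$. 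Since $uv\in T$ and $x\neq v$, removing $ux$ leaves $v$ in the component of $u$, so the flip to $vx$ reconnects the two pieces into a tree (symmetrically for $w$); crucially these flips never touch $uv$ or $uw$. After processing all such $x$, the vertex $u$ has degree exactly two, with neighbours $v$ and $w$. A final flip removes $uw$ and adds $vw$: since $uv\in T$ keeps $v$ with $u$ and puts $w$ in the other component after deleting $uw$, and $vw\in E(G)$, this is a valid edge flip, and now $u$ is a leaf.

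It remains to check that this sequence lives in $\mathcal{R}(G,k)$ and strictly decreases the number of internal nodes. In every flip the only vertices whose degree \emph{increases} are $u$ (while it is being eliminated) and $v$ or $w$, all of which already belong to $in(T)$; hence no vertex outside $in(T)$ ever becomes internal, the internal-node set stays contained in $in(T)$, and the number of internal nodes never exceeds $k$. At the end $u$ is a leaf, so the final tree has internal nodes included in $in(T)\setminus\{u\}$ and strictly fewer of them, witnessing that $T$ is not C-minimum. As the whole construction consists of at most $O(n)$ explicitly described flips, it is carried out in polynomial time. The main obstacle, and the only place where the pairwise-adjacency hypothesis is really used, is the last flip: one must keep $u$ tethered to both $v$ and $w$ during the rerouting and then detach it while preserving connectivity, which is exactly what the edge $vw$ makes possible.
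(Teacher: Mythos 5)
Your proof is correct and takes essentially the same approach as the paper's: first normalize $T$ to contain two edges of the triangle on $\{u,v,w\}$, then reroute every other neighbour of $u$ onto $v$ or $w$ using $N[u]\subseteq N[v]\cup N[w]$, with the retained triangle edges keeping $v$ and $w$ in $u$'s component so each flip preserves connectivity and never creates a new internal node. The only cosmetic difference is that the paper keeps $uv$ and $vw$ in the tree (so $u$ ends as a leaf with no further work), whereas you keep $uv$ and $uw$ and finish with one extra flip exchanging $uw$ for $vw$.
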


\begin{proof}
Free to add $uv$ and remove any other edge of the cycle created by this addition, we can assume that $uv$ is in $T$. Similarly, free to add $vw$ and remove any edge of the cycle created by this addition distinct from $uv$ we can assume that $vw$ is in $T$. We prove that we can decrease the degree of $u$ while keeping the existence of $uv$ and $vw$. If there is a leaf $f$ attached on $u$ in $T$, we delete $uf$ from $T$ and replace it by $vf$ or $wf$ (one of them must exist by assumption).
Assume now that all the neighbors of $u$ are internal. 

Let $x$ be a neighbor of $u$ distinct from $v$. The deletion of $xu$ creates two connected components $C_1,C_2$. Since $vw$ is an edge of $T$, we can assume that they both are in $C_1$. Since $uv$ is in $T$, $u,v,w$ are in the same component. So the deletion of $xu$ and the addition or $xv$ or $xw$ reconnects the graph, which completes the proof.
\end{proof}

Note that if $u,v,w$ induce a triangle, then in particular the conditions of Lemma~\ref{lem:notincluded} or~\ref{lem:notincluded2} holds. Indeed, either one interval is included in another or the interval with the smallest left extremity and the one with the largest right extremity dominates the third one.
So, free to perform some pre-processing operations, we can assume in what remains that the set of internal nodes of a spanning $T$ of $G$ induces a path. Indeed, if an internal node $x$ is incident to three other internal nodes $u,v,w$, then either at least two of them contain the left extremity (or right extremity) of $x$, or one interval is strictly included in the interval of $x$. In the first case there is a triangle and we can apply Lemma~\ref{lem:notincluded} or~\ref{lem:notincluded2}. In the second case, we can apply Lemma~\ref{lem:notincluded}.

\begin{lemma}\label{lem:int_onemargin}
Let $G$ be an interval graph and $k$ be an integer. Any spanning tree $T$ of $G$ satisfying $in(T) < k$ is in the connected component of $T_C$ in $\mathcal{R}(G,k)$.
\end{lemma}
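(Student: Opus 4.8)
The plan is to transform $T$ into some spanning tree whose internal node set is exactly the canonical set $X$; since all spanning trees with internal node set $X$ lie in the component of $T_C$ by Lemma~\ref{lem:stwml_samecomp}, this places $T$ in the component of $T_C$ in $\mathcal{R}(G,k)$. First I would apply the preprocessing already described after Lemma~\ref{lem:notincluded2}: using Lemmas~\ref{lem:notincluded} and~\ref{lem:notincluded2} I may assume that $in(T)$ induces a path, so that, writing $m=in(T)$ and ordering the internal nodes $u_1,\ldots,u_m$ by increasing right extremity, consecutive ones are adjacent. These reductions only delete internal nodes, hence never exceed the budget and keep us inside the current component. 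The single hypothesis that is genuinely used is $in(T)<k$: it provides one \emph{free slot}, i.e.\ the room to create one extra internal node temporarily, which is exactly what lets me introduce a canonical vertex before discarding an old one.

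The core is a left-to-right induction that replaces the old internal nodes by the canonical vertices $x_1,\ldots,x_p$ (ordered by right extremity, $p=in(T_C)$). I would maintain the invariant that after step $i-1$ we have reached, without ever exceeding $k$ internal nodes, a tree whose internal node set is $\{x_1,\ldots,x_{i-1}\}$ together with a set of nodes all ending strictly after $r(x_{i-1})$, and whose total number of internal nodes is at most $m$. At step $i$, let $u$ be the leftmost internal node ending after $r(x_{i-1})$. Using the free slot I first perform one edge flip that raises the degree of $x_i$ to at least two and puts the edge $x_{i-1}x_i$ in the tree (these two vertices are adjacent, being consecutive canonical vertices); this momentarily gives $m+1\le k$ internal nodes. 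I then remove $u$ from the internal set by the operation of Lemma~\ref{lem:notincluded2}, restoring a count of at most $m$ and advancing the invariant to $i$. After the $p$-th step the canonical vertices are all internal; any remaining old internal node now has its closed neighbourhood covered by $X$ and is stripped off by one more application of Lemma~\ref{lem:notincluded} or~\ref{lem:notincluded2}, leaving internal set exactly $X$.

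The hard part will be justifying the removal of $u$ at step $i$, i.e.\ checking the hypotheses of Lemma~\ref{lem:notincluded2} for the triple $\{u,x_{i-1},x_i\}$. Pairwise adjacency follows from endpoint inequalities: $x_{i-1}x_i$ is a canonical edge; $u$ is adjacent to $x_{i-1}$ as the next node of the path; and since adjacency of $x_i$ and $x_{i-1}$ forces $l(x_i)\le r(x_{i-1})$ while Remark~\ref{rem:orderspanning} gives $r(u)<r(x_i)$, the chain $l(x_i)\le r(x_{i-1})<r(u)<r(x_i)$ shows that $u$ and $x_i$ overlap. The delicate point is the domination $N[u]\subseteq N[x_{i-1}]\cup N[x_i]$. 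A right neighbour $z$ of $u$ (one with $r(z)>r(x_{i-1})$) satisfies $l(z)\le r(u)<r(x_i)$ and $r(z)>r(x_{i-1})\ge l(x_i)$, hence meets $x_i$; a left neighbour $z$ (with $r(z)\le r(x_{i-1})$) meets $x_{i-1}$ because the defining property of the canonical greedy makes $x_{i-1}$ adjacent to every vertex ending before it within its window, so $l(x_{i-1})\le r(z)$. The only subtlety I expect to need care with is a neighbour of $u$ ending so far to the left that it was already discarded before $x_{i-1}$ was chosen; such a vertex is dominated by an earlier canonical vertex, and I would handle it either by arguing that it cannot be adjacent to the leftmost node $u$ of the current window, or by first shrinking $u$ toward $x_i$ (reducing its degree via edge flips) so that the interval-inclusion case of Lemma~\ref{lem:notincluded} applies directly. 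Resolving these window-boundary cases cleanly from the canonical construction is the main obstacle; everything else is routine edge-flip bookkeeping together with Lemma~\ref{lem:stwml_samecomp}.
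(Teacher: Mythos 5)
Your overall strategy is exactly the paper's: preprocess with Lemmas~\ref{lem:notincluded} and~\ref{lem:notincluded2} so that $in(T)$ induces a path, then sweep left to right, using the slack $in(T)<k$ to temporarily make the next canonical vertex internal, and discard a displaced internal node by applying Lemma~\ref{lem:notincluded2} to the triple consisting of that node and the two consecutive canonical vertices (your $\{u,x_{i-1},x_i\}$ is the paper's $\{y,z,x\}$), with adjacency extracted from Remark~\ref{rem:orderspanning} and endpoint inequalities. However, two points need repair. First, your invariant is stronger than what your step maintains: after inserting $x_i$ and removing the single node $u$, there may remain several old internal nodes whose right extremities lie in $(r(u),r(x_i))$, so the claim that all non-canonical internal nodes end strictly after $r(x_i)$ fails, and with it your appeal to Remark~\ref{rem:orderspanning} (which needs $u$ to be exactly the $i$-th internal node in the right-extremity order). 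You can fix this either by iterating the removal inside the window --- each leftover $w$ with $r(x_{i-1})<r(w)<r(x_i)$ is removable by Lemma~\ref{lem:notincluded} if $l(w)\ge l(x_i)$ (containment in $x_i$) and otherwise by the same triple argument --- or, as the paper does, by weakening the invariant to ``the first canonical vertex missing from $in(T)$ strictly advances while $|in(T)|$ never increases,'' with termination via the lexicographic measure $(|in(T)|,\text{index of first missing canonical vertex})$, since every (re)application of the preprocessing lemmas strictly decreases the first coordinate.

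Second, the ``window-boundary'' case you flag as the main obstacle is not an obstacle at all once you use the preprocessing guarantee: since no internal interval contains another after preprocessing, $r(x_{i-1})<r(u)$ forces $l(x_{i-1})\le l(u)$ (if instead $l(u)<l(x_{i-1})$, then $x_{i-1}\subset u$, Lemma~\ref{lem:notincluded} strictly decreases the internal count, and you restart). Combined with $r(u)\le r(x_i)$ and the overlap of the consecutive canonical intervals, the interval of $u$ is contained in the union of the intervals of $x_{i-1}$ and $x_i$, so \emph{every} neighbour of $u$ meets $x_{i-1}$ or $x_i$; this is precisely the paper's one-line step ``$l(z)\le l(y)$ and $r(y)\le r(x)$, so $N(y)\subseteq N(x)\cup N(z)$,'' and it makes your case analysis on left versus right neighbours (and the worry about far-left discarded vertices) unnecessary. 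A last small point: your final cleanup invokes $N[u]\subseteq\bigcup_{x\in X}N[x]$, which is not the hypothesis of either reduction lemma; the correct finish, used by the paper, is that once $in(T_C)\subseteq in(T')$ you reach $T_C$ directly by Lemma~\ref{lem:stwml_samecomp}, since that transformation keeps all intermediate internal sets inside $in(T')\cup in(T_C)=in(T')$, hence within the budget.
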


\begin{proof}
Lemmas~\ref{lem:notincluded} and~\ref{lem:notincluded2} ensure that, up to a pre-processing running in polynomial time, we can assume that $in(T)$ induces a path.
If $in(T_C) \subseteq in(T)$, the conclusion holds by Lemma~\ref{lem:stwml_samecomp}. So there exists an internal node of $T_C$ that is not internal in $T$.
Let us order the vertices of $T_C$ according to their right extremity. Let $x$ be the first vertex of $in(T_C) \setminus in(T)$.
Let us prove that we can transform $T$ into a spanning tree $T'$ such that the first vertex of $in(T_C)$ that is not in $in(T')$ is after $x$.
Now, let us modify $T$ using edge flips in order to obtain a tree $T'$ such that $|in(T')| \le |in(T)|$ and all along the transformation, the trees $S$ satisfy $in(S) \subseteq in(T) \cup \{ x \}$.

Let $z$ be the internal node of $T_C$ before $x$ \footnote{If it exists, this vertex might not exist if $z$ is the first vertex of $T_C$.}. Note that, by assumption, $z \in in(T)$. 
Since $in(T)$ is a tree, Remark~\ref{rem:orderspanning} ensures that the internal node $y$ of $T$ that ends after $z$ (or the first one if $z$ does not exist) is adjacent to $x$. Moreover, since $in(T)$ is a tree, $yz$ is an edge. So $x,y,z$ is a triangle and $l(z) \le l(y)$ and $r(y) \le r(x)$. So we have $N(y) \subseteq N(x) \cup N(z)$ \footnote{If $x$ does not exist, we simply have $N(y) \subseteq N(z)$.}. 
And then by Lemma~\ref{lem:notincluded2} we can obtain a tree $T'$ where $x$ is internal and $y$ is not, where the number of internal nodes have not increased and such that the first vertex of $in(T_C)$ that is not in $in(T')$ is after $x$.
\end{proof}

As a direct corollary, we obtain:

\begin{corollary}\label{coro:stwml_notCmin}
Let $G$ be an interval graph and $T$ be a spanning tree with at most $k$ internal nodes. If $T$ is not C-minimum in $\mathcal{R}(G,k)$ or if $T$ has less than $k$ internal nodes, then $T$ is in the connected component of $T_C$ in $\mathcal{R}(G,k)$.
\end{corollary}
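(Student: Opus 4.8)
The plan is to obtain the statement as an immediate consequence of Lemma~\ref{lem:int_onemargin}, which already places every spanning tree with strictly fewer than $k$ internal nodes in the connected component of $T_C$ in $\mathcal{R}(G,k)$. The whole task therefore reduces to reconciling the two hypotheses of the corollary with the single hypothesis ``$in(T) < k$'' of that lemma, together with one use of transitivity of connectivity.

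I would first dispose of the second hypothesis. If $in(T) < k$, then Lemma~\ref{lem:int_onemargin} applies to $T$ verbatim and immediately gives that $T$ lies in the connected component of $T_C$, so nothing further is required in this case.

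For the first hypothesis, I would assume that $T$ is not C-minimum. By the definition of C-minimality there is a spanning tree $T'$ in the same connected component of $\mathcal{R}(G,k)$ as $T$ with $in(T') < in(T)$. Since $T$ has at most $k$ internal nodes and the number of internal nodes is an integer, this yields $in(T') \le in(T) - 1 \le k - 1 < k$. Hence $T'$ satisfies the hypothesis of Lemma~\ref{lem:int_onemargin}, so $T'$ is in the connected component of $T_C$. As $T$ and $T'$ lie in a common component, and $T'$ and $T_C$ lie in a common component, transitivity places $T$ in the component of $T_C$ as well, which is the desired conclusion.

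I expect essentially no obstacle here, since all the genuine reconfiguration work is carried by Lemma~\ref{lem:int_onemargin}: the corollary is just the remark that a tree which is not C-minimum can be flipped, inside its own component, down to a tree with at most $k-1$ internal nodes, at which point the lemma takes over. The only subtlety worth stating explicitly is the integrality step guaranteeing that $in(T') < in(T) \le k$ gives the strict inequality $in(T') < k$ (rather than merely $in(T') \le k$), which is exactly what Lemma~\ref{lem:int_onemargin} demands.
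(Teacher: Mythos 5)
Your proof is correct and matches the paper's intent exactly: the paper states this as a ``direct corollary'' of Lemma~\ref{lem:int_onemargin} with no written argument, and the route you spell out (apply the lemma directly when $in(T)<k$; otherwise use the definition of C-minimality to obtain a tree $T'$ in the same component with $in(T')<in(T)\le k$, apply the lemma to $T'$, and conclude by transitivity of connectivity) is precisely the intended one. The integrality remark you flag is the right small detail to make explicit.
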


\paragraph*{Full access.}
Let $T$ be a tree such that $in(T)$ induces a path. Recall that the left and right extremities orderings agree.
The \emph{leftmost vertex} of $T$ is the vertex of $in(T)$ that is minimal for both $l$ and $r$. The \emph{$i$-th internal node} of $T$ is the internal node with the $i$-th smallest left extremity. 

Let $G$ be an interval graph and $v \in V(G)$. The \emph{auxiliary graph} $H_v$ of $G$ on $v$ is defined as follows. The vertex set of $H_v$ is $v$ plus the set $W$ of vertices $w$ which end after $v$ and start after the beginning of $v$ (i.e. vertices whose interval ends after $v$ but does not contain $v$) plus a new vertex $x$, called the \emph{artificial vertex}. The set of edges of $H_v$ is the set of edges induced by $G[W \cup \{v \}]$ plus the edge $xv$.

\begin{claim}
Let $G$ be an interval graph and $v$ be a vertex of $G$. The graph $H_v$ is an interval graph.
\end{claim}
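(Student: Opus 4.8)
The plan is to exploit two facts: the class of interval graphs is hereditary (closed under taking induced subgraphs), and one can always attach a degree-one vertex to an interval graph while preserving the interval property, provided the representation has room for a ``private'' interval for the new vertex. First I would observe that, by construction, $H_v$ is nothing but the induced subgraph $G[W \cup \{v\}]$ together with one extra vertex $x$ of degree one whose unique neighbour is $v$. Since $G$ is an interval graph and interval graphs are hereditary, $G[W \cup \{v\}]$ is an interval graph; I would take the representation inherited from the fixed representation of $G$, simply restricting the intervals to the vertices of $W \cup \{v\}$. By the standing assumption, all endpoints remain distinct.

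The key step is to realise the artificial vertex $x$ as an interval that meets the interval of $v$ but is disjoint from every interval of $W$. Here I would use the defining property of $W$: every $w \in W$ starts after the beginning of $v$, i.e.\ $l(w) > l(v)$. Consequently, there is a small region just inside the left end of $v$'s interval that is covered by no interval of $W$, since each $w \in W$ begins strictly to the right of $l(v)$. I would therefore assign to $x$ a short interval contained in $v$ and lying to the left of all intervals of $W$: concretely an interval $[l(v)+\varepsilon_1, l(v)+\varepsilon_2]$ with $0 < \varepsilon_1 < \varepsilon_2 < \min\{\, r(v)-l(v),\ \min_{w \in W}(l(w)-l(v)) \,\}$. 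This interval is a subinterval of $v$, hence intersects $v$, and it ends before every $w \in W$ begins, hence is disjoint from all of them. Adding it to the inherited representation of $G[W \cup \{v\}]$ produces an interval representation in which $x$ is adjacent to $v$ alone, which is exactly $H_v$.

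The only point requiring care — and the closest thing to an obstacle in an otherwise routine argument — is verifying that the artificial vertex really can be made adjacent to $v$ and to nothing else. This is precisely where the hypothesis $l(w) > l(v)$ for all $w \in W$ is used: it guarantees a free slot inside $v$'s interval for $x$, so that no spurious edge between $x$ and a vertex of $W$ is created. Everything else follows immediately from heredity of the interval class.
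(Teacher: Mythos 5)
Your proof is correct and takes essentially the same approach as the paper: restrict the given representation to $W \cup \{v\}$ and insert a tiny interval for the artificial vertex $x$ near $l(v)$, using the fact that $l(w) > l(v)$ for every $w \in W$ to guarantee that $x$ meets only $v$. The only (immaterial) difference is that you place $x$'s interval strictly inside $v$'s interval, whereas the paper lets it straddle $l(v)$ as $[l(v)-\epsilon/2,\, l(v)+\epsilon/2]$.
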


\begin{proof}
Let $V'=V(H_v) \setminus \{x\}$. For every $v' \in V'$, the interval of $v'$ in $H_v$ is the one of $G$. Now, since we can assume that no interval start at the same point and by construction $l(v) < l(v')$ for every $v' \ne v$ in $V'$, there exists $\epsilon$ such that $v$ is in only interval intersecting $[l(v),l(v)+\epsilon]$. The interval of $x$ is set to $[l(v)-\epsilon/2,l(v)+\epsilon/2]$. It provides an interval representation of $H_v$.
\end{proof}

Let $v \in V(G)$. Every spanning tree of $H_v$ necessarily contains $v$ in its set of internal nodes. Indeed, by construction, the graph $H_v$ contains a vertex $x$ of degree one which is only incident to $v$. Moreover, $v$ is the leftmost internal node of any spanning tree $T$ of $H_v$.

Let $G$ be an interval graph, $k \in \mathbb{N}$ and $T$ be a spanning tree with internal nodes $I$ such that $|I|=k$. 
Let $v \in V(G)$. 
The \emph{restriction} of a spanning tree $T$ to $H_v$ is any spanning tree of $H_v$ with internal nodes included in $(in(T) \cup \{ v \}) \cap V(H_v)$. We denote by $k'_v$ (or $k'$ when no confusion is possible) the value $|(in(T) \cup \{ v \}) \cap V(H_v)|$. Let $T'$ be the restriction of $T$ to $H_v$ as defined above. We claim that the number of internal nodes of $T'$ is at most $k'$. Indeed all the leaves of $T$ attached to internal nodes after $v$ can still be attached to the same internal nodes in $T'$. For those before $v$, either they are not in the graph $H_v$ or they can be attached to $v$. Note that $|in(T) \cap V(H_v)| = k'-1$ if $v \not\in in(T)$, and $|in(T) \cap V(H_v)| = k'$ otherwise.

The vertex $v$ is \emph{good} if the restriction of $T$ to $H_v$ is not C-minimum in $\mathcal{R}(H_v,k')$. The vertex $v$ is \emph{normal} otherwise. 

Let $v$ be a normal vertex. Recall that $v$ is the leftmost internal node of any spanning tree of $H_v$. Let $C$ be the connected component of the restriction of $T$ to $H_v$ in $\mathcal{R}(H_v,k')$. We denote by $\ell'_v(T)$ the second internal node of a spanning tree of $H_v$ in $C$ that minimizes its left extremity.
Similarly we denote by $r'_v(T)$ the second internal node of a spanning tree of $H_v$ in $C$ that maximizes its right extremity.
When they do not exist\footnote{It is the case if and only if $H_v$ is a clique.}, we set $ \ell'_v(T)=- \infty$ and $r'_v(T)=+ \infty$.

We say that we have \emph{full access to $T$} if, for every vertex $v \in V(G)$, we have a constant time oracle saying if $v$ is good or normal. And if $v$ is normal, we moreover have a constant time access to $\ell'_v(T)$ and $r'_v(T)$. What remains to be proved is that {\em (i)} knowing this information for two spanning trees $T$ and $T'$ is enough to determine if they are in the same connected component of $\mathcal{R}(G,k)$, and that {\em (ii)} this information can be computed in polynomial time.

\paragraph*{Dynamic programming algorithm.}
Let us first state the following useful lemma.

\begin{lemma}\label{lem:HvtoH}
Let $G$ be an interval graph and $k \in \mathbb{N}$. Let $T$ be a spanning tree of $G$ and $v$ be an internal node of $T$. Let $J:=in(T) \cap V(H_v)$ and $k'=|J|$. 
If a tree $T'$ with internal nodes $J$ can be transformed into a tree with internal nodes $K$ in $\mathcal{R}(H_v,k')$ then $T$ can be transformed into a tree with internal nodes $(in(T) \setminus J) \cup K$ in $\mathcal{R}(G,k)$. 

In particular, if $T'$ is not C-minimum in $\mathcal{R}(H_v,k')$ then $T$ is not C-minimum in $\mathcal{R}(G,k)$.
\end{lemma}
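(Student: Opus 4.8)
The plan is to \emph{replay} the given $H_v$-reconfiguration sequence inside $G$, one edge flip at a time, keeping track only of internal-node sets. Write $W=V(H_v)\setminus\{v,x\}$ and $V_L=V(G)\setminus(W\cup\{v\})$ for the ``left'' vertices, and set $L:=in(T)\cap V_L$. Since $x\notin V(G)$ and $x\notin in(T)$, we have $J=in(T)\cap(\{v\}\cup W)$ and hence $L=in(T)\setminus J$; thus the target internal-node set $(in(T)\setminus J)\cup K$ is exactly $L\cup K$. Let $\langle S_0,\dots,S_m\rangle$ be the given sequence in $\mathcal{R}(H_v,k')$ with $in(S_0)=J$ and $in(S_m)=K$. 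I would show, by induction on $i$, that $G$ admits a spanning tree $\hat T_i$ with $in(\hat T_i)=L\cup in(S_i)$ lying in the connected component of $T$ in $\mathcal{R}(G,k)$; taking $i=m$ then yields the lemma, and the ``in particular'' statement is immediate, since $T'$ being non-C-minimum gives a reachable $K$ with $|K|<|J|$, whence $|in(\hat T_m)|=|L|+|K|<|L|+|J|=|in(T)|$.

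First I would dispatch the threshold bound, which is the one completely clean ingredient. As $x$ is a pendant vertex of $H_v$, the edge $xv$ belongs to every spanning tree of $H_v$ and is never flipped, and $x$ is never internal, so $in(S_i)\subseteq\{v\}\cup W$, which is disjoint from $V_L\supseteq L$. Consequently $|L\cup in(S_i)|=|L|+|in(S_i)|\le|L|+k'=|in(T)\setminus J|+|J|=|in(T)|\le k$, so every $\hat T_i$ stays within the threshold $k$.

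The engine of the induction is the following interval fact, which I would prove by a short case analysis on the endpoints: \emph{every vertex $u\in V_L$ adjacent in $G$ to some vertex of $W$ is also adjacent to $v$}. Indeed, if $u\in V_L$ meets some $w\in W$, then either $l(u)\le l(v)$, in which case $r(u)\ge l(w)>l(v)$ so $u$ already contains $l(v)$; or $r(u)\le r(v)$ and $l(u)>l(v)$, in which case $l(v)<l(u)\le r(v)$. In both cases $u$ intersects $v$. Because $v$ is internal in every $\hat T_i$, this lets me reroute any left vertex onto $v$ whenever convenient, and it is exactly what guarantees that the left part can be kept attached no matter how the right part is reconfigured.

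For the inductive step I would take the flip $S_i\to S_{i+1}$, which removes an edge $e$ and adds an edge $f$ both lying in $G[W\cup\{v\}]$ (never the frozen edge $xv$, hence genuine edges of $G$), perform the \emph{same} flip on $\hat T_i$, and then repair the attachment of $V_L$: every $W$-vertex that is a leaf of $S_{i+1}$ may be hung either on its $S_{i+1}$-parent or on a left vertex without changing its leaf status, while every $V_L$-vertex is attached to $v$, to another $V_L$-vertex, or to a leaf $W$-vertex so as to realize exactly the internal set $L$ on $V_L$; the interval fact above ensures such an attachment exists and can be reached by flips that keep $v$ a hub. When the flip does not change the relevant degrees, $in(\hat T_i)=in(\hat T_{i+1})$ and Lemma~\ref{lem:stwml_samecomp} connects the two realizations directly. \textbf{The main obstacle} is precisely this attachment bookkeeping: a naive ``fixed left part plus copied right part'' lift fails, because a left vertex adjacent to $v$ but to no other left vertex cannot stay internal using left edges alone, so one must let the $V_L$ vertices and the leaf $W$-vertices migrate as the right part evolves, all the while certifying that the internal-node set stays exactly $L\cup in(S_i)$ and that consecutive realizations are at bounded flip-distance inside $\mathcal{R}(G,k)$.
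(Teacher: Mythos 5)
Your overall architecture is the same as the paper's: replay the $\mathcal{R}(H_v,k')$ sequence inside $G$ while maintaining the internal set $L\cup in(S_i)$, with the threshold arithmetic $|L\cup in(S_i)|=|L|+|in(S_i)|\le |in(T)|\le k$ and the interval fact that any $V_L$-vertex meeting $W$ also meets $v$; those two ingredients are correct as you state them, as is your derivation of the ``in particular'' part. But the heart of the lemma --- the inductive step --- is not actually proved. You describe a per-step ``repair'' in which leaf $W$-vertices may be re-hung on left vertices and $V_L$-vertices may be attached to leaf $W$-vertices, and then you concede yourself that certifying this migration (that the repaired tree realizes exactly $L\cup in(S_{i+1})$, and that the repair moves are legal flips inside $\mathcal{R}(G,k)$) is ``the main obstacle.'' That concession is a genuine gap, not a detail: the repair you propose actively creates the hazard it is meant to fix, since a crossing edge from a leaf $W$-vertex to a $V_L$-vertex means the pendant edge of that $W$-vertex in $S_{i+1}$ is absent from your lifted tree, so a later flip of the $H_v$-sequence touching that edge cannot be copied verbatim; and Lemma~\ref{lem:stwml_samecomp} cannot bridge the repaired and unrepaired realizations unless you first verify they have the same internal set, which is precisely what is in question. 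Note also that each individual repair flip must respect the threshold; since $|L\cup in(S_i)|$ may equal $k$, even a transient extra internal node is forbidden, and you give no argument for this.

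The paper closes this step differently, and more simply, than your per-step migration: it normalizes \emph{once}, before the replay. Using Lemma~\ref{lem:stwml_samecomp}, it replaces $T$ by a tree in the same component of $\mathcal{R}(G,k)$ in which every leaf outside $V(H_v)$ hangs on a node of $(in(T)\setminus J)\cup\{v\}$ (this is exactly where your interval fact is used) and every leaf inside $V(H_v)$ hangs on a node of $J$. After this normalization, the degree of every vertex of $V(H_v)\setminus\{v\}$ in the $G$-tree equals its degree in the restriction $T'$, so every flip of the $H_v$-sequence (which never touches the frozen edge $xv$) can be copied verbatim in $G$, giving internal set $(in(T)\setminus J)\cup J_t$ at every step --- no repair, no migration, and the left part of the tree never moves. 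The configuration you flag (a left vertex adjacent to $v$ but to no other left vertex, which cannot stay internal using left edges alone) is exactly what must be absorbed by this one-shot normalization \emph{before} the replay starts, rather than patched during it. To salvage your write-up, prove the existence of such a normalized representative of $T$'s component and then delete the per-step repair entirely; as it stands, the inductive step is asserted rather than established.
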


\begin{proof}
First recall that $in(T)$ contains $v$, and thus $v \in J$. So, the restriction $T'$ of $T$ to $H_v$ (plus the edge $xv$) is a spanning tree of $H_v$ with set of internal nodes $J$. By Lemma~\ref{lem:stwml_samecomp}, we can assume that all the leaves of $T$ in $V(G) \setminus V(H_v)$ are attached to internal nodes in $(in(T) \setminus J) \cup \{ v \}$ and the other are attached to vertices of $J$. 

Let $T'$ be the tree $T$ restricted to $H_v$ plus the edge $vx$ and $\mathcal{S}$ be an edge flip reconfiguration sequence starting from $T'$ and ending with a tree with internal nodes $K \cup \{ v \}$. For every intermediate tree $T_t$, we denote by $J_t$ the set of internal nodes of $T_t$. 
 We claim that we can perform the same edge flip in $H$ and have, at every step, the set of internal nodes $(I \setminus J) \cup J_t$. The first point is due to the fact that any edge of $H_v$ exists in $H$ (but $xv$ which cannot be modified). The second point comes from the fact that we considered a tree $T$ such that all the leaves of $T$ whose right extremity is before the one of $v$ are attached to internal nodes that end before or equal to $v$. So the degree of each vertex $w$ of $H_v$ (but $v$) in $T'$ is the degree of $w$ in the corresponding tree in $G$. And then the conclusion follows.
 \end{proof}

Let $\mathcal{S}$ be a sequence of edge flip adjacent spanning trees such that the set of internal nodes is an induced path all along the sequence. We say that $\mathcal{S}$ is \emph{$j$-fixed} if the first $j$ internal nodes always are the same all along the sequence. Given a $j$-fixed sequence, the \emph{maximum $(j+1)$-th vertex of $\mathcal{S}$} is the $(j+1)$-th internal node of the spanning trees of $\mathcal{S}$ with the maximum right extremity.

Note that any reconfiguration sequence of $G$ is $0$-fixed and for every $v$, any reconfiguration sequence in $H_v$ is $1$-fixed. We will simply use the following lemma in these two cases.

\begin{lemma}\label{lem:secondvertex}
Let $G$ be an interval graph, $k$ an interval graph and $\mathcal{S} = \langle T_1,\ldots,T_\ell \rangle$ be a reconfiguration sequence in $\mathcal{R}(G,k)$ which is $j$-fixed. Let $w$ be the $(j+1)$-th vertex of $\mathcal{S}$ with the rightmost right extremity. Then there is a reconfiguration sequence between trees with internal nodes $(in(T_1) \cap V(H_w)) \cup \{w \}$ and $(in(T_\ell) \cap V(H_w)) \cup \{ w \}$ in $\mathcal{R}(H_w,k'_w)$. 
\end{lemma}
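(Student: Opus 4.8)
The plan is to \emph{project} the whole sequence $\mathcal{S}$ onto $H_w$ and to argue that the images form a walk in $\mathcal{R}(H_w,k'_w)$ joining the two prescribed trees. Concretely, for each $t$ set $I_t:=(in(T_t)\cap V(H_w))\cup\{w\}$ and let $\pi(T_t)$ be a restriction of $T_t$ to $H_w$, i.e.\ a spanning tree of $H_w$ whose internal nodes are exactly $I_t$; this is legitimate because $w$ is the leftmost internal node of every spanning tree of $H_w$, owing to the artificial degree-one vertex $x$. Since $I_1$ and $I_\ell$ are precisely the two sets named in the statement, and since by Lemma~\ref{lem:stwml_samecomp} any two spanning trees of $H_w$ with the same internal nodes lie in one component of $\mathcal{R}(H_w,k'_w)$, it suffices to show that the finite sequence $I_1,\dots,I_\ell$ can be realised as a genuine reconfiguration walk: consecutive sets are equal or differ by a single edge flip in $H_w$, and every $I_t$ satisfies $|I_t|\le k'_w$.

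First I would handle the adjacency. Fix a flip $T_t\to T_{t+1}$ removing $e_1$ and adding $e_2$. Every edge of $H_w$ except $xw$ is also an edge of $G$ (as already used in Lemma~\ref{lem:HvtoH}), and the only vertices of $G$ killed by the projection are those lying to the left of $w$ (those ending before $r(w)$ or crossing $l(w)$), which are absorbed into $w$. If all endpoints of $e_1,e_2$ lie in $V(H_w)$, the same flip is legal in $H_w$ and carries $\pi(T_t)$ to $\pi(T_{t+1})$, after reattaching the absorbed leaves to $w$ if necessary (Lemma~\ref{lem:stwml_samecomp}). If the flip touches the absorbed region, it can change the degree in $T_t$ of at most one vertex of $V(H_w)$; using that $in(T_t)$ induces a path after the preprocessing of Lemmas~\ref{lem:notincluded}--\ref{lem:notincluded2} and that $w$ dominates the absorbed part, one checks that the effect on $I_t$ is either nothing (so $\pi(T_t)=\pi(T_{t+1})$ up to reattachment to $w$) or the insertion/deletion of a single internal node, again a single flip in $H_w$. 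Hence consecutive projections are at distance at most one, which gives the walk provided the size bound holds.

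The main obstacle is the bound $|I_t|\le k'_w$, and this is exactly where the maximality of $w$ enters. Because $\mathcal{S}$ is $j$-fixed, the first $j$ internal nodes are the same fixed vertices in every $T_t$; each precedes the $(j+1)$-th node in the common endpoint order (Remark~\ref{rem:orderspanning}), hence has right extremity below $r(w)$, so none lies in $V(H_w)$ and all $j$ are absorbed. Since $w$ has the largest right extremity among all $(j+1)$-th internal nodes of the sequence, the $(j+1)$-th node $w_t$ of every $T_t$ satisfies $r(w_t)\le r(w)$; thus either $w_t=w$, or $w_t\notin V(H_w)$ and is absorbed too. Counting, at least $j$ internal nodes of $T_t$ (and at least $j+1$ when $w\notin in(T_t)$) fall outside $V(H_w)$, so $|I_t|=|in(T_t)|-\#\{\text{absorbed internal nodes}\}+[\,w\notin in(T_t)\,]\le k-j$. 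The delicate point is matching this against the reference value $k'_w=|(in(T_1)\cup\{w\})\cap V(H_w)|$: the maximality of $w$ forces the number of absorbed internal nodes to be minimal at the tree realising $w$ as its $(j+1)$-th node, so no $T_t$ projects to more internal nodes than that extremal tree. I expect verifying this monotonicity — together with the edge case where $w$ is only a leaf of some $T_t$, in which one must rule out an internal node straddling $l(w)$ via the no-containment property of Lemma~\ref{lem:notincluded} — to be the technically heaviest step; everything else is bookkeeping, and the consistency of the flips carried out inside $H_w$ with the ambient graph $G$ is guaranteed by the transfer direction of Lemma~\ref{lem:HvtoH}.
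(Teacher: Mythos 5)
Your overall strategy is the paper's: project every tree of $\mathcal{S}$ onto $H_w$ and translate each edge flip of $G$ into at most one edge flip of $H_w$. Where the paper's proof actually does work, your version contains two local inaccuracies. First, you cannot define $\pi(T_t)$ as a spanning tree of $H_w$ whose internal nodes are \emph{exactly} $I_t$: an internal node of $T_t$ lying in $V(H_w)$ may have all but one (or even all) of its $T_t$-neighbours in the absorbed region, and may have too few neighbours in $H_w$ to stay internal. The paper's restriction is defined with internal nodes \emph{included in} $I_t$, and its projection is concrete: keep the edges of $T_t$ with both endpoints in $H_w$, then attach stranded vertices and components to $w$ (legal because any vertex of $V(H_w)$ that was attached in $T_t$ to a vertex ending left of $w$ must intersect $w$). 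Second, your claim that a flip touching the absorbed region ``can change the degree in $T_t$ of at most one vertex of $V(H_w)$'' is false: removing a crossing edge $av$ and adding a crossing edge $bu$ with $u\neq v$ changes two such degrees. The right invariant, which the deterministic projection gives for free, is that the set of \emph{interior} edges changes by at most one edge per flip, whence consecutive projections differ by zero or one flips; no appeal to the internal nodes forming a path is needed for this step (and note you cannot invoke the ``preprocessing'' of Lemmas~\ref{lem:notincluded}--\ref{lem:notincluded2} on a given sequence anyway — the induced-path property of the intermediate trees is part of the definition of $j$-fixed, which is what you should cite).

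The genuine gap is the threshold bound, which you correctly isolate but do not prove. Your counting $|I_t|\le k-j$ (first $j$ fixed nodes end before $r(w)$, hence lie outside $V(H_w)$; the $(j+1)$-th node is either $w$ or absorbed) is fine. But the comparison with $k'_w$ is exactly the step you defer: ``I expect verifying this monotonicity\dots'' is a conjecture, not an argument, and the edge case you flag is real — an internal node of $T_1$ at position beyond $j+1$ can have its interval containing that of $w$ (no-containment does not apply to that pair, since $w$ need not be internal in $T_1$), in which case it falls outside $V(H_w)$ and $k'_w$ computed from $T_1$ could a priori be smaller than $\max_t |I_t|$. For what it is worth, the paper's own proof does not address this point at all: it only constructs the projections, with $in(T'_t)\subseteq I_t$, and translates the flips, implicitly relying on the counting $|I_t|\le k-j$, which is the bound that its applications of the lemma actually use. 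So the honest repair of your proposal is not to prove the monotonicity claim but to replace it: either state and prove the lemma with the explicit threshold $k-j$ (via $in(T'_t)\subseteq I_t$ and your counting), or verify $k'_w\ge k-j$ in each context where the lemma is invoked (there the reference tree has exactly $k$ internal nodes and $w$ dominates all $(j+1)$-th nodes by Remark~\ref{rem:orderspanning}). As written, your proposal leaves its self-declared ``technically heaviest step'' unproven, so it is not a complete proof of the statement.
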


\begin{proof}
Let $1 \le i \le \ell$ be an integer and $X_i= in(T_i) \cap V(H_w)) \cup \{w \}$. Let us define by $T_i'$ the tree of $H_w$ such that all the edges of $T_i$ existing in $H_w$ are in $T_i'$, all the isolated vertices are attached to $w$ and, if $T_i'$ is disconnected, $w$ is attached to one vertex of $X_i$ in the other connected components of $T_i'$. 

We claim that such a tree exists. Indeed, if $v$ is an isolated vertex in $H_w$, then $v$ was attached to an internal vertex which finishes before $w$. And so $v$ can be attached to $w$. Moreover, if a component not reduced to a single vertex is isolated in the resulting forest, there was an edge between an internal node of that component and a vertex that finishes at the left of $w$. Again, this vertex can be connected to $w$.

One can easily check than any edge flip in $\mathcal{S}$ between $T_i$ and $T_{i+1}$ in $G$ can be indeed  adapted into an edge flip in $H_w$: If both edges edges exist, we perform the edge flip; if a vertex is attached on an internal node at the left of $w$, then we attach it on $w$; if the vertices do not exist $T_i'=T_{i+1}'$ and we do not have any operation to perform.
\end{proof}

Let us now prove that if we have full access to $H_v$ for any $v$ we can determine if $T$ is C-minimum and, if it is, the rightmost possible right extremity of the first internal node of the trees in the connected component of $T$ in $\mathcal{R}(G,k)$. Using a similar method, we will then show in Lemma~\ref{lem:stwml_computingR} that we have full access to $H_v$.

\begin{lemma}\label{lem:to_can}
Let $G$ be an interval graph, $k \in \mathbb{N}$, and $T$ be a spanning tree of $G$ with at most $k$ internal nodes Assuming full access to $T$:
\begin{itemize}
    \item We can decide in polynomial time if $T$ is $C$-minimum in $\mathcal{R}(G,k)$;
    \item If $T$ is $C$-minimum, we can moreover compute in polynomial time the rightmost possible right extremity of the first internal node of a tree in the connected component of $T$ in $\mathcal{R}(G,k)$.
\end{itemize}
\end{lemma}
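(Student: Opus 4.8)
The plan is to drive everything from a single left-to-right greedy sweep that, assuming full access, computes a canonical ``leftmost-packed'' representative of the connected component of $T$ in $\mathcal{R}(G,k)$; both the $C$-minimality test and the rightmost first internal node will be read off this sweep. First I would dispose of the easy case $|in(T)| < k$: by Corollary~\ref{coro:stwml_notCmin}, such a $T$ lies in the component of $T_C$, so $T$ is $C$-minimum if and only if $|in(T)| = in(T_C)$ (the global minimum, by Remark~\ref{rem:orderspanning}), and the rightmost right extremity of its first internal node coincides with that of the component of $T_C$. Thus the substance is the case $|in(T)| = k$, where $T$ may be $C$-minimum in a component that does not contain $T_C$.

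For the sweep, recall that after preprocessing (Lemmas~\ref{lem:notincluded} and~\ref{lem:notincluded2}) we may assume $in(T)$ induces a path $u_1,\dots,u_k$, ordered so that the left- and right-extremity orders agree. The leftmost internal node must remain adjacent to the globally first-ending vertex, so the reachable positions of the first internal node are the vertices $v$ adjacent to that vertex which are reachable from $u_1$ by sliding. I would determine the rightmost such $v$ greedily: a slide of the first internal node rightward to a candidate $v$ is feasible precisely when the coverage it abandons on its right can be recovered by retreating the second internal node, and whether this is possible is exactly what the oracle reports through $v$ being normal and through $\ell'_v(T)$, while the adjacency of $v$ to the new second node is controlled by $r'_v(T)$. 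Having fixed the first internal node at the rightmost feasible position $w$, Lemma~\ref{lem:secondvertex} (applied with $j=0$) projects the component of $T$ in $\mathcal{R}(G,k)$ onto the component of the restriction of $T$ in $\mathcal{R}(H_w,k'_w)$, in which $w$ is now the fixed leftmost internal node; I would then recurse into $H_w$, reading its oracle values to locate the next internal node, and so on until the remaining graph is a clique. Each step reads a constant amount of oracle information and advances along the line, so the sweep runs in polynomial time and outputs both the number of internal nodes of the packed representative and the right extremity of its first internal node.

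The $C$-minimality decision then falls out of the same sweep: $T$ is not $C$-minimum exactly when the packing lets us cover $G$ with fewer than $k$ internal nodes inside the component. Concretely, by Lemma~\ref{lem:HvtoH} a single good vertex encountered during the recursion (a restriction that is not $C$-minimum in its $H_v$) lifts to a reduction of $T$ in $\mathcal{R}(G,k)$; conversely, if every vertex selected by the sweep is normal and the recursion terminates having used exactly $k$ internal nodes, then no reduction is possible and $T$ is $C$-minimum. When $T$ is $C$-minimum, the rightmost right extremity of its first internal node is the value $r(w)$ produced at the first step of the sweep.

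The main obstacle is proving that these local, oracle-based conditions are tight characterizations of global reachability in $\mathcal{R}(G,k)$. In the forward direction Lemma~\ref{lem:HvtoH} already lifts reductions and reconfigurations from $H_v$ to $G$, and Lemma~\ref{lem:secondvertex} projects them the other way, so the delicate point is to show that the greedy slide of the first internal node to its rightmost feasible position is always realizable by edge flips staying within $k$ internal nodes, and that committing to this rightmost first node forecloses no reduction available elsewhere, i.e.\ loses no generality. I expect this exchange/greedy-correctness argument, matching the packing produced by the sweep against an arbitrary tree in the component via Remark~\ref{rem:orderspanning}, to be where essentially all the work lies; the polynomial bound is then immediate, since each of the at most $k$ sweep steps consults the oracle a constant number of times.
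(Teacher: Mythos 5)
Your proposal reproduces the architecture of the section (the easy case via Corollary~\ref{coro:stwml_notCmin}, lifting via Lemma~\ref{lem:HvtoH}, projecting via Lemma~\ref{lem:secondvertex}, constant-time oracle queries), but it defers exactly the step that constitutes the paper's proof. You write that the ``exchange/greedy-correctness argument'' showing the slide of the first internal node to its rightmost feasible position is realizable and loses no generality ``is where essentially all the work lies'' --- and then you do not supply it. The paper's proof is precisely that missing content: a trichotomy on $i' := \ell'_{i_1}(T)$. If $i_1$ is good, or if $N(i_1) \subseteq N(i')$, one reduces the number of internal nodes (Lemma~\ref{lem:notincluded}) and concludes by Corollary~\ref{coro:stwml_notCmin}. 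If $i'$ misses at least two neighbors of $i_1$, the first internal node is \emph{frozen} at $i_1$: this is proved by a shortest-sequence contradiction (just before $i_1$ becomes a leaf it has degree two, its internal neighbor $j$ satisfies $l(j) \ge l(i')$ because the prefix of the sequence projects into $H_{i_1}$, so $j$ also misses two neighbors of $i_1$ and cannot reconnect them). If $i'$ misses exactly one neighbor $y$, an explicit flip sequence makes $i_1$ have degree two (incident to $i'$ and $y$), and then the edge $i_1y$ is flipped to $zy$ where $z$ is the \emph{first canonical vertex}, which Remark~\ref{rem:orderspanning} shows is the unique rightmost candidate. Your sweep never identifies this target $z$, never states the ``misses at most one neighbor'' criterion, and never proves the frozen case; ``feasible precisely when the abandoned coverage can be recovered by retreating the second internal node'' is a heuristic, not a characterization.

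There is also a structural problem with the multi-level sweep itself. First, the oracle values $\ell'_v(T)$, $r'_v(T)$ and good/normal at a candidate $v$ are defined relative to the restriction of $T$ to $H_v$, i.e.\ they \emph{presuppose} that $v$ is realizable as the first internal node of some tree in the component; using them to decide that realizability is circular unless you first prove the slide lemma, which is the deferred crux. Second, the recursion ``into $H_w$, then the next internal node, until a clique'' is not needed for this lemma and duplicates Lemma~\ref{lem:stwml_computingR}: full access already encapsulates the recursive information, so the paper decides C-minimality in the sliding case with a \emph{single} goodness query at $z$, justified in one direction by Lemma~\ref{lem:HvtoH} and in the other by projecting any hypothetical reducing sequence into $H_z$ via Lemma~\ref{lem:secondvertex} (every first internal node along the sequence ends left of $z$ by the canonicity of $z$). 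Your converse claim --- that normality of all swept vertices plus termination at exactly $k$ internal nodes certifies C-minimality --- is asserted, not argued, and lifting deeper-level reductions back to $\mathcal{R}(G,k)$ again requires the realizability statements you postponed. So the proposal is a plausible plan sharing the paper's ingredients, but as a proof it has a genuine gap at its declared center.
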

\begin{proof}
First note that if $T$ does not contain any second internal node, then $G$ is a clique and we can conclude (in particular it is a cograph). So we can assume that $G$ is not a clique.

First note that if $|in(T)| <k$ then $T$ can be transformed into the canonical tree by Corollary~\ref{coro:stwml_notCmin}. So we can assume that $in(T)=\{i_1,\ldots,i_k\}$. If $i_1$ is good, then, by Lemma~\ref{lem:HvtoH}, there is a spanning tree $T'$ in the component of $T$ in $\mathcal{R}(G,k)$ such that $|in(T')|<k$. By Corollary~\ref{coro:stwml_notCmin}, $T$ is not C-minimum.

So we can assume that $i_1$ is normal. Let $i'$ be the vertex $\ell'_{i_1}(T)$ in $H_{i_1}$ and $J$ be the set of internal nodes of a spanning tree containing $i'$ as second vertex in the connected component of the restriction of $T$ in $H_{i_1}$ in $\mathcal{R}(H_{v_i},k')$. By Lemma~\ref{lem:HvtoH}, $T$ contains a spanning tree with internal nodes $J$ in its connected component of $\mathcal{R}(G,k)$ (recall that $i_1$ is an internal node of any tree of $H_{i_1}$). 

If $i'$ is incident to all the neighbors of $i_1$, i.e. $N(i_1) \subseteq N(i')$, we can reduce the number of internal nodes by Lemma~\ref{lem:notincluded} and we can conclude with Corollary~\ref{coro:stwml_notCmin} that $T$ is not C-minimum. 

If $i'$ misses at least two neighbors of $i_1$, we claim that the first internal node of all the spanning trees in the component of $T$ in $\mathcal{R}(G,k)$ is $i_1$ and that $T$ is C-minimum. Assume by contradiction that there exists a spanning tree in the connected component of $T$ in $\mathcal{R}(G,k)$ such that the first internal node is distinct from $i_1$. Let $\mathcal{S}$ be a shortest transformation from $T$ to such a tree $T_b$. By minimality of the sequence, $i_1$ is an internal node of all the trees of $\mathcal{S}$ but $T_b$. Let $T_a$ be the tree before $T_b$ in $\mathcal{S}$. Since after an edge flip $i_1$ becomes a leaf, $i_1$ has degree exactly two in $T_a$. Let $j$ be the second internal node of $T_a$. The third internal node cannot be incident to $i_1$. Indeed, since $i_1$ is the first internal node all along the transformation, the transformation from $T$ to $T_a$ is also a transformation from the restriction of $T$ to the restriction of $T_a$ in $H_{i_1}$. And since $i_1$ is normal, we cannot have $i_1$ the second and the third internal node that is a triangle by Lemma~\ref{lem:notincluded2}. So one of the two neighbors of $i_1$ in $T_a$ is $j$. Since $j$ is the second internal node, $j$ has to see all but at most one neighbor of $i_1$. But since the transformation from $T$ to $T_a$ also holds in $H_{i_1}$, the left extremity of $j$ is at least the one of $i'$ by definition of $i'$ which is $\ell'_{i_1}(T)$. And then, by hypothesis on $i'$, $j$ misses at least two neighbors of $i_1$, a contradiction. So in that case, $i_1$ is normal and the rightmost possible right extremity of the first internal node of a tree in the component of $T$ in $\mathcal{R}(G,k)$ is $i_1$.

Finally assume that $N(i')$ contains $N(i_1)$ but exactly one vertex $y$. Let $T_a$ be a spanning tree of $H_{i_1}$ with second internal node $i'$ in the connected component of the restriction $T'$ of $T$ in $\mathcal{R}(H_{i_1},k')$. Let $\mathcal{S}$ be a shortest sequence from $T'$ to such a tree $T_a$.
By Lemma~\ref{lem:HvtoH}, the transformation from $T'$ to $T_a$ in $H_{i_1}$ can be adapted into a transformation from $T$ to some spanning tree of $G$, also denoted by $T_a$. So, we can assume that the second internal node of $T$ is $i'$ (since $T_a$ and $T$ are in the same connected component of $\mathcal{R}(G,k)$). Now, using edge flips, we can remove all the leaves attached to $i_1$ but the edge $i_1y$, en attach them to $i'$. Note that this is possible since $N(i')$ contains $N(i_1) \setminus \{y\}$. If $i_1$ is incident to at least two internal nodes, then $T$ is not minimal by Lemma~\ref{lem:notincluded2}. So $i_1$ has degree two and then it is incident to $i'$ and $y$. Now let $z$ be the first canonical vertex. By definition of $z$, $z$ is incident to $i_1, y$, and $i'$. Indeed, it is incident to all the vertices that end before it and all the second internal nodes of spanning trees by Remark~\ref{rem:orderspanning}.
So we can remove the edge $i_1y$ to create $zy$ instead. The number of internal nodes does not increase since $i_1$ is now a leaf. The right extremity of the first internal node cannot be larger than the right extremity of $z$ by Remark~\ref{rem:orderspanning}. So, in order to conclude, we simply have to decide whether the spanning tree $T$ is C-minimum or not. Since we have full acccess to $T$, we can decide if $z$ is good. If it is good, the restriction of $T$ in $H_v$ is not C-minimum, and then by Lemma~\ref{lem:HvtoH}, $T$ is not C-minimum. Otherwise, we claim that it is. Assume by contradiction that $T$ is not C-minimum and let $\mathcal{S}$ be a shortest sequence to a spanning $T'$ with fewer less internal nodes. By definition of canonical set, the right extremity of the first internal node is always at the left of the right extremity of $z$. And then by Lemma~\ref{lem:secondvertex}, $\mathcal{S}$ also provides a sequence in $H_z$. And then we have in $H_z$ a sequence to a spanning tree with fewer less internal nodes, a contradiction since $z$ is normal.
\end{proof}

We say that we have \emph{full access to $T$ after $v$} if for every vertex $w \in V(G)$ with $w > v$, we have access in constant time to a table that permits us to know whether $w$ is good or normal. And if $w$ is normal, we also have access to $\ell'_w(T)$ and $r'_w(T)$. Using a proof similar to the one of Lemma~\ref{lem:to_can}, one can prove the following:

\begin{lemma}\label{lem:stwml_computingR}
Let $G$ be an interval graph, $k \in \mathbb{N}$, $v \in V(G)$ and $T$ be a spanning tree of $G$ with at most $k$ internal nodes. 

\begin{itemize}
    \item We can decide in polynomial time if $v$ is good if we have full access to $T$ after $v$. 
    \item If $T$ is $C$-minimum, we can moreover compute $r'_v(T)$ and $\ell'_v(T)$ in polynomial time.
\end{itemize}
\end{lemma}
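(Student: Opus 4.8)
The plan is to replay, inside the auxiliary graph $H_v$, the case analysis from the proof of Lemma~\ref{lem:to_can}, shifted by one position. By definition $v$ is good exactly when the restriction $T'$ of $T$ to $H_v$ is \emph{not} C-minimum in $\mathcal{R}(H_v,k'_v)$, and $v$ is pinned as the leftmost internal node of every spanning tree of $H_v$ because of the artificial degree-one vertex attached to it. Since $v$ can never be removed, the only way to lower the number of internal nodes of $T'$ is to act on the \emph{second} internal node of $H_v$; hence deciding whether $v$ is good, and computing $\ell'_v(T)$ and $r'_v(T)$, reduces to the ``pinned first node, analyse the second node'' engine of Lemma~\ref{lem:to_can}, with $v$ playing the role of the pinned node. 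Every reconfiguration sequence in $H_v$ is $1$-fixed, so Lemma~\ref{lem:secondvertex} applies with $j=1$ and lets me descend from $H_v$ to $H_w$ whenever $w$ is a second internal node, while Lemma~\ref{lem:HvtoH} lifts any reduction found in $H_w$ back to $H_v$.

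The observation that makes the recursion well founded is that every candidate second internal node $w$ of $H_v$ lies in $W$, hence satisfies $w>v$ (it ends after $v$ and starts after $l(v)$), and that for such $w$ the graph $H_w$ is \emph{literally the same} whether it is built inside $G$ or inside $H_v$: a vertex ending after $w$ and starting after $l(w)$ automatically ends after $v$ and starts after $l(v)$, so it already belongs to $V(H_v)$, whereas neither $v$ nor the artificial vertex of $H_v$ ends after $w$. Together with the fact that the restriction of $T$ to $H_w$ agrees with the restriction of $T'$ to $H_w$, this shows that the good/normal status of $w$ and the values $\ell'_w(T),r'_w(T)$ computed in $G$ coincide with those relative to $T'$ in $H_v$. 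These are precisely the data granted by full access to $T$ after $v$, since $w>v$; processing the vertices from right to left then makes the dynamic programming legitimate, the entry at $v$ being computed from entries already available at all $w>v$.

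With this in hand I would carry out the sub-cases of the ``normal first node'' branch of Lemma~\ref{lem:to_can} verbatim, reading its first internal node as the pinned $v$ and its pair $(i_1,\ell'_{i_1})$ as $(w,\ell'_w)$ for the second node $w$ under consideration. Consulting the oracle $\ell'_w(T)$ (which concerns the third internal node, living in $H_w$), the three outcomes transfer: when $v$ together with the leftmost reachable successor of $w$ jointly cover $N[w]$, Lemmas~\ref{lem:notincluded} and~\ref{lem:notincluded2} turn $w$ into a leaf, so $T'$ is not C-minimum and $v$ is good (Corollary~\ref{coro:stwml_notCmin}); when the successor misses at least two neighbours of $w$, the second node is stuck and a boundary of the reachable range is reached; and the delicate ``exactly one missing neighbour'' case is resolved as in Lemma~\ref{lem:to_can}, by sliding the second node onto the appropriate canonical vertex of $H_v$ and querying its goodness through full access after $v$. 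Sliding the second node as far right as possible yields $r'_v(T)$ (its position being bounded by Remark~\ref{rem:orderspanning}), and the mirror-image argument from the left yields $\ell'_v(T)$; if no absorption is ever found, $v$ is normal. All trees with a common set of internal nodes being in one component by Lemma~\ref{lem:stwml_samecomp}, these positions are well defined.

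I expect the main obstacle to be twofold. First, one cannot simply invoke Lemma~\ref{lem:to_can} as a black box on $H_v$, because its very first step queries the goodness of the first node --- here $v$ itself --- which is the quantity being computed; so the reachable second-node range $[\ell'_v(T),r'_v(T)]$ must be reconstructed \emph{directly} by the second-node sliding analysis, and verifying that this range is contiguous and correctly delimited by the ``stuck'' case is the technical heart of the argument. Second, the bookkeeping certifying that the full-access-after-$v$ table really transfers unchanged from $G$ to $H_v$ --- the identity of $H_w$ in the two graphs together with the agreement of the tree restrictions and of their components in $\mathcal{R}(H_w,\cdot)$ --- is what guarantees soundness of the right-to-left recursion, and the ``exactly one missing neighbour'' case, which forces a genuine move of the second node to a canonical vertex of $H_v$ and a recursive goodness query, is the most delicate step of the replayed analysis.
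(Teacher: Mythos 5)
Your overall architecture is the paper's: $v$ is pinned as the first internal node by the artificial vertex, goodness of $v$ is decided by analysing the second internal node $w$ of the restriction, Lemma~\ref{lem:HvtoH} lifts reductions from $H_w$ back to $H_v$, Lemma~\ref{lem:secondvertex} with $1$-fixed sequences pushes sequences down to $H_w$, and your observation that $H_w$ is literally the same graph whether built inside $G$ or inside $H_v$ is exactly what legitimises the right-to-left dynamic programming over the full-access table. However, the case analysis does \emph{not} transfer ``verbatim, shifted by one'', and this is a genuine gap. In Lemma~\ref{lem:to_can} the node being absorbed is the \emph{endpoint} $i_1$ of the induced path of internal nodes, so when the reachable successor $i'$ misses exactly one neighbour $y$ of $i_1$, the node $i_1$ can survive with degree two through its private edge $i_1y$ and then be slid onto the canonical vertex; that is why ``exactly one missing'' is the delicate case there. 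Here the node being absorbed is $w$, the \emph{middle} vertex of the induced path $v$--$w$--$i'$ (with $i' = \ell'_w(T)$ from the oracle): at the critical step $w$ must have degree exactly two, and connectivity forces both of its tree edges to go to $v$ and to the third internal node $j$, which satisfies $l(j) \ge l(i')$, so $N(v) \cup N(j)$ misses whatever $N(v) \cup N(i')$ misses. Hence even a \emph{single} neighbour of $w$ outside $N(v) \cup N(i')$ freezes $w$: $v$ is normal and $r'_v(T) = \ell'_v(T) = w$. Your split (``misses at least two $\Rightarrow$ stuck, exactly one $\Rightarrow$ slide'') would report $r'_v(T) = z$ in situations where the correct value is $w$.

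Symmetrically, your first branch overclaims. When $N(v) \cup N(i') \supseteq N(w)$ you conclude that $w$ can be turned into a leaf and hence $v$ is good, but Lemma~\ref{lem:notincluded2} requires $v,w,i'$ to be pairwise adjacent. The paper's delicate case is precisely coverage with $vi' \notin E$, i.e.\ $v,w,i'$ an induced path: there one can only exchange $w$ for the canonical vertex $z$ of $H_v$ (reattach the leaves of $w$ to $v$ or $i'$, then flip $wv$ to $zv$ or $zi'$), which leaves the number of internal nodes \emph{unchanged}, so non-C-minimality does not follow; whether $v$ is good is then decided by a recursive goodness query on $z$ --- legitimate since $z > v$ --- and if $z$ is normal, then $v$ is normal with $r'_v(T) = z$ (and $\ell'_v(T) = z'$), the bound coming from Remark~\ref{rem:orderspanning} and Lemma~\ref{lem:secondvertex}. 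So the correct trichotomy is: coverage with $vi'$ an edge ($v$ good, via the triangle and Corollary~\ref{coro:stwml_notCmin}); at least one missed neighbour ($v$ normal, frozen at $w$); coverage with $vi'$ a non-edge (slide to $z$ and recurse). Two of your three branches return wrong answers, so the ``replayed engine'' must be redone with the middle-vertex analysis rather than transplanted from Lemma~\ref{lem:to_can}.
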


\begin{proof}
If $H_v$ is a clique, then either $k' \ge 2$ and then the restriction of $T$ to $H_v$ is not C-minimum since there is a spanning tree with only one internal node which can be easily reached. Or $k'=1$ and then $\ell'_v(T)=-\infty$ and $r'_v(T)=+\infty$. So we can assume that $H_v$ is not a clique.
Let $w$ be the second vertex of $in(T)$, which exists since $H_v$ is not a clique.
In what follows, by abuse of notations, we still denote by $T$ the restriction of $T$ to $H_v$.

If $w$ is good, then, by Lemma~\ref{lem:HvtoH}, there is a spanning tree $T'$ in the component of $T$ in $\mathcal{R}(H_v,k')$ such that $|in(T')|<k'$. By Corollary~\ref{coro:stwml_notCmin} we can conclude that $T$ is not C-minimum and then $v$ is good.

So we can assume that $w$ is a normal vertex. Let $i'$ be the vertex $\ell'_w(T)$ in $H_{w}$ and $J$ be the set of internal nodes of a spanning tree containing $i'$ as second vertex in $H_{w}$.  By Lemma~\ref{lem:HvtoH}, $H_v$ contains a spanning tree with internal nodes $\{ v \} \cup J$. 

If $N(i') \cup N(v)$ contains $N(w)$ and  $i'$ and $v$ are adjacent, we can reduce the number of internal nodes by Lemma~\ref{lem:notincluded2} and Corollary~\ref{coro:stwml_notCmin} ensures that $T$ is not C-minimum in $H_v$. So from now on, we will assume that it is not the case. In particular, $v,w,i'$ is an induced path.

If $N(v) \cup N(i')$ misses at least one neighbor of $w$, we claim that all the spanning trees in the component of the restriction of $T$ in $\mathcal{R}(H_v,k')$ contain $w$ as second internal node and that $v$ is normal. 
Indeed, assume by contradiction that there exists a spanning tree in the component of $T$ such that the second internal node is distinct from $w$. Let $\mathcal{S}$ be a shortest transformation to such a tree $T_b$. By minimality of $\mathcal{S}$, the second internal node of all the intermediate trees is $w$. Let $T_a$ be the tree just before $T_b$ in the sequence. Since after an edge flip $w$ becomes a leaf, $w$ has degree exactly two in $T_a$. Moreover, if we denote by $j$ the third internal node, we have $l(j) \ge l(i')$ since the transformation is also a transformation in $H_w$. So $N(v) \cup N(j)$ still miss at least one vertex of $N(w)$ and $v,j$ are not incident. So $w$ has to be incident to $v,j$ (since the tree has to be connected) and to the vertices of $N(w) \setminus (N(v) \cup N(i'))$, a contradicticon since $v$ must have degree two in $T_a$ ($v$ becoming a leaf after an edge flip). So $r'_v(T)=\ell'_v(T)=w$.

Finally assume that $N(v) \cup N(i')$ contains $N(w)$ but $vi'$ is not an edge. Let $T_a$ be a tree of $H_{w}$ with second internal node $i'$ in the connected component of the restriction of $T$ in $\mathcal{R}(H_w,k'_w)$. By Lemma~\ref{lem:HvtoH}, the transformation from $T$ to $T_a$ in $H_{w}$ can be adapted into a transformation from $T$ to $T_a$ in $H_v$. So, from now on, we will assume that the third internal node of $T$ is $i'$. Using edge flips, we can remove all the leaves attached to $w$ and attach them to $v$ or $i'$ instead. After these edge flips $w$ is only adjacent to internal nodes. Note moreover that since the restriction of $T$ is C-minimum in $H_w$, no internal node of $T$ in $H_v$ is incident to $w$ but $i'$ and $v$ otherwise we would have been able to apply Lemma~\ref{lem:notincluded2} to the restriction of $T$ in $H_w$. So $w$ has degree at most two in $T$ and is incident to both $v$ and $i'$.
Now let $z$ be the second internal node of the canonical tree of $H_v$ (the first one being necessarily $v$). We claim that $z$ is incident to $w,v$, and $i'$. Indeed, the second internal node has to be incident to the first one, namely $v$. Moreover, since both the intervals of $z$ and $w$ contain the right extremity of $v$, they are adjacent. Since the right extremity of $z$ is after the right extremity of $w$ by Remark~\ref{rem:orderspanning}, $zi'$ is an edge.
So we can remove the edge $wv$ to create the edge among $zv$ and $zi'$ that keeps the connectivity of the graph (since $vwi'$ was a $P_3$, the deletion of $vw$ disconnects them so one of the two edges reconnects the graph). The number of internal nodes does not increase since $w$ is now a leaf. The right extremity of the first internal node $v$ cannot be larger than the right extremity of $z$ by Lemma~\ref{lem:notincluded2}. So $r'_v(T)=z$ if $T$ is C-minimum. Similarly, if we denote by $z'$ the vertex with the smallest left extremity in $H_v$ (distinct from $v$ and the artificial vertex), then we can obtain a spanning tree whose second vertex is $z'$ and then $\ell'_v(T)=z'$ if $T$ is C-minimum.

So, in order to conclude, we simply have to decide if the spanning is C-minimum. Since we have full acccess to $T$, we can decide if $z$ is good. If it is good, the restriction of $T$ in $H_v$ is not C-minimum, and then by Lemma~\ref{lem:HvtoH}, $T$ is not C-minimum. Otherwise, we claim that it is. Indeed, if there is a transformation from $T$ to $T'$ with fewer less internal nodes in $\mathcal{R}(H_v,k'_v)$, all along the transformation $\mathcal{S}$, the second internal node ends before (or is equal to) $z$ by Remark~\ref{rem:orderspanning}. But then by Lemma~\ref{lem:secondvertex}, the transformation $\mathcal{S}$ can be adapted in $\mathcal{R}(H_v,k'_w)$ that decreases the number of internal nodes, a contradiction since $z$ is normal. 
\end{proof}

Lemmas~\ref{lem:stwml_computingR} ensures that we can, using backward induction on the ordering of the vertices, decide in polynomial time for all the vertices $v$ of the graph if a vertex is good and if not we can compute $r'_v(T)$ and $\ell'_v(T)$. So we have full access to $T$ in polynomial time.

\begin{lemma}\label{lem:cclH_v}
Let $G$ be an interval graph and $v$ be a vertex of $G$. Let $T_1,T_2$ be two spanning trees of $G$ with internal nodes $I_1$ and $I_2$ of $H_v$ such that $v$ is normal for both $T_1$ and $T_2$. Let $i_1:=r'_v(I_1)$ and $i_2:=r'_v(I_2)$. The trees $T_1$ and $T_2$ are in the same connected component of $H_v$ if and only if:
\begin{itemize}
    \item $i_1=i_2$ and,
    \item Any spanning trees with internal nodes $(I_1 \setminus \{v\}) \cup \{i_1\}$ and $(I_2 \setminus \{v\}) \cup \{i_2\}$ are in the same connected component of $\mathcal{R}(H_{i_1},k)$.
\end{itemize}
\end{lemma}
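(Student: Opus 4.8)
The plan is to prove the equivalence by exploiting that in every spanning tree of $H_v$ the artificial degree-one vertex forces $v$ to be the leftmost internal node, so that every reconfiguration sequence inside $H_v$ is $1$-fixed in the sense of Lemma~\ref{lem:secondvertex}. The two listed conditions then capture the only obstructions to reconfigurability: the rightmost reachable position $r'_v$ of the \emph{second} internal node, and the reconfigurability of the part of the tree lying to the right of it, recorded by $\mathcal{R}(H_{i_1},k)$. Two structural facts drive everything. First, $(H_v)_{i_1}=H_{i_1}$: since $i_1\in V(H_v)$ ends and starts after $v$, any vertex ending and starting after $i_1$ also ends and starts after $v$, so the auxiliary graph of $H_v$ on $i_1$ is exactly $H_{i_1}$; in particular $v\notin V(H_{i_1})$. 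Second, by Lemmas~\ref{lem:notincluded} and~\ref{lem:notincluded2} I may assume the internal nodes of each tree induce a path whose left- and right-extremity orderings agree. I would begin by sliding, inside the $H_v$-component, to trees $T_1^\star,T_2^\star$ whose second internal node is exactly $i_1$ (resp. $i_2$); this is possible because $r'_v$ is by definition realised in the component, it does not change the $H_v$-components, and for such normalised trees the induced-path structure forces all further internal nodes strictly to the right of $i_1$, hence into $V(H_{i_1})$, so that $(I_1\setminus\{v\})\cup\{i_1\}$ and $(I_2\setminus\{v\})\cup\{i_2\}$ become genuine internal-node sets of $H_{i_1}$.

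For the forward direction, assume $T_1,T_2$ lie in the same component of $\mathcal{R}(H_v,k)$. Since $r'_v$ is a maximum taken over the whole component it is a component invariant, so $i_1=i_2$. Concatenating a sequence $T_1^\star\rightsquigarrow T_2^\star$ inside this component with a detour through a tree realising $i_1$ as its second internal node yields a $1$-fixed sequence whose rightmost second internal node is $i_1$; applying Lemma~\ref{lem:secondvertex} with $j=1$ and $w=i_1$ projects it to a sequence in $\mathcal{R}(H_{i_1},k)$ between $(in(T_1^\star)\cap V(H_{i_1}))\cup\{i_1\}$ and $(in(T_2^\star)\cap V(H_{i_1}))\cup\{i_1\}$, which by the normalisation are exactly $(I_1\setminus\{v\})\cup\{i_1\}$ and $(I_2\setminus\{v\})\cup\{i_2\}$.

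For the converse, assume $i_1=i_2$ and that the trees with internal nodes $(I_1\setminus\{v\})\cup\{i_1\}$ and $(I_2\setminus\{v\})\cup\{i_2\}$ lie in the same component of $\mathcal{R}(H_{i_1},k)$. The restriction of $T_1^\star$ to $H_{i_1}$ has internal-node set $J=(I_1\setminus\{v\})\cup\{i_1\}$, so I apply Lemma~\ref{lem:HvtoH} with the roles $G\mapsto H_v$ and $v\mapsto i_1$ (legitimate since $(H_v)_{i_1}=H_{i_1}$ and $i_1\in in(T_1^\star)$). The hypothesised $H_{i_1}$-sequence then lifts to a reconfiguration inside $H_v$ ending at a tree with internal nodes $(in(T_1^\star)\setminus J)\cup\big((I_2\setminus\{v\})\cup\{i_2\}\big)$; as $v$ is the only internal node of $T_1^\star$ outside $V(H_{i_1})$, this set equals $\{v\}\cup(I_2\setminus\{v\})\cup\{i_2\}$, a tree in the $H_v$-component of $T_2^\star$. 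Chaining $T_1\rightsquigarrow T_1^\star\rightsquigarrow T_2^\star\rightsquigarrow T_2$ finishes this direction.

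The step I expect to be the main obstacle is the bookkeeping of internal-node sets across the $H_v\leftrightarrow H_{i_1}$ correspondence: one must check that sliding the second internal node to its rightmost admissible position $i_1$ is both possible and reversible (guaranteed because $i_1=r'_v$ is realised inside the component and components are symmetric) and that after this slide every higher internal node lies in $V(H_{i_1})$, so that the sets named in the statement are well defined and coincide with the outputs of Lemmas~\ref{lem:secondvertex} and~\ref{lem:HvtoH}. The degenerate case where $H_v$ is a clique, where $r'_v=+\infty$ and $v$ is the only possible internal node, is immediate, since then both conditions hold vacuously and all spanning trees of $H_v$ already share the internal node $v$.
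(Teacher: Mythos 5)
Your proposal is correct and follows essentially the same route as the paper's proof: the forward direction uses the component-invariance of $r'_v$ to get $i_1=i_2$ and projects the sequence into $\mathcal{R}(H_{i_1},k)$ via Lemma~\ref{lem:secondvertex}, while the converse normalises to trees realising $i_1$ as second internal node and lifts the hypothesised $H_{i_1}$-sequence back to $H_v$ via Lemma~\ref{lem:HvtoH}, exactly as the paper does. Your explicit verifications that $(H_v)_{i_1}=H_{i_1}$ and of the clique degeneracy are left implicit in the paper, and the internal-node bookkeeping you flag as the delicate point is handled with the same (loose) identification there, so nothing essential is missing.
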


\begin{proof}
Let us denote by $g_1$ and $g_2$ the second internal nodes of $T_1$ and $T_2$ respectively.

\noindent ($\Rightarrow$) If $T_1$ can be transformed into $T_2$, then indeed, we must have $i_1=i_2$. And by Lemma~\ref{lem:secondvertex}, the transformation from $T_1$ to $T_2$ is also a transformation from the restriction of $T_1$ into the restriction of $T_2$ in $H_{i_1}$ since the right extremity of the first internal node of all the spanning trees in the sequence is at the left of the one of $i_1$ by definition of $i_1$.

\noindent ($\Leftarrow$) Assume now that both points are satisfied. Let $T_j'$ be a tree with first internal node $i_j$ in the connected component of $T_j$ in $\mathcal{R}(G,k)$ for $1 \le j \le 2$. By Lemma~\ref{lem:secondvertex}, the restriction of $T_j'$ is in the connected component of $T_j$ in $\mathcal{R}(H_{i_j},k)$. Moreover, by assumption there is a transformation from the restriction of $T_1$ to the restriction of $T_2$ in $H_{i_1}$. So by Lemma~\ref{lem:HvtoH}, this transformation can be adapted into a transformation from $T_1$ to $T_2$ in $G$, which completes the proof.
\end{proof}

We now have all the ingredients to prove Theorem~\ref{thm:stwml-interval}.

\begin{proof}[Proof of Theorem~\ref{thm:stwml-interval}]
We can determine in polynomial time if the spanning trees are C-minimum by Lemma~\ref{lem:to_can}. If both of them are not, then both of them can be reconfigured to $T_C$ and there exists a transformation from $T_1$ to $T_2$ by Lemma~\ref{lem:to_can}. If only one of them is, say $T_1$, we can replace $T_1$ by $T_C$ (since they are in the same connected component in the reconfiguration graph). So we can assume that $T_1$ and $T_2$ are C-minimum. And the conclusion follows by Lemma~\ref{lem:cclH_v}.
\end{proof}

\subsection{Still open -- Outerplanar graphs}

There are two types of outerplanar graphs where it is not possible to find a transformation.
\begin{itemize}
    \item $C_4$ plus an edge.
    \item Two paths where we put parallel edges except between the first and the last vertices of the paths. Note that in this case, the construction can be glued together.
\end{itemize}

Questions:
\begin{itemize}
    \item Are they the only obstructions?
    \item Is it always possible to find a transformation when we have a surplus of one?
\end{itemize}

\bibliographystyle{plain}
\bibliography{bibliography}

\end{document}